\documentclass[11pt]{article}
\usepackage{amsmath,amsfonts,amssymb,amsthm,enumerate}
\usepackage{xcolor}
\usepackage{cite}
\usepackage{hyperref}
\usepackage{cleveref}
\usepackage{optidef}
\usepackage{algorithm}
\usepackage{algorithmic}
\usepackage{thm-restate}
\usepackage{graphicx}

\newtheorem{theorem}{Theorem}
\newtheorem{lemma}[theorem]{Lemma}

\newtheorem{claim}{Claim}

\newenvironment{innerproof}
{\proof}
{\endproof}
\newcommand{\pstate}[2]{\langle #1,#2\rangle}

\begin{document}

\title{Flow Games with Public Arcs:\\the Least Core and the Nucleolus}

\author{Tianhang Lu~\and
Han Xiao~ \and
Qizhi Fang}

\date{}
\maketitle
\vspace{-10mm}
\begin{center}
\noindent{\small Ocean University of China, Qingdao 266100, China.}
\end{center}

\footnotetext[1]{{\em E-mail addresses:} \url{tlu@stu.ouc.edu.cn} (T. Lu), \url{hxiao@ouc.edu.cn} (H. Xiao), \url{qfang@ouc.edu.cn} (Q. Fang)}

\hrule
\begin{abstract}
We study flow games with public arcs, an extension of classical cooperative flow games that allows players to use public resources.
In these games, a coalition corresponds to a set of arcs, while certain arcs, called public arcs, can be used freely by any coalition.
The value of a coalition is the maximum flow value achievable using the arcs controlled by the coalition along with the public arcs.
We investigate two solution concepts, the least core and the nucleolus.
Both solution concepts provide fair ways to allocate the value of the grand coalition among individual players.
We provide polynomial-size formulations of the least core of these games.
We also give a polynomial-time algorithm for computing the nucleolus, whether or not the core is empty.
\end{abstract}

\noindent {\small {\em MSC 2020\,:} Primary 91A12; Secondary 05C21, 05C57, 90C27, 90C35.

\noindent {\em Keywords:} Cooperative game, flow game, least core, nucleolus.}

\section{Introduction} \label{intro}
Cooperative game theory considers how to distribute the total profit generated by a group of players among its members.
In this paper, we study flow games with public arcs, which model situations in which a commodity, such as goods or information, can flow through a network.
In classical flow games~\cite{KL82}, players control all arcs in the network and cooperate to allow flow from the source to the sink.
However, this assumption fails to capture the widespread existence of shared resources, public goods, and common infrastructure in real-world networks.
To address this limitation, flow games with public arcs were introduced~\cite{KL82,RM96}.
In this generalized model, both private and public arcs exist: private arcs are owned by individual players, while public arcs can be used freely by any coalition.
This extension provides a more realistic framework for modeling situations where players have access to shared resources or infrastructure.
For example, communication networks can feature both private and shared exchange points for Internet service providers, while supply-chain networks may include both public roads and private tollways. 

The core~\cite{GD59} is one of the most attractive solution concepts in cooperative game theory.
An allocation in the core requires that no subset of players has an incentive to deviate from the grand coalition.
However, the core may be empty in many games~\cite{KE03,XI23,AT89}.
The least core, which was introduced by Shapley and Shubik~\cite{SH66}, is a solution concept that relaxes the strict requirements of the core.
More precisely, it consists of allocations that allow coalitions to receive slightly less than their full potential, accepting deviations within a prescribed bound.
This concept has been studied in~\cite{KE03,BY11,FL15}.
The nucleolus, which was introduced by Schmeidler~\cite{SCH69}, is a key solution concept in cooperative game theory.
It is the unique allocation that lexicographically maximizes the vector of deviations sorted in non-decreasing order.
Moreover, it is well known that the nucleolus is always in the least core.
The nucleolus has been widely discussed for a number of cooperative games, including matching games~\cite{KE03,KO20}, spanning tree games~\cite{ME78,GA80}, graphical games~\cite{GR15,SZ17} and arboricity games~\cite{XI23}.

Kalai and Zemel~\cite{KL82,KZ82} introduced flow games as cooperative games arising from network flow problems.
They showed that the core of flow games is always non-empty~\cite{KZ82}.
Deng et al.~\cite{DF09} show that the nucleolus of simple flow games can be computed in polynomial time.
On the other hand, negative results are known when the arc capacities in the network are not all equal to one.
Fang et al.~\cite{FZ02} show that testing membership in the core of a general-capacity flow game is NP-hard.
Deng et al.~\cite{DF09} show that computing the nucleolus is NP-hard for general-capacity flow games.

When there are public arcs in the network, Reijnierse et al.~\cite{RM96} show that the core of a simple flow game is non-empty if and only if there is a minimum cut without public arcs in the network.
In that case, the allocations corresponding to minimum cuts without public arcs belong to the core.
In addition, they characterized other solution concepts for simple flow games, such as the kernel and the kernel-core.
Potters et al.~\cite{PR06} give a polynomial-time algorithm for the nucleolus in balanced simple flow games.

The main contribution of this work is an efficient characterization of the least core and a polynomial-time algorithm for the nucleolus of flow games with public arcs.
Both results rely on linear programming techniques and their connection to flow games.
These results directly yield efficient algorithms for the related solution concepts.
The organization of the paper is as follows.

In Section \ref{sec:pre}, some notions from cooperative game theory and flow games are reviewed.
Section \ref{sec:ls-core} gives a closed-form least-core characterization for $v(N)=1$ and develops a compact formulation for $v(N)\geq 2$.
Section \ref{sec:nucleolus-balanced} gives a polynomial-time algorithm for the nucleolus.
Section \ref{sec:conclusion} gives concluding remarks and possible directions for future research on flow games.

\section{Preliminaries} \label{sec:pre}

\subsection{Cooperative games and solution concepts} \label{sec:coop-game}

A cooperative game $(N,v)$ consists of a \emph{player set} $N=\{1,2,\ldots,n\}$ and a \emph{characteristic function} $v: 2^N \rightarrow \mathbb{R}$ with $v(\emptyset)=0$, where for each \textit{coalition} $S\subseteq N$, $v(S)$ represents the worth of coalition $S$.
For a vector $x$ indexed by a finite ground set $U$ and a subset $S\subseteq U$, write $x(S)=\sum_{i\in S}x(i)$.
For $S\subseteq U$, denote by $\chi_S$ the characteristic vector of $S$, i.e., $\chi_S(e)=1$ if $e\in S$ and $\chi_S(e)=0$ otherwise.
If $S$ consists of a single element $e$, we write $\chi_e$ for $\chi_{\{e\}}$.
A vector $x\in \mathbb{R}^n$ is called an \textit{allocation} of $(N,v)$ if $x(N)=v(N)$.
The \textit{core} of the game $(N,v)$ is defined by:
\begin{equation}
\label{eq:def-core}
\mathcal{C}(N,v)=\{x\in \mathbb{R}^n\mid x(N)=v(N),~x(S)\geq v(S),~\forall S\subset N\},
\end{equation}

Intuitively, an allocation in the core guarantees that each coalition receives at least what it could gain on its own.
We call the game $(N,v)$ \textit{balanced} if the core is non-empty.
Since the core may be empty in many games, the concept of the least core is introduced.
Given a coalition $S\subseteq N$, we call $x(S)-v(S)$ the \textit{excess} of $S$ and denote it by $e(x,S)$.
This concept can be thought of as a measure of the satisfaction of $S$.
The \textit{least core} is the set of allocations that maximize the minimum excess, which can be computed by the following LP:
\begin{maxi}
  {}{\epsilon}
  {\label{lp:ls-core}\tag{$P_1$}}{}
  \addConstraint{x(S)}{\geq v(S)+\epsilon}{\text{~~for all~~}S\subset N}
  \addConstraint{x(N)}{=v(N).}{}
\end{maxi}
Let $\epsilon_1$ be the optimal value of this LP, and let $P_1(\epsilon_1)$ be the set of allocations $x$ such that $(x,\epsilon_1)$ is feasible for ($P_1$).
Now we turn to the concept of the nucleolus.
An allocation $x$ generates a $(2^n-2)$-dimensional excess vector $\theta(x)=(e(x,S_1),\ldots,e(x,S_{2^n-2}))$ that contains all the excess values (except for the excesses of the grand coalition and the empty set) and whose components are arranged in a non-decreasing order, i.e., $e(x,S_1)\leq \ldots \leq e(x,S_{2^n-2})$.
The \textit{nucleolus}, denoted by $\eta(N,v)$, is defined as the unique allocation $x$ that lexicographically maximizes $\theta(x)$.
The nucleolus can be computed by recursively solving a sequence of linear programs $\{(P_r)\}_{r\geq 1}$~\cite{MS79}.
($P_1$) is the least core LP defined above.
For a polytope $P\subset \mathbb{R}^n$, let
$$
\text{Fix}(P)=\{S\subseteq N\mid x(S)=y(S),~\forall x,y\in P\}
$$
denote the set of coalitions fixed for $P$.
In general, given $\epsilon_{r-1}$, we define LP ($P_r$) for $r\geq 2$:
\begin{maxi}
  {}{\epsilon}
  {\label{lp:pr}\tag{$P_r$}}{}
  \addConstraint{x(S)}{\geq v(S)+\epsilon}{\text{~~for all~~}S\notin \text{Fix}(P_{r-1}(\epsilon_{r-1}))}
  \addConstraint{x}{\in P_{r-1}(\epsilon_{r-1}).}{}
\end{maxi}
Let $\epsilon_r$ be the optimal value of ($P_r$), and let $P_r(\epsilon_r)$ be the set of allocations $x$ such that $(x,\epsilon_r)$ is feasible for ($P_r$).
We continue for $r=2,\ldots,n$, or until $P_r(\epsilon_r)$ is a singleton.
Note that the dimension of ($P_r$) decreases by at least one at each step, so it takes at most $n$ steps for $P_r(\epsilon_r)$ to become a singleton.

\subsection{Simple flow games}

Consider a directed network $D=(V,E)$ with vertex set $V$ and arc set $E$.
Let $s,t\in V$ be the source and sink vertices of $D$, and assume that each arc in $D$ has a unit flow capacity.
We allow $D$ to have parallel arcs, but no loops.
For any non-empty set $S\subseteq E$, the induced subgraph on $S$, denoted by $G[S]$, is a subgraph of $D$ with arcs in $S$.
Simple flow games study how to allocate the maximum flow value of a simple network among the players.
More precisely, $(N,v)$ is the \textit{simple flow game} defined on a simple network $D=(V,E)$.
The player set of $(N,v)$ consists of a subset of arcs $N\subseteq E$.
An arc in $N$ is called \textit{private}.
Otherwise, it is called \textit{public}.
Let $N^0=E\setminus N$ denote the set of public arcs.
For any coalition $S\subseteq N$,
the characteristic function $v:2^N \rightarrow \mathbb{R}_+$ is defined as the maximum flow value in $G[S\cup N^0]$.
For the function $v$ to be well-defined, i.e., $v(\emptyset)=0$, we assume that $N$ is an $s$-$t$ cut of $D$ in the remainder of this paper.

\section{A compact formulation for the least core}\label{sec:ls-core}

In this section, we provide a compact characterization of the least core.
We begin with the case $v(N)=1$, where minimum cuts constrained to $N$ yield a compact formulation of the least core.
We then turn to $v(N)\geq 2$, where universal flows and the partition $(M,A,R)$ lead to a compact formulation of the least core.
The omitted proofs are given in Appendix~\ref{app:prove}.

\subsection{A warm-up: the case where $v(N)=1$}\label{sec:least-core-v=1}

Assume that $v(N)=1$.
A cut is \textit{constrained} to $N$ if it is a subset of $N$.
A minimum cut constrained to $N$ is a cut constrained to $N$ of minimum size.
Let $k$ be the minimum size of a cut constrained to $N$.
Because $N$ itself is an $s$-$t$ cut, $k$ is finite and can be computed by assigning unit capacity to every private arc and capacity $|N|+1$ to every public arc and then solving a minimum-cut problem.
Let $\mathcal C^*=\{C\subseteq N\mid C\text{ is an }s\text{-}t\text{ cut and }|C|=k\}$ be the set of minimum cuts constrained to $N$.
Let $\mathcal P$ be the set of all $s$-$t$ paths in $D$.

\begin{lemma}\label{lem:k-cut}
Let $a\in\mathbb R_+^N$.
If $a(P)\geq1$ for every $s$-$t$ path $P$, then $a(N)\geq k$.
Moreover,
$$
\left\{a\in\mathbb R_+^N~\middle|~a(P)\geq 1~\forall P\in\mathcal{P},~a(N)=k\right\}=\operatorname{conv}\{\chi_C\mid C\in\mathcal C^*\}.
$$
\end{lemma}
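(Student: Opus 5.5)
The plan is to recast the statement as one about an LP and its integrality. Extend $a$ to all of $E$ by setting $a(e)=0$ for public arcs $e\in N^0$; then $a(P)$ is the sum over private arcs of $P$. Consider
$$\min\{a(N)\mid a\in\mathbb R_+^N,\ a(P)\ge 1~\forall P\in\mathcal P\}.$$
This is the fractional minimum-cut LP (the path-covering LP) on the network where private arcs have cost $1$ and public arcs have cost $0$. Its dual is the max-flow LP in which private arcs have capacity $1$ and public arcs have infinite capacity. Concretely, the dual is $\max\{\sum_P y_P \mid \sum_{P\ni e} y_P\le 1~\forall e\in N,\ y\ge 0\}$.

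\textbf{First claim.} I will show the LP optimum equals $k$, which gives $a(N)\ge k$ for every feasible $a$.
\begin{itemize}
\item \emph{Upper bound.} Any $C\in\mathcal C^*$ meets every $s$-$t$ path, so $\chi_C$ is feasible and the optimum is at most $k$.
\item \emph{Lower bound.} Use max-flow/min-cut on $D$ with capacity $1$ on private arcs and capacity $|N|+1$ on public arcs, exactly as in the computation of $k$ above. The minimum cut value is $k$: it is at most $k<|N|+1$ because $N$ is a cut, so no minimum cut contains a public arc, and hence every minimum cut is constrained to $N$ and has size at least $k$. Therefore an integral flow of value $k$ exists. Its path decomposition uses each private arc at most once. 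Summing the constraints $a(P)\ge1$ over these $k$ paths gives $a(N)\ge\sum_P a(P)\ge k$.
\end{itemize}

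\textbf{Second claim.} I will show the face $F$ of optimal solutions equals $\operatorname{conv}\{\chi_C\mid C\in\mathcal C^*\}$.
\begin{itemize}
\item \emph{Inclusion $\supseteq$.} Each $\chi_C$ with $C\in\mathcal C^*$ is feasible with value $k$, and $F$ is convex.
\item \emph{Inclusion $\subseteq$.} It suffices to show that every vertex of $F$ is the characteristic vector of a minimum constrained cut. Since $F$ is a face of the polyhedron $Q=\{a\ge0: a(P)\ge1\}$, and $F$ is bounded (it lies in $\{a\ge 0, a(N)=k\}$), $F$ is a polytope whose vertices are vertices of $Q$.
\end{itemize}

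The key step, and the main obstacle, is showing that $Q$ restricted to this face is integral without invoking total unimodularity of path matrices, which fails in general. I would instead argue directly with a potential/threshold rounding. Given $a\in F$, define $d(u)$ as the $a$-shortest-path distance from $s$ to $u$, with public arcs having length $0$. Then $d(t)\ge1$. For $\theta\in(0,1]$, set $U_\theta=\{u: d(u)<\theta\}$, and let $C_\theta$ be the set of arcs leaving $U_\theta$. Each $C_\theta$ is an $s$-$t$ cut. Every public arc $(u,w)$ satisfies $d(w)\le d(u)$, so it never leaves $U_\theta$, and hence $C_\theta\subseteq N$. Standard layering gives
$$\int_0^1 \chi_{C_\theta}\,d\theta\le a$$
componentwise, because an arc $(u,w)$ lies in $C_\theta$ only for $\theta\in(d(u),d(w)]$, an interval of length at most $a(u,w)$. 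Since each $|C_\theta|\ge k$ and $a(N)=k$, equality must hold throughout. This forces $|C_\theta|=k$ for almost every $\theta$ and $\int_0^1\chi_{C_\theta}\,d\theta=a$. The function $\theta\mapsto C_\theta$ is piecewise constant with finitely many pieces, so $a$ is a convex combination of characteristic vectors of cuts in $\mathcal C^*$.

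The delicate point is the componentwise equality. It needs that no arc outside the layered cut carries slack weight; this follows from the total-mass equality combined with the componentwise inequality.
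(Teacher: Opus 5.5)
Your proof is correct. The inclusion $\subseteq$ is the paper's argument in continuous form. For $\theta\in(r_i,r_{i+1}]$, where $0=r_0<r_1<\dots<r_q<1$ are the distinct distance values in $[0,1)$ and $r_{q+1}=1$, your $C_\theta$ is exactly the paper's $C_i=\{(u,v)\mid d(u)\le r_i<d(v)\}$. Your $\int_0^1\chi_{C_\theta}\,d\theta$ is the paper's $\sum_i(r_{i+1}-r_i)\chi_{C_i}$, and the equality-forcing step is the same.

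Where you differ is the first claim. You obtain $a(N)\ge k$ from max-flow/min-cut: an integral flow of value $k$, with capacity $|N|+1$ on public arcs, decomposed into $k$ paths that are disjoint on private arcs. The paper instead gets it from the same layering inequality, since
\[
a(N)\ \ge\ \int_0^1 |C_\theta|\,d\theta\ \ge\ k
\]
holds for every feasible $a$, not only those with $a(N)=k$. Your route is valid and makes the LP-duality picture explicit. However, it imports max-flow integrality and flow decomposition, whereas the paper's single rounding argument proves both parts at once and is self-contained; your own threshold argument already makes the max-flow step redundant.

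Similarly, your reduction to vertices of $F$, and the remark about total unimodularity, are unnecessary. The rounding decomposes every $a\in F$ directly, not just the vertices.
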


\begin{proof}
Assign length $a(e)$ to each private arc and length $0$ to each public arc.
Let $d(v)$ denote the length of the shortest path from $s$ to $v$, where $d(v)=+\infty$ if $v$ is not reachable from $s$.
Since every $s$-$t$ path has length at least $1$, we have $d(t)\ge 1$.
Let $0=r_0<r_1<\cdots<r_q<1$ be all distinct values of $d(v)$ in $[0,1)$, and set $r_{q+1}=1$.
For each $i=0,\ldots,q$, define $C_i=\{(u,v)\in E\mid d(u)\le r_i<d(v)\}$.
Since $d(s)=0\le r_i<d(t)$, every $s$-$t$ path contains an arc in $C_i$.
Hence, $C_i$ is an $s$-$t$ cut.
Moreover, $C_i\subseteq N$.
Indeed, if $(u,v)$ is a public arc, then it has length $0$, and hence $d(v)\leq d(u)$, which contradicts $d(u)<d(v)$.

We claim that
\begin{equation}\label{ineq:int-coord}
\sum_{i=0}^q (r_{i+1}-r_i)\chi_{C_i}\le a
\end{equation}
coordinatewise.
Let $e=(u,v)\in N$.
The coefficient of $e$ on the left-hand side is
$$
\sum_{i:\,d(u)\le r_i<d(v)}(r_{i+1}-r_i).
$$
We distinguish the following cases.
If $d(u)=+\infty$, then there is no index $i$ satisfying $d(u)\le r_i$, so the coefficient of $e$ is $0$, which is at most $a(e)$.
If $d(v)\le d(u)$, then again the index set is empty, so the coefficient of $e$ is $0\le a(e)$.
Finally, suppose that $d(u)<d(v)$.
If $d(u)\ge 1$, then the index set is empty because $r_i<1$ for every $i\le q$, and the coefficient is again $0\le a(e)$.
We therefore assume that $d(u)<1$.
Since $d(u)$ is a finite distance value in $[0,1)$, it is equal to some $r_j$.
The sum then telescopes and is at most $\min\{d(v),1\}-d(u)\le d(v)-d(u)$.
Since the arc $e=(u,v)$ gives an $s$-$v$ walk of length $d(u)+a(e)$, we have $d(v)\le d(u)+a(e)$, and therefore $d(v)-d(u)\le a(e)$.

Thus, in every case, the coefficient of $e$ is at most $a(e)$, proving the inequality~\eqref{ineq:int-coord}.
Consequently,
$$
a(N)\geq\sum_{i=0}^q (r_{i+1}-r_i)|C_i|\geq k\sum_{i=0}^q (r_{i+1}-r_i)=k.
$$
Now suppose that $a(N)=k$.
Then both inequalities above hold with equality.
Hence,
\begin{equation}\label{eq:convex-hull}
a=\sum_{i=0}^q (r_{i+1}-r_i)\chi_{C_i}.
\end{equation}
Furthermore, since each coefficient $r_{i+1}-r_i$ is positive and each $|C_i|\ge k$, equality in
$$
\sum_{i=0}^q (r_{i+1}-r_i)|C_i|\geq k\sum_{i=0}^q (r_{i+1}-r_i)
$$
implies that $|C_i|=k$ for $i=0,\ldots,q$.
Thus, $C_i\in\mathcal C^*$ for every $i$.
Finally, the coefficients $r_{i+1}-r_i$ are nonnegative and satisfy $\sum_{i=0}^q(r_{i+1}-r_i)=1$.
Therefore, the equality \eqref{eq:convex-hull} expresses $a$ as a convex combination of the characteristic vectors of members of $\mathcal C^*$.

Conversely, suppose that $x=\sum_{i=1}^m\lambda_i\chi_{C_i}$ where $\lambda_i\geq0$, $\sum_{i=1}^m\lambda_i=1$, and $C_i\in\mathcal C^*$ for every $i$.
Then $x\in\mathbb R_+^N$ and
$$
x(N)=\sum_{i=1}^m\lambda_i|C_i|=k\sum_{i=1}^m\lambda_i=k.
$$
Every $s$-$t$ path $P$ meets every $s$-$t$ cut $C_i$, so $\chi_{C_i}(P)\geq 1$.
Therefore,
$$
x(P)=\sum_{i=1}^m\lambda_i\chi_{C_i}(P)\geq\sum_{i=1}^m\lambda_i=1.
$$
This proves the reverse inclusion.
\end{proof}

We then obtain the following representation of the least core.

\begin{lemma}\label{lem:least-core-v=1}
$\epsilon_1=\frac{1}{k}-1$ and $P_1(\epsilon_1)=\frac{1}{k}\operatorname{conv}\{\chi_C\mid C\in\mathcal C^*\}$
\end{lemma}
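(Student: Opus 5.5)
The plan is to reduce $(P_1)$ to the path-covering polytope of Lemma~\ref{lem:k-cut} via the substitution $a=x-\epsilon\chi_N$. Since $v(N)=1$, every coalition value satisfies $v(S)\in\{0,1\}$. Moreover, $v(S)=1$ exactly when $S\cup N^0$ contains an $s$-$t$ path $P$; in that case the private part $P\cap N\subseteq S$ is itself a coalition of value $1$. So the constraints of $(P_1)$ split into two families:
\begin{itemize}
\item $x(e)\ge\epsilon$ for every single arc $e\in N$;
\item $x(P\cap N)\ge 1+\epsilon$ for every $P\in\mathcal P$ (up to the boundary case below).
\end{itemize}
Indeed, for a general $S$ with $v(S)=0$, nonnegativity of the excess-shifted vector implies $x(S)\ge |S|\epsilon$. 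This is where the sign of $\epsilon$ matters, and it is the main obstacle I flag below.

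\textbf{Lower bound on $\epsilon_1$.} Take $C\in\mathcal C^*$ and set $x=\frac1k\chi_C$, $\epsilon=\frac1k-1\le 0$. For any $S\subset N$:
\begin{itemize}
\item If $v(S)=1$, then $S$ meets every cut, so $x(S)\ge \frac1k\ge 1+\epsilon$.
\item If $v(S)=0$, then $x(S)\ge 0\ge\epsilon$.
\end{itemize}
The same two checks work for any convex combination of such vectors, giving $\frac1k\operatorname{conv}\{\chi_C\mid C\in\mathcal C^*\}\subseteq P_1(\frac1k-1)$ once optimality is shown. One boundary case needs attention: when $k=1$ and $S=N\setminus\{e\}$ has value $1$, the vector $x=\chi_C$ with $C=\{e\}$ a single-arc cut gives $x(S)=0$. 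But then $S$ misses the cut $C$, so $v(S)=0$ after all, and there is no contradiction.

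\textbf{Upper bound and the converse inclusion.} Let $(x,\epsilon)$ be feasible with $\epsilon\ge\frac1k-1$. Note $\epsilon\le 0$ always, because $\epsilon>0$ would force $x(S)>0$ for all proper $S$; combined with $x(N)=1$, taking $S$ to be a complement singleton yields a contradiction (when $|N|\ge2$; the case $|N|=1$ gives $k=1$ and is trivial). Define $a=x-\epsilon\chi_N\in\mathbb R_+^N$, using the singleton constraints. For each path $P$ with $P\cap N\ne N$ we get
\[
a(P)=x(P\cap N)-\epsilon|P\cap N|\ge 1+\epsilon-\epsilon|P\cap N|\ge 1+\epsilon,
\]
which is weaker than needed. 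So I would instead rescale: set $b=a/(1+\epsilon)$, which is valid whenever $1+\epsilon>0$.

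\textbf{Main obstacle and first attempt.} The difficulty is that $a(P)\ge 1+\epsilon$ cannot be upgraded to $a(P)\ge 1$ using only the path constraints, because the term $-\epsilon|P\cap N|$ has the wrong sign for that purpose. My first attempt is to use the complement constraints. For a cut $C\in\mathcal C^*$ there is a set $S=N\setminus C$ with $v(S)=0$, giving $x(N\setminus C)\ge\epsilon$ and hence $x(C)\le 1-\epsilon$. This is in the wrong direction too. So I would switch to LP duality. The dual of $(P_1)$ restricted to the singleton and path constraints asks for a fractional packing: weights $y_P$ on paths and $z_e$ on arcs with $\sum_P y_P\chi_{P\cap N}+z=\mu\chi_N$ and $\sum_P y_P+\sum_e z_e=1$.
\begin{itemize}
\item Pick a minimum cut $C\in\mathcal C^*$. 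By Menger, applied with public arcs given infinite capacity, there are $k$ paths whose private parts are pairwise disjoint, each meeting $C$ in exactly one arc.
\item Combine uniform weights on these $k$ paths with singleton weights on the remaining private arcs, chosen so that every arc is covered equally.
\item Sum the resulting constraints: $\sum_P y_P(1+\epsilon)+\sum_e z_e\epsilon\le \mu\,x(N)=\mu$.
\end{itemize}
This yields $\epsilon\le\frac1k-1$. Tightness of every used constraint then forces $x(C)=1$-type equalities. Finally, I would apply Lemma~\ref{lem:k-cut} to $a=kx$, with $a(P)\ge1$ and $a(N)=k$, to conclude $x\in\frac1k\operatorname{conv}\{\chi_C\mid C\in\mathcal C^*\}$.
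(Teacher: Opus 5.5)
\textbf{There is a genuine gap.} Your feasibility check for $\frac1k\chi_C$ and its convex combinations at $\epsilon=\frac1k-1$ is correct and matches the paper's lower bound. The upper bound fails as written.

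You cover the $m$ private arcs that lie on none of your $k$ Menger paths by singleton constraints and demand equal coverage. That forces every weight to be $\frac1{k+m}$. If those singletons have value $0$, summing gives only $k(1+\epsilon)+m\epsilon\le 1$, i.e.\ $\epsilon\le\frac{1-k}{k+m}$. This is strictly weaker than $\frac1k-1$ whenever $k\ge2$ and $m\ge1$.

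For a concrete case, take a public arc $(s,a)$, two parallel private arcs $e_1,e_2$ from $a$ to $t$, and a private dead-end arc $e_3=(s,u)$. Then $k=2$ and $\epsilon_1=-\frac12$. Yet $x=(\frac23,\frac23,-\frac13)$ satisfies every singleton and path constraint with $\epsilon=-\frac13$. It violates only the constraints for $\{e_1,e_3\}$ and $\{e_2,e_3\}$, which your relaxation discards.

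This step can be repaired by dropping singletons altogether. For $k\ge2$, distribute the leftover arcs among the path coalitions $P_i\cap N$ so that $N$ is partitioned into $k$ proper coalitions, each of value $1$. Adding their constraints gives $1=x(N)\ge k(1+\epsilon)$. For $k=1$, the empty-coalition constraint gives $\epsilon\le 0$.

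The more serious gap is in the reverse inclusion, where the missing idea is nonnegativity of least-core allocations. You apply Lemma~\ref{lem:k-cut} to $a=kx$, but that lemma requires $a\in\mathbb R_+^N$. Nothing in your proposal shows that least-core allocations are nonnegative. The singleton constraints only give $x(e)\ge\epsilon_1$, which is negative when $k\ge2$, and the vague appeal to ``tightness of every used constraint'' does not supply it.

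The paper first proves $P_1(\epsilon_1)\subseteq\mathbb R_+^N$ by a perturbation argument. If $x(e)<0$, every coalition avoiding $e$ has strict slack. Shifting a small $\delta$ onto $e$ and removing $\delta/(|N|-1)$ from each other player then makes every proper-coalition constraint strict, contradicting the optimality of $\epsilon_1$.

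Once $x\ge 0$ is known, constraints of value-$0$ coalitions hold automatically, and feasibility is equivalent to $x(P)\ge 1+\epsilon$ for all $P\in\mathcal P$. The right normalization is then the scaling $a=x/(1+\epsilon_1)$, not the shift $x-\epsilon\chi_N$. Lemma~\ref{lem:k-cut} yields both $\frac1{1+\epsilon_1}=a(N)\ge k$ and $kx\in\operatorname{conv}\{\chi_C\mid C\in\mathcal C^*\}$.

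Finally, $\epsilon_1\le 0$ must come from the empty-coalition constraint. Your complement-singleton argument is invalid: if $N$ consists of two private arcs forming the only $s$-$t$ path, then $\epsilon=\frac12$ with $x=(\frac12,\frac12)$ satisfies every nonempty proper-coalition constraint.
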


\begin{proof}
We first note that $\epsilon_1\le0$, by the constraint for the empty coalition.
We also have $P_1(\epsilon_1)\subseteq\mathbb R_+^N$.
Indeed, let $x\in P_1(\epsilon_1)$, and suppose that $x(e)<0$ for some $e\in N$.
The case $|N|=1$ is immediate from $x(N)=1$, so assume $|N|\ge2$.
The constraint for $\{e\}$ implies $\epsilon_1<0$.
Moreover, every $S\subseteq N\setminus\{e\}$ satisfies $x(S)-v(S)>\epsilon_1$.
If $S\neq N\setminus\{e\}$, this follows from
$$
x(S)-v(S)=x(S\cup\{e\})-v(S\cup\{e\})-x(e)+v(S\cup\{e\})-v(S)>\epsilon_1.
$$
If $S=N\setminus\{e\}$, then $x(S)-v(S)=1-x(e)-v(S)\ge-x(e)>0>\epsilon_1$.
Hence, for sufficiently small $\delta>0$, the vector $x'$ defined by $x'(e)=x(e)+\delta$ and $x'(f)=x(f)-\delta/(|N|-1)$ for $f\neq e$ satisfies $x'(N)=1$ and has every proper-coalition constraint strict.
This contradicts the optimality of $\epsilon_1$.

Since $v(N)=1$, every coalition has value $0$ or $1$.
Therefore, for $x\in\mathbb R_+^N$ with $x(N)=1$ and $\epsilon\le0$, $(x,\epsilon)$ is feasible for $(P_1)$ if and only if $x(P)\ge1+\epsilon$ for every $s$-$t$ path $P$.
Indeed, a coalition of value $0$ satisfies its constraint automatically.
If $v(S)=1$, then $D[S\cup N^0]$ contains an $s$-$t$ path $P$ with $P\cap N\subseteq S$, and hence $x(S)\ge x(P)$.
Conversely, for an $s$-$t$ path $P$, if $P\cap N\neq N$, apply the coalition constraint to $P\cap N$.
If $P\cap N=N$, then $x(P)=x(N)=1\ge1+\epsilon$.

Now let $C\in\mathcal C^*$.
Every $s$-$t$ path meets $C$, so $\frac1k\chi_C(P)\ge\frac1k$.
Hence $\frac1k\chi_C\in P_1(\frac1k-1)$, and therefore $\epsilon_1\ge\frac1k-1$.
Conversely, let $x\in P_1(\epsilon_1)$.
Since $1+\epsilon_1\ge1/k>0$, the vector $a=x/(1+\epsilon_1)$ satisfies $a(P)\ge1$ for every $s$-$t$ path.
Lemma~\ref{lem:k-cut} yields $\frac1{1+\epsilon_1}=a(N)\ge k$, so $\epsilon_1\le\frac1k-1$.
Consequently, $\epsilon_1=\frac1k-1$.

Finally, if $x\in P_1(\epsilon_1)$, then $kx(P)\ge1$ for every $s$-$t$ path and $x(N)=1$.
By Lemma~\ref{lem:k-cut}, $x\in\frac{1}{k}\operatorname{conv}\{\chi_C\mid C\in\mathcal C^*\}$.
Conversely, suppose that $x=\frac{1}{k}\sum_i\lambda_i\chi_{C_i}$, where $\lambda_i\ge0$, $\sum_i\lambda_i=1$, and $C_i\in\mathcal C^*$.
Then $x(N)=1$, while every $s$-$t$ path $P$ satisfies
$$
kx(P)=\sum_i\lambda_i\chi_{C_i}(P)\ge\sum_i\lambda_i=1.
$$
Thus $x(P)\ge1/k=1+\epsilon_1$, and hence $x\in P_1(\epsilon_1)$.
Therefore, we show that $P_1(\epsilon_1)=\frac{1}{k}\operatorname{conv}\{\chi_C\mid C\in\mathcal C^*\}$.
\end{proof}

We next derive another representation of the least core.
For any $e\in E$ and $\pi\in\mathbb R^V$, define $\text{diff}(\pi,e)=\pi(\partial^-e)-\pi(\partial^+e)$ as the \textit{potential difference} of $\pi$ on $e$.
Here $\partial^-e$ is the tail of $e$, and $\partial^+e$ is the head of $e$.
The following theorem states that the least core can be characterized by a linear number of constraints.

\begin{theorem}\label{thm:ls-core-des-v=1}
$$
P_1(\epsilon_1)=\left\{x\in\mathbb R^N~\middle|~
\begin{aligned}
&x(N)=1,~x(e)\geq 0~\forall e\in N,\\
&\pi(s)=1, \pi(t)=0, 0\leq \pi(v)\leq 1,~\forall v\in V,\\
&\text{diff}(\pi,e)\leq 0,~\forall e\in N^0,\\
&\text{diff}(\pi,e)\leq kx(e),~\forall e\in N
\end{aligned}
\right\}.
$$
\end{theorem}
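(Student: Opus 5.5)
The plan is to read the right-hand side as the projection onto $x$ of a polyhedron in the variables $(x,\pi)$, and to show that this projection equals $\frac{1}{k}\operatorname{conv}\{\chi_C\mid C\in\mathcal C^*\}$. Lemma~\ref{lem:least-core-v=1} then gives the theorem. The key observation is that $\pi$ acts as a dual potential certifying that every $s$-$t$ path has $kx$-length at least $1$.

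\emph{Inclusion $\subseteq$.} Let $(x,\pi)$ satisfy the constraints, and let $P$ be an $s$-$t$ path. Summing the potential differences along $P$ telescopes:
$$
\sum_{e\in P}\text{diff}(\pi,e)=\pi(s)-\pi(t)=1.
$$
On public arcs each term is at most $0$, and on private arcs each term is at most $kx(e)$. Hence $kx(P)\ge 1$. Set $a=kx$. Then $a\in\mathbb R_+^N$, $a(P)\ge1$ for every $P\in\mathcal P$, and $a(N)=k$. By Lemma~\ref{lem:k-cut}, $a\in\operatorname{conv}\{\chi_C\mid C\in\mathcal C^*\}$, so $x\in P_1(\epsilon_1)$ by Lemma~\ref{lem:least-core-v=1}.

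\emph{Inclusion $\supseteq$.} Take $x\in P_1(\epsilon_1)$. By Lemma~\ref{lem:least-core-v=1}, $x(N)=1$, $x\ge0$, and $kx(P)\ge1$ for every path. We must construct $\pi$. Mirroring the proof of Lemma~\ref{lem:k-cut}, give each private arc $e$ length $kx(e)$ and each public arc length $0$. Let $d(v)$ be the shortest-path distance from $s$, and set
$$
\pi(v)=1-\min\{d(v),1\}.
$$
Then $\pi(s)=1$, $\pi(t)=0$ because $d(t)\ge1$, and $0\le\pi\le1$. Unreachable vertices get $\pi=0$.

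It remains to check the arc inequalities. For an arc $e=(u,v)$ we have $d(v)\le d(u)+\ell(e)$, where $\ell(e)$ is the length of $e$. The map $y\mapsto\min\{y,1\}$ is monotone and $1$-Lipschitz on the relevant range, so
$$
\text{diff}(\pi,e)=\min\{d(v),1\}-\min\{d(u),1\}\le\ell(e).
$$
This gives $\text{diff}(\pi,e)\le 0$ on public arcs and $\text{diff}(\pi,e)\le kx(e)$ on private arcs.

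The main obstacle is the case $d(u)=+\infty$. Then $\pi(u)=0$ and $\text{diff}(\pi,e)=-\pi(v)\le0$, which is fine. The truncation at $1$ is exactly what keeps $\pi$ within $[0,1]$ while preserving the inequalities.

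Alternatively, one could take $\pi=1-\chi$ of the $s$-side of each cut $C_i$ and average with weights $\lambda_i$. But that requires handling arcs directed backward across a cut on the public side, so the shortest-path construction seems cleaner.
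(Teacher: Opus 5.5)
Your proposal is correct and follows essentially the same route as the paper. Both reduce via Lemma~\ref{lem:least-core-v=1} to $\frac1k\operatorname{conv}\{\chi_C\mid C\in\mathcal C^*\}$, use telescoping along $s$-$t$ paths together with Lemma~\ref{lem:k-cut} for one inclusion, and use the truncated shortest-path potential $\pi(v)=1-\min\{d(v),1\}$ for the other; your uniform monotone/$1$-Lipschitz argument simply replaces the paper's short case analysis on public and private arcs.
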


\begin{proof}
Let $Q$ denote the set on the right-hand side of the theorem.
By Lemma~\ref{lem:least-core-v=1}, it suffices to prove that $Q=\frac1k\operatorname{conv}\{\chi_C\mid C\in\mathcal C^*\}$.
Let $x\in Q$.
There exists $\pi$ such that $(x,\pi)$ satisfies the constraints in $Q$.
Let $P\in \mathcal{P}$.
The potential differences telescope along $P$, so $\sum_{e\in P}\text{diff}(\pi,e)=\pi(s)-\pi(t)=1$.
Thus we have
$$
kx(P)\geq\sum_{e\in P\cap N}\text{diff}(\pi,e)=1-\sum_{e\in P\cap N^0}\text{diff}(\pi,e)\geq 1.
$$
Lemma~\ref{lem:k-cut} therefore gives $x\in \frac1k\operatorname{conv}\{\chi_C\mid C\in\mathcal C^*\}$.

Conversely, let $x\in \frac1k\operatorname{conv}\{\chi_C\mid C\in\mathcal C^*\}$.
By Lemma~\ref{lem:k-cut}, $x(N)=1$ and $x(P)\geq \frac1k$ for every $s$-$t$ path $P$.
Assign length $kx(e)$ to each private arc and length $0$ to each public arc.
Let $d(v)$ be the length of path from $s$ to $v$, with $d(v)=+\infty$ if $v$ is unreachable from $s$.
Since every $s$-$t$ path has length at least $1$, we have $d(t)\geq1$.
Define $\pi(v)=1-\min\{d(v),1\}$, $\forall v\in V$.
Then $\pi\in[0,1]^V$, $\pi(s)=1$, and $\pi(t)=0$.

Let $e=(u,v)\in N^0$.
If $d(u)<+\infty$, then $d(v)\leq d(u)$ because $e$ has length zero, and hence $\pi(u)\leq\pi(v)$.
If $d(u)=+\infty$, then $\pi(u)=0\leq\pi(v)$.
Thus, $\text{diff}(\pi,e)\leq0$ for every $e\in N^0$.

Finally, it remains to show that $\text{diff}(\pi,e)\leq kx(e)$, $\forall e\in N$.
Fix $e=(u,v)\in N$.
The inequality is immediate if $\text{diff}(\pi,e)=0$.
Suppose that $\text{diff}(\pi,e)>0$, or equivalently, $\pi(u)>\pi(v)$.
Then $d(u)<1$, and the shortest-path inequality gives $d(v)\leq d(u)+kx(e)$.
Consequently,
$$
\text{diff}(\pi,e)=\min\{d(v),1\}-d(u)\leq d(v)-d(u)\leq kx(e).
$$
Thus, we have shown that $x\in Q$.
\end{proof}

We conclude this subsection by considering the nucleolus when the core is non-empty.
Let $C=\{e\in N\mid \{e\}\text{ is an }s\text{-}t\text{ cut of }D\}$.
Since the core is non-empty, we obtain that $\epsilon_1\geq 0$ by the definition of the least core.
Thus $k=1$ and $C$ is non-empty by Lemma~\ref{lem:least-core-v=1}.
We have the following closed form of the nucleolus.

\begin{theorem}\label{thm:nucleolus-v=1-core}
If $\mathcal{C}(N,v)\neq \emptyset$, then $\frac{1}{|C|}\chi_{C}$ is the nucleolus of $(N,v)$.
\end{theorem}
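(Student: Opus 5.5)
Since the core is non-empty, $k=1$, $\epsilon_1=0$, and by Lemma~\ref{lem:least-core-v=1} we have $P_1(\epsilon_1)=\operatorname{conv}\{\chi_{\{e\}}\mid e\in C\}$. So the allocations surviving the first LP are exactly the nonnegative vectors supported on $C$ with total $1$. The plan is to show that, among these, the nucleolus must be uniform on $C$. We use the lexicographic definition directly rather than iterating the LPs $(P_r)$.

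\textbf{Step 1: excesses of singletons.} Fix $x\in P_1(\epsilon_1)$. For $e\in C$, the coalition $N\setminus\{e\}$ has value $0$, since removing $e$ disconnects $s$ from $t$ even with all public arcs available. Hence
$$e(x,N\setminus\{e\})=1-x(e).$$
For a singleton $\{e\}$ with $e\in N$, we have $v(\{e\})=0$ whenever $|C|\ge 2$. Indeed, $\{e\}\cup N^0$ misses some other single-arc cut $f\in C$ unless $e=f$. If $|C|=1$, then $P_1(\epsilon_1)$ is already a singleton and there is nothing to prove. So assume $|C|\ge2$; then $e(x,\{e\})=x(e)$.

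\textbf{Step 2: general coalitions.} More generally, $v(S)=1$ iff $C\subseteq S$ and $S\cup N^0$ contains an $s$-$t$ path. Thus every coalition $S\not\supseteq C$ has excess $x(S)=x(S\cap C)$, because $x$ vanishes off $C$. Every coalition $S\supseteq C$ has excess $x(S)-v(S)\ge 0$, and in fact equals $1-v(S)\in\{0,1\}$, since $x(S)\ge x(C)=1$.

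So the multiset of excesses splits into two parts. The first is the values $x(T)$ over proper subsets $T\subsetneq C$, each repeated $2^{|N\setminus C|}$ times. The second is a block of constants (zeros and ones) that do not depend on $x$. The $x$-dependent part is therefore a fixed multiplicity of the vector $(x(T))_{\emptyset\ne T\subsetneq C}$; the empty set contributes the constant $0$.

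\textbf{Step 3: lexicographic comparison.} It remains to show that the uniform vector $\frac{1}{|C|}\chi_C$ lexicographically maximizes the sorted vector of $x(T)$ over nonempty proper $T\subsetneq C$, interleaved with the fixed constants. The smallest nonconstant entries are the singletons $x(e)$, $e\in C$. If $x$ is not uniform, then $\min_e x(e)<1/|C|$. Moreover, every $x(T)$ with $T$ nonempty is at least this minimum, and the uniform allocation has all its relevant values at least $1/|C|$.

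Some care is needed with the constant zeros, which appear for both allocations in equal numbers. Since $x\ge0$, the entries $x(T)$ could also equal $0$, so the comparison should be done as follows. Sort both vectors. The constant multisets coincide, so the first index where they differ is governed by the smallest nonconstant value. For the nonuniform $x$ this value is strictly below $1/|C|$, while for the uniform allocation it equals $1/|C|$. Hence the uniform allocation wins lexicographically.

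Since the nucleolus lies in $P_1(\epsilon_1)$, this comparison identifies it as $\frac{1}{|C|}\chi_C$. The main obstacle is this bookkeeping: making rigorous that the first differing sorted coordinate is $\min_e x(e)$ versus $1/|C|$, given the ties with the constant zeros.
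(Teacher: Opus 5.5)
Your proof is correct, but it takes a different route from the paper. The paper also starts from $x(e)=0$ off $C$ (via Lemma~\ref{lem:least-core-v=1}). It then invokes Schmeidler's result that the nucleolus lies in the kernel. Since each $\{j\}$ with $j\in C$ is a cut, every coalition avoiding $j$ has value $0$. The pairwise surpluses therefore reduce to $s_{ij}=-x(i)$ and $s_{ji}=-x(j)$, and the kernel condition forces $x(i)=x(j)$ for all $i,j\in C$.

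You avoid the kernel entirely and compare excess vectors directly. Your key observation is that on the core, every coalition $S\supseteq C$ has the $x$-independent excess $1-v(S)$, while every other coalition has excess $x(S\cap C)$.

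The bookkeeping you flag as the main obstacle does go through. Let $z$ be the number of constant zero excesses; for every core allocation all entries of $\theta(x)$ are nonnegative. Hence the first $z$ sorted entries are $0$ both for $x$ and for $\frac{1}{|C|}\chi_C$. For a nonuniform $x$, the $(z+1)$-st entry is at most $\min_{e\in C}x(e)<\frac{1}{|C|}$, even when this minimum is $0$. For the uniform allocation it is at least $\frac{1}{|C|}$; here $|C|\ge 2$ guarantees that nonconstant entries exist.

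The paper's argument is shorter because it outsources the lexicographic reasoning to the kernel property. Yours is self-contained and more elementary. It also makes explicit that the nucleolus is determined simply by maximizing $\min_{e\in C}x(e)$ over the core.
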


\begin{proof}
Let $x$ be the nucleolus of $(N,v)$.
Since $x$ is in the core, we have $x(e)=0$ for any $e\in N\setminus C$ by Lemma \ref{lem:least-core-v=1}.
If $|C|=1$, the result is trivial.
Thus, we assume that $|C|\ge 2$.
Let $i,j\in C$ and $i\neq j$.
It suffices to show that $x(i)=x(j)$.
Schmeidler~\cite{SCH69} showed that the nucleolus always lies in the kernel when the core is non-empty, so $x$ satisfies the following kernel condition:
\begin{equation}
\text{if } s_{ij}>s_{ji}, \text{ then }x(j)=v(\{j\}), ~\forall i\neq j\in N, \label{cnd:kernel}
\end{equation}
where $s_{ij}=\max\{v(S)-x(S)\mid i\in S,j\notin S,S\subseteq N\}$.
Let $S\subseteq N$, $i\in S$ and $j\notin S$.
Since $\{j\}$ is a cut, every coalition not containing $j$ has value zero.
We have $0\leq v(S)\leq v(N\setminus \{j\})=0$, i.e., $v(S)=0$.
Then $v(\{i\})=0$ and $v(\{j\})=0$ by symmetry.
If $x(i)<x(j)$, then
$$
s_{ij}(x)=-x(i)>-x(j)=s_{ji}(x),
$$
and the kernel condition gives $x(j)=v(\{j\})=0$, contradicting $x(j)>x(i)\ge 0$.
Hence $x(i)\ge x(j)$.
By symmetry, $x(i)=x(j)$.
\end{proof}

\subsection{The case where $v(N)\geq 2$}\label{sec:least-core-v-ge2}

We now assume that $v(N)\geq 2$.
We first replace the exponentially many coalition constraints with flow constraints and compute a relative interior least-core allocation.
We then identify the flows that are tight throughout the least core and encode their common optimality conditions through the sets $M$, $A$, and $R$ and a potential function.
We begin with two basic properties of least-core allocations.

\begin{lemma} \label{lem:ls-core-non-neg}
$\epsilon_1\leq 0$, and $P_1(\epsilon_1)\subseteq \mathbb{R}^{|N|}_+$.
\end{lemma}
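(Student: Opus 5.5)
The plan is to reuse the argument from the first part of the proof of Lemma~\ref{lem:least-core-v=1}, which never used $v(N)=1$ except through monotonicity and the value of the grand coalition.

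\textbf{Step 1: $\epsilon_1\le 0$.} Write down the constraint for the empty coalition, $x(\emptyset)\ge v(\emptyset)+\epsilon$. This forces $\epsilon\le 0$ for every feasible pair, so $\epsilon_1\le0$. The empty set is a proper subset of $N$ whenever $|N|\ge1$, and $|N|\ge 1$ holds because $v(N)\ge2$.

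\textbf{Step 2: nonnegativity.} Argue by contradiction. Take $x\in P_1(\epsilon_1)$ and suppose $x(e)<0$ for some $e\in N$. If $|N|=1$, then $x(N)=v(N)\ge 2>0$ is an immediate contradiction, so assume $|N|\ge2$.
\begin{itemize}
\item The constraint for $\{e\}$ gives $\epsilon_1\le x(e)-v(\{e\})<0$.
\item Let $S\subseteq N\setminus\{e\}$ with $S\ne N\setminus\{e\}$, so that $S\cup\{e\}$ is a proper coalition. By monotonicity of $v$ (max flow can only increase when arcs are added), $v(S\cup\{e\})\ge v(S)$. Hence
\[
x(S)-v(S)\ge x(S\cup\{e\})-v(S\cup\{e\})-x(e)\ge \epsilon_1-x(e)>\epsilon_1 .
\]
\item For $S=N\setminus\{e\}$, we have $x(S)-v(S)=v(N)-x(e)-v(S)\ge -x(e)>0>\epsilon_1$, using $v(S)\le v(N)$.
\end{itemize}
So every coalition avoiding $e$ has strictly positive slack.

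\textbf{Step 3: perturb.} Define $x'(e)=x(e)+\delta$ and $x'(f)=x(f)-\delta/(|N|-1)$ for $f\ne e$. Then $x'(N)=v(N)$. Check the proper coalitions:
\begin{itemize}
\item Coalitions containing $e$ lose at most $\delta(|N|-2)/(|N|-1)$ from their other members and gain $\delta$, so their excess strictly increases.
\item Coalitions avoiding $e$ change by at most $\delta$. Since there are finitely many such coalitions, each with positive slack by Step~2, a small enough $\delta>0$ keeps them strictly above $\epsilon_1$.
\end{itemize}
Thus every proper-coalition constraint is strict at $x'$, and $(x',\epsilon_1+\gamma)$ is feasible for some $\gamma>0$, contradicting the optimality of $\epsilon_1$.

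The only place needing care is the monotonicity step in Step~2 and the strict increase for coalitions containing $e$ in Step~3. Both are routine, so I expect no real obstacle.
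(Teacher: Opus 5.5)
Your proof is correct and is essentially the paper's argument: the empty-coalition constraint gives $\epsilon_1\le 0$, monotonicity of $v$ together with $x(N)=v(N)$ gives strict slack for every coalition avoiding the negative player, and a small transfer toward that player makes every proper-coalition constraint strict, contradicting the optimality of $\epsilon_1$. You also treat $|N|=1$ separately, which matters because there $\{e\}$ is not a proper coalition and $|N|-1=0$; the paper's appendix proof leaves this case implicit.
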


\begin{proof}
See Appendix \ref{app:ls-core-non-neg}.
\end{proof}

This lemma states that the minimum excess is non-positive in any least core allocation.
Define $\mathcal{F}$ as the set of all integer flows of $D$, i.e.,
$$
\mathcal{F}=\{f\mid f\text{ is a flow of }D,~f(e)\in \{0,1\},~\forall e\in E\}.
$$
For $f\in\mathcal{F}$, write $f_N=(f(e))_{e\in N}$ for its restriction to $N$.
For any $x\in \mathbb{R}^{|N|}$ and $f\in \mathcal{F}$, we denote $x(f)=\sum_{e\in N}x(e)f(e)$ and $c(f)=\sum_{e\in E}c(e)f(e)$, where the vector $c\in \mathbb{R}^{|E|}$ is defined as follows:
$$
c(e)=\begin{cases}1,~\text{if }\partial^-e=s, \\
-1,~\text{if }\partial^+e=s, \\
0,~\text{otherwise.}\end{cases}
$$
Clearly, $c(f)$ is the flow value of $f$ for any $f\in \mathcal{F}$.
The following lemma gives a description of the least core that is useful for our purposes.

\begin{lemma} \label{lem:ls-lp-replace}
The inequalities $x(S)\geq v(S)+\epsilon,~\forall S\subset N$ in the linear program ($P_1$) can be replaced by $x(f)\geq c(f)+\epsilon$, $\forall f\in \mathcal{F}$, and $x(e)\geq 0$, $\forall e\in N$.
\end{lemma}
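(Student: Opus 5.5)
We need to show that, for the linear program ($P_1$), the feasible pairs $(x,\epsilon)$ with $x(N)=v(N)$ and $\epsilon$ relevant to the optimum are the same under both constraint systems. More precisely, the plan is to prove two things. First, every pair feasible for the new system is feasible for ($P_1$). Second, every pair feasible for ($P_1$) with $x\ge 0$ (which covers all optimal ones, by Lemma~\ref{lem:ls-core-non-neg}) satisfies the flow constraints. Then the optimal value and the optimal set coincide.

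\emph{New system implies the coalition constraints.} Fix $S\subset N$ and let $f$ be an integral maximum flow in $G[S\cup N^0]$, which exists by integrality of max flow with unit capacities. Then $f\in\mathcal F$ and $c(f)=v(S)$. Since $f$ vanishes outside $S\cup N^0$ and $f\le 1$, nonnegativity of $x$ gives
$$
x(S)\ge x(f)\ge c(f)+\epsilon=v(S)+\epsilon.
$$
Note that $x(e)\ge 0$ is used precisely here, to drop the arcs of $S$ not carrying flow.

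\emph{Coalition constraints and $x\ge0$ imply the flow constraints.} Let $(x,\epsilon)$ be feasible for ($P_1$) with $x\ge 0$, and take $f\in\mathcal F$. Set $S=\{e\in N\mid f(e)=1\}$. Then $x(f)=x(S)$, and $f$ is a flow in $G[S\cup N^0]$, so $c(f)\le v(S)$.
\begin{itemize}
\item If $S\subsetneq N$, then $x(f)=x(S)\ge v(S)+\epsilon\ge c(f)+\epsilon$.
\item If $S=N$, then $x(f)=x(N)=v(N)\ge c(f)\ge c(f)+\epsilon$, using $\epsilon\le 0$ from Lemma~\ref{lem:ls-core-non-neg}.
\end{itemize}
One subtlety is flows with negative value $c(f)$, e.g.\ cycles through $s$. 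These are harmless, since the inequality $c(f)\le v(S)$ still holds; if needed, a decomposition argument handles them.

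\emph{Combining.} The new LP's feasible region is contained in that of ($P_1$), so its optimum is at most $\epsilon_1$. Conversely, every $x\in P_1(\epsilon_1)$ is nonnegative by Lemma~\ref{lem:ls-core-non-neg}, and $\epsilon_1\le 0$, so $(x,\epsilon_1)$ is feasible for the new system. Hence both programs have the same optimal value and the same optimal set $P_1(\epsilon_1)$.

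The main care point is the $S=N$ case together with the sign of $\epsilon$; it is resolved by Lemma~\ref{lem:ls-core-non-neg}. Everything else follows from integrality of maximum flows.
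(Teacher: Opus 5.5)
Your proposal is correct and follows essentially the same route as the paper. A maximum flow in $G[S\cup N^0]$ together with $x\ge 0$ yields the coalition constraints. The case split on whether $\{e\in N\mid f(e)=1\}$ is a proper subset of $N$ or equals $N$, with Lemma~\ref{lem:ls-core-non-neg} supplying $x\ge 0$ and $\epsilon_1\le 0$, yields the flow constraints, so the two programs have the same optimal value and optimal set. Strictly, $\epsilon\le 0$ holds for every feasible pair of ($P_1$) directly from the empty-coalition constraint, so your general claim in the second part does not need the lemma for that bound.
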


\begin{proof}
See Appendix \ref{app:ls-lp-replace}.
\end{proof}

Since the inequalities in Lemma \ref{lem:ls-lp-replace} can be separated efficiently by minimum cost flow techniques, we have the following lemma.

\begin{lemma}\label{lem:ls-compute}
An allocation in $P_1(\epsilon_1)$ can be found in polynomial time.
\end{lemma}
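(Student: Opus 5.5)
By Lemma~\ref{lem:ls-lp-replace}, $(P_1)$ is equivalent to the linear program
\[
\max\ \epsilon \quad\text{s.t.}\quad x(f)\ge c(f)+\epsilon~\forall f\in\mathcal F,\quad x(e)\ge 0~\forall e\in N,\quad x(N)=v(N).
\]
It has $|N|+1$ variables and exponentially many constraints. The plan is to solve it with the ellipsoid method, so it suffices to give a polynomial-time separation oracle. Given a candidate $(x,\epsilon)$, the equality $x(N)=v(N)$ and the nonnegativity constraints are checked directly. The family $x(f)-c(f)\ge\epsilon$ holds for all $f\in\mathcal F$ if and only if
\[
\min_{f\in\mathcal F}\bigl(x(f)-c(f)\bigr)\ge\epsilon .
\]
So the main task is to compute this minimum, together with a minimizer, in polynomial time.

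To do this, define arc costs $w(e)=x(e)\chi_N(e)-c(e)$. Then $x(f)-c(f)=\sum_{e\in E}w(e)f(e)$, and we must minimize a linear cost over $0/1$ flows of $D$. Here a flow is a vector satisfying conservation at every vertex other than $s,t$, with no prescribed value. This is a minimum-cost circulation problem: add an uncapacitated, zero-cost return arc $(t,s)$, or equivalently arcs from $t$ to $s$ of capacity $|E|$, and give every arc of $D$ capacity~$1$. The constraint matrix is a network matrix, hence totally unimodular, so the LP relaxation over $0\le f\le 1$ has an integral optimal vertex. That vertex corresponds to an element of $\mathcal F$ (the return arc only records the flow value). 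Minimum-cost circulation with real costs, as in the ellipsoid setting, can be solved in strongly polynomial time, for example by Tardos' algorithm or by cycle cancelling with minimum mean cycles. If the optimum $f^*$ satisfies $x(f^*)-c(f^*)<\epsilon$, then $f^*$ gives a violated inequality; otherwise $(x,\epsilon)$ is feasible.

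The main obstacle is a bookkeeping point rather than a conceptual one. The vector $c$ charges $+1$ on arcs leaving $s$ and $-1$ on arcs entering $s$, so its cost is attached to arcs incident to $s$ rather than to the return arc. We also need $\mathcal F$ to range exactly over integral points of the flow polytope, including flows that circulate through $s$. I would check that the polytope $\{0\le f\le 1,\ \text{conservation at } V\setminus\{s,t\}\}$ is integral. Its matrix is a vertex-arc incidence matrix with some rows deleted, which remains totally unimodular. Since the costs are linear, we are minimizing over that polytope's vertices, which all lie in $\mathcal F$.

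Finally, the coefficients of every constraint lie in $\{0,\pm1\}$ and the right-hand sides are bounded by $|E|$, so the facet complexity is polynomial in $|E|$. By the Grötschel--Lovász--Schrijver equivalence of separation and optimization, we obtain $\epsilon_1$ and an optimal vertex $x\in P_1(\epsilon_1)$ in polynomial time. Alternatively, one can avoid the ellipsoid method by writing the dual of the inner min-cost flow LP, with node potentials. This gives a compact LP formulation, in the spirit of Theorem~\ref{thm:ls-core-des-v=1}, which can be solved directly by an interior-point method.
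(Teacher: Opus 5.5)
Your proposal is correct and follows the paper's proof. Both reduce $(P_1)$ via Lemma~\ref{lem:ls-lp-replace} to the flow inequalities, separate them with a minimum cost flow computation with arc costs $x-c$, and invoke the equivalence of separation and optimization. Your extra details (integrality via total unimodularity, the facet-complexity bound, and the duality-based compact LP as an ellipsoid-free alternative) are sound. One small nit: a single return arc $(t,s)$ excludes flows of negative value. This is harmless, because once $x\ge 0$ has been checked, any such flow has $x(f)-c(f)\ge 1$, which exceeds the cost $0$ of the zero flow.
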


\begin{proof}
Based on the polynomial equivalence of optimization and separation problems (see, e.g., \cite{GM12}), it suffices to show that for given $x$ and $\epsilon$, we can efficiently check whether $x(f)\geq c(f)+\epsilon$ holds for any $f\in \mathcal{F}$.
This can be checked by solving a minimum cost flow problem on $D$ with respect to the arc costs $w(e)=x(e)-c(e)$ for any $e\in E$.
The value of a minimum cost flow of $D$ can be computed in polynomial time (see, e.g., \cite{AM93}).
\end{proof}

Following the standard implicit-equality characterization of the relative interior of a polyhedron (see, e.g., \cite{schrijver1986theory}), we use the following characterization of the relative interior.
Let $P=\{x\mid Ax\leq b\}$ be a non-empty polytope, and define $I_<$ as the set of indices $i$ for which there exists a point $y \in P$ satisfying $A_i y < b_i$.
A point $x \in P$ is a \textit{relative interior point} of $P$ if $A_i x < b_i$ for every $i\in I_<$.
A flow $f\in \mathcal{F}$ is called $x$-\textit{tight} for a given $x\in P_1(\epsilon_1)$ if $x(f)=c(f)+\epsilon_1$.
A flow is a \textit{universal flow} if it is $x$-tight for all $x\in P_1(\epsilon_1)$.
Let $\mathcal{F}^x$ denote the set of all $x$-tight flows in $\mathcal{F}$.

\begin{lemma} \label{lem:inter-point}
A relative interior point in $P_1(\epsilon_1)$ can be computed in polynomial time.
\end{lemma}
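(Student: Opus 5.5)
The plan follows the standard recipe for finding a relative interior point of a polyhedron given only by a separation oracle: repeatedly identify the implicit equalities, then average points that make each non-implicit inequality strict. By Lemma~\ref{lem:ls-lp-replace}, $P_1(\epsilon_1)$ is described by
\[
x(N)=v(N),\qquad x(e)\ge 0 \ \ \forall e\in N,\qquad x(f)\ge c(f)+\epsilon_1 \ \ \forall f\in\mathcal F.
\]
First, compute $\epsilon_1$ exactly. By Lemma~\ref{lem:ls-compute}, the LP $(P_1)$ can be solved in polynomial time with the ellipsoid method, since the separation oracle is a minimum cost flow computation. Because all data are integral and small, $\epsilon_1$ is rational with polynomial encoding length, so it can be recovered exactly.

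Second, we want the affine hull of $P_1(\epsilon_1)$, found iteratively with a strong optimization oracle for this polytope. The oracle again comes from the ellipsoid method with the same min-cost-flow separation, now with $\epsilon$ fixed to $\epsilon_1$. Maintain a set of equalities $Ex=g$, initially $x(N)=v(N)$. At each stage:
\begin{itemize}
\item Find a vertex $y$ of $P_1(\epsilon_1)\cap\{Ex=g\}$ together with a direction $d$ orthogonal to the current equalities.
\item Maximize and minimize $d^\top x$ over $P_1(\epsilon_1)$.
\item If both values coincide, add $d^\top x=\text{const}$ as an implicit equality.
\item Otherwise, we obtain two points whose difference increases the dimension of the affinely independent set found so far.
\end{itemize}
This is the Grötschel--Lovász--Schrijver procedure for computing the affine hull; see~\cite{GM12}. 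It yields, in polynomial time, both the affine hull and $\dim P_1(\epsilon_1)+1$ affinely independent points $y_0,\dots,y_m$ of $P_1(\epsilon_1)$.

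Third, output the barycenter $\bar x=\frac{1}{m+1}\sum_i y_i$. Since the $y_i$ span the affine hull of $P_1(\epsilon_1)$, $\bar x$ lies in the relative interior of the polytope in the geometric sense. We must still check that this matches the paper's index-based definition. Suppose an inequality $x(f)\ge c(f)+\epsilon_1$ or $x(e)\ge0$ were tight at $\bar x$ but strict at some point of $P_1(\epsilon_1)$. That linear functional is minimized over the polytope at $\bar x$, yet $\bar x$ is relatively interior, so the functional is constant on the affine hull. This contradicts strictness elsewhere. Hence every inequality in $I_<$ is strict at $\bar x$, as required.

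The main obstacle is justifying the affine-hull step rigorously, since the system has exponentially many constraints and is accessible only through the min-cost-flow oracle. I would cite the GLS result (Theorem 6.5.5 in~\cite{GM12}): for a well-described polyhedron with a strong separation oracle, the affine hull and a relative interior point can be computed in oracle-polynomial time. The remaining work is to verify well-describedness. Every facet inequality has small integer coefficients, namely $0/1$ flow vectors, $c(f)\in\mathbb Z$, and $\epsilon_1$ of polynomial size, so the facet complexity is polynomial in $|E|$.
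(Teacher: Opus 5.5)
Your proposal is correct, but it takes a different route from the paper.

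\textbf{Your route.} You treat $P_1(\epsilon_1)$ as a black-box well-described polytope. You compute $\epsilon_1$ exactly and run the Gr\"otschel--Lov\'asz--Schrijver affine-hull procedure, which is essentially the procedure the paper itself uses later in Lemma~\ref{lem:direction-recovery}. You output the barycenter of $\dim P_1(\epsilon_1)+1$ affinely independent points. Finally, you correctly show that a geometric relative interior point satisfies the paper's index-based definition for any valid inequality system. One small fix is needed in your iteration: the direction $d$ must be orthogonal not only to the current equations but also to the differences of the points already collected. Otherwise the case ``max $\neq$ min'' need not yield a new affinely independent point.

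\textbf{The paper's route.} The paper gives a flow-specific procedure (Algorithm~\ref{alg:ALM}) and never computes the affine hull.
\begin{enumerate}[(1)]
\item It averages $x$ with maximizers of each coordinate $y(e)$. This makes strict every nonnegativity constraint that can be strict.
\item It repeatedly takes an $x$-tight flow $f$ and maximizes $y(f)$ over $P_1(\epsilon_1)$. Either it averages, which strictly shrinks the face $\operatorname{conv}(\mathcal F^x)$ of the flow polytope, or it certifies $f$ as universal and adds it to $F$.
\item A lexicographic min-cost-flow computation (Lemma~\ref{lem:post-opt-flow}) searches for a tight flow outside $\operatorname{span}(F)$.
\end{enumerate}
At termination every $x$-tight flow is a linear combination of universal flows, hence universal.

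\textbf{What each buys.} Your argument is shorter and generic. It handles the constraints $x(e)\ge 0$ and $x(f)\ge c(f)+\epsilon_1$ uniformly, and it explicitly bridges the geometric and index-based notions of relative interior. The paper's argument stays inside the tight-flow structure. It bounds its iteration counts by dimensions in flow space, both at most $|E|$. It also exhibits a spanning set of universal flows, which is the structure exploited afterwards via $M$, $A$, $R$. Neither proof is purely combinatorial, however: the paper's $\arg\max$ steps over $P_1(\epsilon_1)$ also rely on ellipsoid-based separation--optimization equivalence.
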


\begin{proof}
We give a polynomial-time combinatorial algorithm for finding a relative interior point.
Given an initial least core allocation $x$, the high-level idea of this algorithm is to attempt to find a flow $f$ that is an $x$-tight flow but not a universal flow.
If such a flow exists, $x$ is updated to increase the allocation on $f$; otherwise, $x$ is output as a relative interior point of the least core.
The detailed steps and complexity of this algorithm are presented in Appendix \ref{app:inter-point}.
\end{proof}

Fix a relative interior point $x^*$ in $P_1(\epsilon_1)$.
Clearly, for any relative interior point $x^*$ in $P_1(\epsilon_1)$, the set of universal flows is equal to $\mathcal{F}^{x^*}$.
Conversely, if $f\in\mathcal{F}$ is not a universal flow, then the inequality $x(f)\geq c(f)+\epsilon_1$ is strict for any relative interior point $x$ in $P_1(\epsilon_1)$.
Thus, $\mathcal{F}^{x^*}$ does not depend on the choice of $x^*$.
Note that a flow $f$ is an $x$-tight flow if and only if it is a solution to the minimum cost flow problem with costs given by $x-c$.
Specifically, it can be obtained by solving the following linear program:
\begin{mini}
{}{\sum_{e\in N}x(e)f(e)-\sum_{e\in E}c(e)f(e)} {\label{lp:mincost-flow}}{} \addConstraint{\sum_{e:\partial^-e=v}f(e)-\sum_{e:\partial^+e=v}f(e)}{= 0}{\text{~~for all~~}v\in V\setminus \{s,t\}} \addConstraint{0\leq f(e)}{\leq 1}{\text{~~for all~~} e\in E.}
\end{mini}
By setting $\pi(s)=1$ and $\pi(t)=0$, we obtain the dual of the linear program \eqref{lp:mincost-flow} as follows.
\begin{maxi!}
{}{\sum_{e\in E}y(e)} {\label{lp:mincost-dual}}{} \addConstraint{\pi(s)=1,\pi(t)}{=0}{} \addConstraint{\pi(\partial^-e)-\pi(\partial^+e)+y(e)}{\leq x(e)\protect\label{cons-pot-N}}{\text{~~for all~~}e\in N} \addConstraint{\pi(\partial^-e)-\pi(\partial^+e)+y(e)}{\leq 0\protect\label{cons-pot-N0}}{\text{~~for all~~}e\in N^0} \addConstraint{y(e)}{\leq 0}{\text{~~for all~~}e\in E.}
\end{maxi!}

Note that the optimal solutions to the linear program \eqref{lp:mincost-flow} always lie on the optimal face of the flow polytope corresponding to the linear objective with weights $x-c$.
Thus, we define the following partition $(M_x,A_x,R_x)$ of $E$: $M_x=\{e\mid f(e)=1,~\forall f\in\mathcal{F}^x\}$, $R_x=\{e\mid f(e)=0,~\forall f\in\mathcal{F}^x\}$ and $A_x=E\setminus(M_x\cup R_x)$ which consists of the remaining arcs.
Note that $\mathcal{F}^x$ is precisely the intersection of $\mathcal{F}$ and the set of optimal solutions to the linear program \eqref{lp:mincost-flow}.
We can characterize $\mathcal{F}^x$ in the following lemma.
\begin{lemma}\label{lem:opt-flow}
$\mathcal{F}^x=\{f\in\mathcal{F}\mid f(e)=1,~\forall e\in M_x,~f(e)=0,~\forall e\in R_x\}$.
\end{lemma}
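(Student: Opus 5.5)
The inclusion ``$\subseteq$'' is immediate from the definitions of $M_x$ and $R_x$. Every $f\in\mathcal{F}^x$ has $f(e)=1$ on $M_x$ and $f(e)=0$ on $R_x$, so $\mathcal{F}^x$ lies in the right-hand side. The work is in the reverse inclusion: every $0/1$ flow that agrees with the tight flows on $M_x\cup R_x$ must itself be $x$-tight. Since $\mathcal{F}^x$ is exactly the set of integral optimal solutions of the min-cost flow LP \eqref{lp:mincost-flow} with costs $x-c$, I will argue through complementary slackness with the dual \eqref{lp:mincost-dual}.

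\textbf{Step 1.} Choose a dual optimum with strict complementarity, and split into two cases.
\begin{itemize}
\item If $\mathcal{F}^x=\emptyset$, then $M_x=R_x=\emptyset$ vacuously, so the right-hand side is all of $\mathcal{F}$. This case cannot occur, because $x\in P_1(\epsilon_1)$ and the constraints $x(f)\ge c(f)+\epsilon_1$ are tight for some flow at the optimum.
\item Otherwise, $\mathcal{F}^x\neq\emptyset$. The LP \eqref{lp:mincost-flow} is feasible and bounded, so it has an optimum. By Goldman--Tucker strict complementarity there is an optimal primal--dual pair $(f^\circ,(\pi,y))$ in which every primal inequality $0\le f(e)\le 1$ and its dual counterpart are strictly complementary. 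Here the dual slack on arc $e$ is $x(e)-\text{diff}(\pi,e)-y(e)$ for $e\in N$ and $-\text{diff}(\pi,e)-y(e)$ for $e\in N^0$; the multiplier of $f(e)\le1$ is $-y(e)$.
\end{itemize}
The polytope of \eqref{lp:mincost-flow} is integral, since its constraint matrix is a network matrix. Hence the optimal face is the convex hull of its integral points, namely $\operatorname{conv}\mathcal{F}^x$.

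\textbf{Step 2.} Identify the strictly complementary sets. I will show that $e\in M_x$ exactly when every optimal $f$ has $f(e)=1$, i.e.\ when the optimal face lies in $\{f(e)=1\}$, and similarly for $R_x$ with $f(e)=0$. The link is that every vertex of the optimal face lies in $\mathcal{F}^x$. Strict complementarity then gives:
\begin{itemize}
\item for $e\in M_x$: $y(e)<0$;
\item for $e\in R_x$: the dual slack is positive;
\item for $e\in A_x$: both $y(e)=0$ and the slack is $0$, since some optimal flow has $0<f(e)<1$, e.g.\ an average of two tight flows.
\end{itemize}

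\textbf{Step 3.} Verify the reverse inclusion. Take $g\in\mathcal{F}$ with $g=1$ on $M_x$ and $g=0$ on $R_x$. Check complementary slackness of $g$ against $(\pi,y)$:
\begin{itemize}
\item $y(e)<0$ only on $M_x$, where $g(e)=1$;
\item the dual slack is positive only on $R_x$, where $g(e)=0$;
\item on $A_x$ no condition binds.
\end{itemize}
Flow conservation holds because $g\in\mathcal{F}$. Therefore $g$ is optimal for \eqref{lp:mincost-flow}, so $x(g)-c(g)$ equals the optimal value, which is $\epsilon_1$ by the tightness argument in Step 1. Hence $g\in\mathcal{F}^x$.

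The main obstacle is Step 2, namely justifying that the optimal LP face, and not just its integral points, is pinned down by $M_x$ and $R_x$. I would handle this via integrality of the face, which lets me pass from ``all tight integral flows'' to ``all optimal fractional flows'' when invoking strict complementarity. A subtler point is that the paper's dual fixes $\pi(s)=1$ and $\pi(t)=0$ and has no variable for $s$ or $t$ conservation. I would check that this matches the primal, where $s$ and $t$ have no conservation constraint and the term $-c(f)$ accounts for the flow value. I would verify this by rederiving the dual before relying on complementary slackness.
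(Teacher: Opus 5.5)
Your argument is correct, and its final step (Step 3) is exactly the paper's proof: fix a dual optimum $(\pi,y)$ and check that any $g\in\mathcal F$ with $g=1$ on $M_x$ and $g=0$ on $R_x$ satisfies complementary slackness with it. The difference is in how you obtain the two support inclusions Step 3 needs, namely $y(e)<0\Rightarrow e\in M_x$ and (positive dual slack on $e$) $\Rightarrow e\in R_x$.

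You choose a strictly complementary pair via Goldman--Tucker, and then need integrality of the optimal face to identify the strict supports with $M_x$ and $R_x$. The paper instead takes an arbitrary dual optimum. It observes that if $y(e)<0$, complementary slackness forces $f(e)=1$ for every LP-optimal flow, in particular for every tight flow, so $e\in M_x$; positive slack gives $e\in R_x$ symmetrically. Only these one-way implications are used in Step 3. So strict complementarity, the converse inclusions (such as $e\in M_x\Rightarrow y(e)<0$), and the ``main obstacle'' you flag about pinning down the fractional optimal face are all unnecessary. Integrality of the flow polytope itself is still needed, but only to identify the LP optimum with the minimum over $\mathcal F$. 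Your route buys an exact description of the dual supports, which the lemma does not require.

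In your favour, you explicitly justify that the LP optimum equals $\epsilon_1$, via the existence of a tight flow for every $x\in P_1(\epsilon_1)$. The paper only asserts this in the remark preceding the lemma.

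One small slip: if $\mathcal F^x$ were empty, then vacuously $M_x=R_x=E$, not $\emptyset$. The right-hand side would then be empty and the identity would hold trivially. Since you rule this case out anyway, nothing breaks.
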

\begin{proof}
See Appendix \ref{app:opt-flow}.
\end{proof}
Moreover, we show that these three sets can be computed efficiently, and give a combinatorial algorithm as follows.

\begin{lemma} \label{lem:MAR-x}
Given $x\in P_1(\epsilon_1)$, the sets $M_x$, $A_x$, and $R_x$ can be identified in polynomial time.
\end{lemma}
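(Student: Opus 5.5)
We will exploit the fact that $\mathcal F^x$ is the set of integral optimal solutions of the minimum cost flow problem \eqref{lp:mincost-flow} with arc costs $w=x-c$, where $x$ is extended by $x(e)=0$ on $N^0$. The plan is to compute one optimal integral flow, use complementary slackness to pin down the optimal face, and then read off $M_x$ and $R_x$ from reachability in a residual network. We do not need $\epsilon_1$ explicitly; it suffices that $\mathcal F^x$ equals the set of optimal integral flows. This holds because $x\in P_1(\epsilon_1)$ makes $\min_{f\in\mathcal F}(x(f)-c(f))=\epsilon_1$: by Lemma~\ref{lem:ls-lp-replace} every flow satisfies $x(f)\ge c(f)+\epsilon_1$, and by optimality of $\epsilon_1$ some flow is tight.

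\textbf{Step 1: optimal flow and potentials.} First, we compute an optimal integral flow $f^*$ of \eqref{lp:mincost-flow}, together with an optimal dual solution $(\pi,y)$ of \eqref{lp:mincost-dual}, in polynomial time by standard minimum cost flow algorithms~\cite{AM93}. Integrality of $f^*$ follows from total unimodularity.

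\textbf{Step 2: optimal face.} By complementary slackness, a feasible flow $f$ is optimal if and only if three conditions hold:
\begin{itemize}
\item $f(e)=0$ whenever the reduced cost $\bar w(e)=w(e)-\text{diff}(\pi,e)>0$;
\item $f(e)=1$ whenever $\bar w(e)<0$, in which case $y(e)<0$;
\item $f(e)$ is free in $[0,1]$ whenever $\bar w(e)=0$.
\end{itemize}
Here $w(e)=x(e)$ for $e\in N$ and $w(e)=0$ for $e\in N^0$. The $c$ term is absorbed by $\pi(s)=1$ and $\pi(t)=0$: along any cycle through $s$, or any $s$--$t$ path, the potential telescopes exactly as $c$ counts the flow value. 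So we initially put arcs with $\bar w>0$ into $R_x$ and arcs with $\bar w<0$ into $M_x$.

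\textbf{Step 3: the zero-reduced-cost arcs.} The remaining arcs $E_0=\{e:\bar w(e)=0\}$ are the main difficulty. Optimal flows differ from $f^*$ exactly by feasible circulations in the residual network of $f^*$ restricted to $E_0$. Without loss of generality we treat $s$ and $t$ as joined, since the flow value is unconstrained and the $c$-cost is already absorbed into $\pi$.

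Concretely, we build the auxiliary digraph $H$ on $V$ with $s$ and $t$ identified. Each arc $e=(u,v)\in E_0$ contributes a forward arc $u\to v$ if $f^*(e)=0$, and a backward arc $v\to u$ if $f^*(e)=1$. An arc $e\in E_0$ can change its value in some optimal integral flow if and only if its residual copy lies on a directed cycle of $H$, that is, its endpoints lie in the same strongly connected component of $H$.
\begin{itemize}
\item If $e$ lies in such a cycle, augmenting along the cycle yields another integral optimal flow with $f(e)$ flipped. Hence $e\in A_x$.
\item If $e$ lies in no such cycle, then any optimal $f$ has $f-f^*$ decomposing into residual cycles of $H$ avoiding $e$, so $f(e)=f^*(e)$. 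We then place $e$ in $M_x$ if $f^*(e)=1$ and in $R_x$ otherwise.
\end{itemize}

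The key check is that identifying $s$ with $t$ correctly captures flows of different value. A difference of two flows is a circulation once a return arc $t\to s$ is added, and any such circulation has zero reduced cost precisely when it uses only $E_0$ arcs. Strongly connected components are computable in linear time, so the whole procedure runs in polynomial time. This partition agrees with the one characterized in Lemma~\ref{lem:opt-flow}.
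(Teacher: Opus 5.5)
Your argument is correct, but it takes a different route from the paper.

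The paper uses no dual information here. It computes $\mu(x)=\min\{(x-c)^\top f\mid f\in\mathcal F\}$. Then, for every arc $e$ and $b\in\{0,1\}$, it computes the restricted optimum $\mu_e^b(x)$ with $f(e)$ forced to $b$ through lower and upper bounds. The classification is read off directly: $e\in M_x$ if and only if $\mu_e^0(x)>\mu(x)$, $e\in R_x$ if and only if $\mu_e^1(x)>\mu(x)$, and $e\in A_x$ otherwise. That costs $2|E|+1$ black-box minimum cost flow calls, and correctness follows immediately from the definitions.

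You instead compute a single optimal primal--dual pair and fix every arc of nonzero reduced cost by complementary slackness. You then settle the zero-reduced-cost arcs with one strongly-connected-components computation in the residual graph, with $s$ and $t$ merged. The merge is justified by an identity you state only informally ("the potential telescopes"). For every $g$ satisfying conservation on $V\setminus\{s,t\}$, we have $x(g)-c(g)=\sum_{e\in E}x_\pi(e)g(e)$, because the net outflow at $t$ is $-c(g)$ and $\pi(s)-\pi(t)=1$. Hence such $g$ are exactly the circulations of the merged graph, and their cost is measured by reduced costs alone.

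Two further features of the setting make the rest go through. Capacities are unit and $f^*$ is $0/1$, so each arc has exactly one residual copy, and augmenting along a simple residual cycle yields another $0/1$ optimal flow. Arcs between $s$ and $t$ become loops, which are trivially on a cycle, so they are correctly placed in $A_x$ when they lie in $E_0$.

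Your approach is faster: one minimum cost flow plus a linear-time SCC pass. It also yields, as a by-product, a potential $\pi$ whose reduced-cost sign pattern certifies the partition. This is the same kind of certificate that Lemma~\ref{lem:sen-any} and the set $\Pi$ rely on later. The paper's approach is shorter to justify and needs only a minimum cost flow oracle with bounds.
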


\begin{proof}
See Appendix~\ref{app:MAR-x}.
\end{proof}

Fix a universal flow $f^*\in\mathcal{F}^{x^*}$.
The following lemma shows that the least core can be expressed as the set of vectors for which the universal flow remains a minimum cost flow.

\begin{lemma}\label{lem:ls-core-sen-any}
$P_1(\epsilon_1)=\{x\in \mathbb{R}^{|N|}_+\mid \mathcal{F}^{x^*}\subseteq \mathcal{F}^x,~x(N)=v(N),~x(f^*)=c(f^*)+\epsilon_1\}$.
\end{lemma}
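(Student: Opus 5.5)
We prove the two inclusions separately, using Lemma~\ref{lem:ls-lp-replace} throughout. By that lemma,
$$P_1(\epsilon_1)=\{x\in\mathbb R^N_+\mid x(N)=v(N),~x(f)\ge c(f)+\epsilon_1~\forall f\in\mathcal F\}.$$
Call the right-hand side of the statement $Q$.

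\emph{The inclusion $P_1(\epsilon_1)\subseteq Q$.} Let $x\in P_1(\epsilon_1)$. Lemma~\ref{lem:ls-core-non-neg} gives $x\ge 0$, and $x(N)=v(N)$ holds by definition. Every $f\in\mathcal F^{x^*}$ is a universal flow, so it is $x$-tight. Hence $\mathcal F^{x^*}\subseteq\mathcal F^x$, and in particular $x(f^*)=c(f^*)+\epsilon_1$.

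\emph{The inclusion $Q\subseteq P_1(\epsilon_1)$.} Let $x\in Q$. We must show $x(f)-c(f)\ge\epsilon_1$ for every $f\in\mathcal F$, that is,
$$\min_{f\in\mathcal F}\bigl(x(f)-c(f)\bigr)\ge\epsilon_1 .$$
The plan is to show that $f^*$ attains this minimum; then the minimum equals $x(f^*)-c(f^*)=\epsilon_1$.

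By hypothesis $f^*\in\mathcal F^{x^*}\subseteq\mathcal F^x$. The catch is that $\mathcal F^x$ was defined as the set of flows with $x(f)=c(f)+\epsilon_1$, which is meaningful only for $x\in P_1(\epsilon_1)$. So the first step is to fix the reading: for general $x$, we interpret $\mathcal F^x$ as the set of optimal integer solutions of the minimum cost flow LP~\eqref{lp:mincost-flow} with costs $x-c$. The text notes that these two definitions coincide on $P_1(\epsilon_1)$, so this reading is consistent.

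Under this reading the argument is immediate. Since $f^*\in\mathcal F^x$, $f^*$ minimizes $x(f)-c(f)$ over $\mathcal F$, so every $f\in\mathcal F$ satisfies
$$x(f)-c(f)\ge x(f^*)-c(f^*)=\epsilon_1 .$$
Together with $x\ge0$ and $x(N)=v(N)$, Lemma~\ref{lem:ls-lp-replace} gives that $(x,\epsilon_1)$ is feasible for $(P_1)$, so $x\in P_1(\epsilon_1)$.

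\emph{The main obstacle.} The difficulty is getting this reading of $\mathcal F^x$ for arbitrary $x$ right, and making sure the condition $\mathcal F^{x^*}\subseteq\mathcal F^x$ is really used as a minimality statement. It then has to be combined with the equation $x(f^*)=c(f^*)+\epsilon_1$; the inclusion alone gives minimality but not the value $\epsilon_1$.

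If one instead insists on the literal tightness definition of $\mathcal F^x$ for general $x$, the inclusion $\mathcal F^{x^*}\subseteq\mathcal F^x$ gives only equalities on the universal flows. In that case we would argue via LP duality. Take the optimal dual $(\pi^*,y^*)$ of~\eqref{lp:mincost-dual} for $x^*$. By complementary slackness, this dual certificate is determined by the sets $M_{x^*}$ and $R_{x^*}$ together with the universal flows. We would then try to transfer it to $x$. This transfer step is where any real work would lie, and it is why we prefer the optimality interpretation.
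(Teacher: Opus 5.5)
Your proof is correct and follows essentially the same argument as the paper. The forward inclusion uses the universality of the flows in $\mathcal F^{x^*}$, and the reverse inclusion reads $f^*\in\mathcal F^x$ as minimum-cost optimality of $f^*$ under costs $x-c$ and then invokes Lemma~\ref{lem:ls-lp-replace}. The paper itself adopts your reading of $\mathcal F^x$ for general $x$: it identifies $\mathcal F^x$ with the integral optimal solutions of the min-cost flow LP. Your fallback duality discussion is therefore unnecessary.
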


\begin{proof}
Let $x\in P_1(\epsilon_1)$.
If $f\in \mathcal{F}^{x^*}$, then $f$ is an $x$-tight flow.
Therefore, $\mathcal{F}^{x^*}\subseteq \mathcal{F}^x$.
On the other hand, let $x\in \mathbb{R}^{|N|}_+$ satisfy $\mathcal{F}^{x^*}\subseteq \mathcal{F}^x$, $x(N)=v(N)$ and $x(f^*)=c(f^*)+\epsilon_1$.
Since $f^*$ is a minimum cost flow with $x-c$ as costs, we have $x(f)-c(f)\geq x(f^*)-c(f^*)=\epsilon_1$ for any $f\in \mathcal{F}$.
By Lemma \ref{lem:ls-lp-replace}, we have $x\in P_1(\epsilon_1)$.
\end{proof}

Lemma \ref{lem:ls-core-sen-any} states that the problem of characterizing the least core can be reduced to addressing the constraint $\mathcal{F}^{x^*}\subseteq \mathcal{F}^x$.
Let $M=M_{x^*}$, $A=A_{x^*}$ and $R=R_{x^*}$.
By Lemma \ref{lem:inter-point} and Lemma \ref{lem:MAR-x}, it is easy to see that these three sets can be computed in polynomial time.
Define $x_\pi(e)$ as the \textit{reduced cost} of arc $e$, given by $x_\pi(e)=x(e)+\pi(\partial^+e)-\pi(\partial^-e)$, $\forall e\in N$ and $x_\pi(e)=\pi(\partial^+e)-\pi(\partial^-e)$, $\forall e\in N^0$.
We next consider the condition $\mathcal{F}^{x^*}\subseteq \mathcal{F}^x$.
This condition can be expressed as the complementary slackness conditions of linear programming \eqref{lp:mincost-flow}.

\begin{lemma} \label{lem:sen-any}
Let $x\in \mathbb{R}^{|N|}$.
The following statements are equivalent:
\begin{enumerate}[(i)]
\item $\mathcal{F}^{x^*}\subseteq \mathcal{F}^x$;
\item $M_x\subseteq M$, $A\subseteq A_x$, $R_x\subseteq R$;
\item There exists $\pi\in \mathbb{R}^{|V|}$ such that
$$
\begin{aligned}&\pi(s)=1,\pi(t)=0,\\
&x_\pi(e)\leq 0,\forall e\in M,\\
&x_\pi(e)= 0,\forall e\in A,\\
&x_\pi(e)\geq 0,\forall e\in R.\\
\end{aligned}
$$
\end{enumerate}
\end{lemma}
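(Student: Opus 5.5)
We prove the cycle (i)$\Rightarrow$(ii)$\Rightarrow$(iii)$\Rightarrow$(i). Throughout we use Lemma~\ref{lem:opt-flow}, which describes $\mathcal{F}^x$ as the integer flows that agree with $M_x$ and $R_x$. We also use that $\mathcal{F}^x$ is the set of integer optimal solutions of the minimum cost flow LP \eqref{lp:mincost-flow} with costs $x-c$. Its constraint matrix is totally unimodular, so the optimal face is the convex hull of $\mathcal{F}^x$. For $x\notin P_1(\epsilon_1)$ we still define $\mathcal{F}^x$, $M_x$, $A_x$, $R_x$ via minimum cost flows, and Lemma~\ref{lem:opt-flow} holds by the same argument, which uses only LP optimality.

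\textbf{(i)$\Rightarrow$(ii).} Assume $\mathcal{F}^{x^*}\subseteq\mathcal{F}^x$.
\begin{enumerate}[(a)]
\item If $e\in M_x$, then every $f\in\mathcal{F}^x$ has $f(e)=1$, in particular every $f\in\mathcal{F}^{x^*}$. Hence $e\in M$, so $M_x\subseteq M$. Symmetrically, $R_x\subseteq R$.
\item If $e\in A$, there are $f,g\in\mathcal{F}^{x^*}\subseteq\mathcal{F}^x$ with $f(e)=1$ and $g(e)=0$. Hence $e\in A_x$, so $A\subseteq A_x$.
\end{enumerate}

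\textbf{(ii)$\Rightarrow$(i).} Let $f\in\mathcal{F}^{x^*}$. Then $f=1$ on $M\supseteq M_x$ and $f=0$ on $R\supseteq R_x$. By Lemma~\ref{lem:opt-flow} applied to $x$, $f\in\mathcal{F}^x$. So (i) and (ii) are equivalent, and only (iii) requires real work.

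\textbf{(i)$\Rightarrow$(iii).} Choose $f\in\mathcal{F}^{x^*}$ with $f\in\mathcal{F}^x$ (any $f$ works by (i)). Then $f$ is optimal for \eqref{lp:mincost-flow} with costs $x-c$. Take an optimal dual solution $(\pi,y)$ of \eqref{lp:mincost-dual}, normalized by $\pi(s)=1$, $\pi(t)=0$. The term $-c$ in the objective is accounted for by this normalization: the $s$-arcs carry the $\pm1$ through $\pi(s)-\pi(t)=1$, and the convention $x(e)=0$ on $N^0$ makes $x_\pi$ the right reduced cost. I would verify this bookkeeping carefully.

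Complementary slackness for a primal optimum $f$ and dual optimum $(\pi,y)$ states:
\begin{enumerate}[(a)]
\item $0<f(e)$ forces the arc constraint to be tight, i.e.\ $y(e)=-x_\pi(e)$;
\item $f(e)<1$ forces $y(e)=0$.
\end{enumerate}
So $x_\pi(e)<0$ implies $f(e)=1$, and $x_\pi(e)>0$ implies $f(e)=0$.

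A dual optimum is complementary to every primal optimum simultaneously. Hence $x_\pi(e)<0$ forces $e\in M_x$, and $x_\pi(e)>0$ forces $e\in R_x$. Consequently:
\begin{enumerate}[(a)]
\item $x_\pi=0$ on $A_x\supseteq A$;
\item $x_\pi\le 0$ on $M_x\cup A_x$;
\item $x_\pi\ge0$ on $R_x\cup A_x$.
\end{enumerate}
Using (ii), $M\subseteq M_x\cup A_x$, since $R_x\subseteq R$ and $R\cap M=\emptyset$. Likewise $R\subseteq R_x\cup A_x$. This gives the sign conditions in (iii).

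\textbf{(iii)$\Rightarrow$(i).} Given $\pi$ as in (iii), set $y(e)=\min\{0,-x_\pi(e)\}$. Then $(\pi,y)$ is dual feasible. Take any $f\in\mathcal{F}^{x^*}$; it is $1$ on $M$ and $0$ on $R$. Check complementary slackness arc by arc:
\begin{enumerate}[(a)]
\item On $M$: $x_\pi\le0$, so $y=-x_\pi$ and the arc constraint is tight, while $f(e)=1$ is allowed.
\item On $R$: $f(e)=0$ and $y(e)=0$.
\item On $A$: $x_\pi=0$, so both the arc constraint and $y\le0$ are tight.
\end{enumerate}
Hence $f$ is primal optimal for costs $x-c$, so $f\in\mathcal{F}^x$, which is (i).

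The main obstacle is the direction (i)$\Rightarrow$(iii): obtaining one potential $\pi$ that works uniformly on $M$, $A$ and $R$. I handle it through the strict complementarity of a single dual optimum, combined with the normalization $\pi(s)=1$, $\pi(t)=0$ that absorbs $c$.
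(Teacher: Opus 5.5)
Your strategy is the paper's: normalize the dual of \eqref{lp:mincost-flow} by $\pi(s)=1$, $\pi(t)=0$, take a dual optimum, and apply complementary slackness against every optimal flow. Your direct proof of (ii)$\Rightarrow$(i) via Lemma~\ref{lem:opt-flow} is a cosmetic variation. So is your contrapositive phrasing of the step that passes through (ii) (strictly negative $x_\pi(e)$ forces $e\in M_x$, strictly positive forces $e\in R_x$). Your note that $\mathcal{F}^x$, $M_x$, $A_x$, $R_x$ must be read via minimum cost flows when $x\notin P_1(\epsilon_1)$ is a useful clarification of what the paper leaves implicit. The $-c$ bookkeeping you flagged is fine: $c(e)=[\partial^-e=s]-[\partial^+e=s]$ and there are no loops, so raising $\pi(s)$ from $0$ to $1$ absorbs $c$ exactly.

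However, there is a sign error, and it makes (iii)$\Rightarrow$(i) fail as written.
\begin{itemize}
\item With the paper's definition of $x_\pi$, constraints \eqref{cons-pot-N} and \eqref{cons-pot-N0} both read $y(e)\le x_\pi(e)$. Tightness of the arc constraint therefore means $y(e)=x_\pi(e)$, not $y(e)=-x_\pi(e)$.
\item Your choice $y(e)=\min\{0,-x_\pi(e)\}$ is dual-infeasible on every arc with $x_\pi(e)<0$, since there $y(e)=0>x_\pi(e)$.
\item Even under your own stated convention the check on $M$ breaks. For $e\in M$ with $x_\pi(e)<0$ you get $y(e)=0\neq -x_\pi(e)$, not $y=-x_\pi$ as you claim. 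The arc constraint is then slack, so complementary slackness would force $f(e)=0$, contradicting $f(e)=1$ on $M$.
\item In (i)$\Rightarrow$(iii), the implications ``$x_\pi(e)<0\Rightarrow f(e)=1$'' and ``$x_\pi(e)>0\Rightarrow f(e)=0$'' are correct. They do not follow from the tightness condition $y(e)=-x_\pi(e)$ you wrote, which together with $y\le 0$ would give the reverse implications.
\end{itemize}

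The fix is to replace $-x_\pi$ by $x_\pi$ throughout and take $y(e)=\min\{0,x_\pi(e)\}$. This is dual feasible. It equals $x_\pi(e)$ on $M$, so the constraint is tight there, consistent with $f(e)=1$. It is $0$ on $R$, where $f(e)=0$. It equals $0=x_\pi(e)$ on $A$, so both conditions are tight. With this correction your argument coincides with the paper's.
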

\begin{proof}
$(i)\Rightarrow (ii)$.
Assume $\mathcal{F}^{x^*}\subseteq \mathcal{F}^x$, and let $f\in \mathcal{F}^{x^*}$.
If $e\in M_x$, then $f(e)=1$.
Therefore, $e\in M$.
If $e\in R_x$, then $f(e)=0$.
Therefore, $e\in R$.
Since $M\cup A \cup R=M_x\cup A_x\cup R_x=E$, we have $A\subseteq A_x$.

$(ii)\Rightarrow (iii)$.
Let $(y,\pi)$ be a dual optimal solution of the linear program \eqref{lp:mincost-flow} whose objective function assigns weights $x-c$.
Then $\pi$ satisfies $\pi(s)=1$ and $\pi(t)=0$.
We can write constraints \eqref{cons-pot-N} and \eqref{cons-pot-N0} as $y(e)\leq x_\pi(e)$, $\forall e\in E$.
Since the coefficient of each variable $y(e)$ in the objective function is $1$, we can choose an optimal solution with $y(e)=\min\{0,x_\pi(e)\}$.
By condition (ii), if $e\in M\cup A$, then $e\in M_x\cup A_x$.
By definition of $M_x$ and $A_x$, there exists $f\in \mathcal{F}^x$ such that $f(e)=1$.
By the complementary slackness conditions, we have $x_\pi(e)-y(e)=0$.
Therefore, $x_\pi(e)=y(e)\leq 0$.
By condition (ii), if $e\in A\cup R$, then $e\in A_x\cup R_x$.
By the definition of $A_x$ and $R_x$, there exists $f\in \mathcal{F}^x$ such that $f(e)=0$.
By the complementary slackness conditions, we have $y(e)=0$.
Therefore, $x_\pi(e)\geq y(e)=0$.

$(iii)\Rightarrow (i)$.
Let $\pi$ satisfy the conditions in (iii), and let $f\in \mathcal{F}^{x^*}$.
We prove that $f\in \mathcal{F}^x$.
Let $y\in\mathbb{R}^{|E|}$ be defined by $y(e)=\min\{0,x_\pi(e)\}$.
If $f(e)=1$, then $e\in M\cup A$.
Therefore, $y(e)-x_\pi(e)=0$.
If $f(e)=0$, then $e\in A\cup R$.
Therefore, $y(e)=0$.
Thus, $f$ and $(\pi,y)$ satisfy the complementary slackness conditions, i.e., $f\in \mathcal{F}^x$.
\end{proof}

We next prove the structural result on allocations in the least core.

\begin{lemma} \label{lem:lscore-alloc}
Let $x\in P_1(\epsilon_1)$.
We have
\begin{enumerate}[(i)]
\item $x(e)=0$, $\forall e\in R\cap N$;
\item $M\subseteq N^0$.
\end{enumerate}
\end{lemma}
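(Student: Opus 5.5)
The plan is to argue by contradiction in both parts. We perturb the relative interior point $x^*$, or an arbitrary $x\in P_1(\epsilon_1)$, so that every currently tight flow constraint becomes strict. Since $\mathcal{F}$ is finite, this would allow $\epsilon$ to be raised, contradicting the optimality of $\epsilon_1$. By Lemma~\ref{lem:ls-core-non-neg}, $\epsilon_1\le 0$. The perturbation argument needs $\epsilon_1<0$, so each part splits into the cases $\epsilon_1<0$ and $\epsilon_1=0$. Throughout, we use the description of $P_1$ from Lemma~\ref{lem:ls-lp-replace}: $x\ge 0$, $x(N)=v(N)$, and $x(f)\ge c(f)+\epsilon$ for all $f\in\mathcal{F}$.

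For (ii), suppose $e\in M\cap N$, so every universal flow, i.e.\ every flow in $\mathcal{F}^{x^*}$, uses $e$.
\begin{itemize}
\item If $\epsilon_1=0$: the zero flow $f=0$ satisfies $x(0)=0=c(0)+\epsilon_1$ for every $x$. Hence it is universal and does not use $e$, a contradiction.
\item If $\epsilon_1<0$: set $x'=(1-\delta)x^*+\delta v(N)\chi_e$ for small $\delta>0$. Then $x'\ge 0$ and $x'(N)=v(N)$. For $f\in\mathcal{F}^{x^*}$ we have $f(e)=1$, so
$$x'(f)\ge (1-\delta)(c(f)+\epsilon_1)+\delta v(N)=c(f)+\epsilon_1+\delta\bigl(v(N)-c(f)-\epsilon_1\bigr).$$
Since $c(f)\le v(N)$ and $\epsilon_1<0$, the last bracket is positive, so these constraints become strict. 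Flows outside $\mathcal{F}^{x^*}$ are strict at $x^*$ and stay strict for $\delta$ small. Finiteness of $\mathcal{F}$ then gives a feasible $\epsilon>\epsilon_1$, a contradiction.
\end{itemize}

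For (i), let $x\in P_1(\epsilon_1)$ and suppose $x(e)>0$ for some $e\in R\cap N$.
\begin{itemize}
\item If $\epsilon_1=0$, we are in the core. Take a maximum flow $f\in\mathcal{F}$ with $c(f)=v(N)$. Then $x(f)\ge v(N)=x(N)$, while $x\ge 0$ and $f\in\{0,1\}^E$ give $x(f)\le x(N)$. Hence $x(f)=x(N)$, so $f$ is $x$-tight and, because $x(e)>0$, we must have $f(e)=1$. The same argument applies at $x^*$. Since $x(e)>0$ for some point of $P_1(\epsilon_1)$, the inequality $x(e)\ge 0$ is not an implicit equality, so the relative interior point satisfies $x^*(e)>0$. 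Therefore $e$ lies in an $x^*$-tight flow, contradicting $e\in R$.
\item If $\epsilon_1<0$: again $x^*(e)>0$. Set $x'=x^*-\delta\chi_e+\frac{\delta}{|N|-1}\chi_{N\setminus\{e\}}$. The case $|N|=1$ is trivial, since then $x(e)=v(N)$ and the maximum flow argument applies. Every $f\in\mathcal{F}^{x^*}$ avoids $e$, so $x'(f)\ge x^*(f)$, with strict inequality whenever $f_N\neq 0$. A flow with $f_N=0$ uses only public arcs, so $c(f)=0$ because $N$ is an $s$-$t$ cut. Then $x(f)-c(f)=0>\epsilon_1$, so such a flow is never tight. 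Hence all tight constraints become strict, and we obtain the same contradiction.
\end{itemize}

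The main obstacle is the boundary case $\epsilon_1=0$, where strict improvement is impossible. As shown above, it is handled separately, by the zero-flow observation for (ii) and by the maximum-flow tightness argument for (i).
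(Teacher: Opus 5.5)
Your proof is correct, but it takes a different route from the paper.

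\textbf{Your route.} For $\epsilon_1<0$ you contradict the optimality of $\epsilon_1$ by perturbing $x^*$ so that every $x^*$-tight flow constraint becomes strict. For $\epsilon_1=0$ you argue directly from the definitions of $M$ and $R$:
\begin{itemize}
\item the zero flow is universal, which gives $M=\emptyset$ outright;
\item every maximum flow is universal and must use every arc of positive allocation.
\end{itemize}

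\textbf{The paper's route.} It never leaves $P_1(\epsilon_1)$ and contradicts universality rather than optimality. For (i), it first establishes $N\setminus R\neq\emptyset$. It then shifts allocation away from $e_1$ at $\bar x=\frac12(x+x^*)$ by the step $\delta(\bar x,e_1)$ of \eqref{eq:delta}. This keeps the point in the least core but makes a universal flow through some $e_0\in N\setminus R$ non-tight. For (ii), it first shows $x=0$ on $A\cap N$ with the same shifting device. It then decomposes a universal flow into paths and cycles, deduces $\epsilon_1=0$, and uses $v(N)\ge 2$ to exhibit a tight path flow that is not universal.

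\textbf{Comparison.} Your argument is considerably shorter. It avoids flow decomposition and the $A\cap N$ detour, and it does not use the hypothesis $v(N)\ge 2$, so it also covers $v(N)=1$. The paper's shifting device has the advantage of being insensitive to the sign of $\epsilon_1$, since it appeals only to universality and not to the optimal value.

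\textbf{Minor points.}
\begin{itemize}
\item A flow supported on $N^0$ satisfies only $c(f)\le 0$, not $c(f)=0$, because a public $t$--$s$ path carries negative value. Since $x(f)-c(f)\ge 0>\epsilon_1$ still holds, nothing breaks.
\item In (i) you should take $\delta\le x^*(e)$ so that $x'\ge 0$.
\item The $|N|=1$ remark is vacuous: $\epsilon_1<0$ already forces $|N|\ge 2$.
\end{itemize}
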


\begin{proof}
To prove (i), we assume there exists $e_1\in R\cap N$ such that $x(e_1)>0$.
We first prove that $N\setminus R\neq \emptyset$.
Otherwise, $N\subseteq R$.
Therefore, if $f\in \mathcal{F}^{x^*}$, then $f(e)=0$, $\forall e\in N$.
Thus, zero flow is a universal flow and $\epsilon_1=0$.
However, if $f^*$ is a maximum flow of $D$, we have $v(N)=x(N)\geq x(f^*)\geq c(f^*)=v(N)$, i.e., $f^*$ is a universal flow.
Since $N$ is an $s$-$t$ cut and $c(f^*)=v(N)>0$, $f^*$ uses some arc in $N$.
This contradicts $N\subseteq R$.
We next define $\delta(x,e)$ as the amount by which we transfer the allocation of $x$ on $e$ to $N\setminus R$ while maintaining the least core allocation.
If there exists $f\in \mathcal{F}\setminus \mathcal{F}^{x^*}$ such that $f(e)=1$, then we define
\begin{equation}\label{eq:delta}
\delta(x,e)=\min\{x(e),\min\{x(f)-c(f)\mid f\in \mathcal{F}\setminus \mathcal{F}^{x^*},~f(e)=1\}-\epsilon_1\},
\end{equation}
otherwise, we define $\delta(x,e)=x(e)$.
Let $\bar{x}=\frac{1}{2}(x+x^*)$.
Then $\bar{x}\in P_1(\epsilon_1)$.
Next, we show that $\delta(\bar{x},e_1)>0$.
If there is no $f \in \mathcal{F}\setminus\mathcal{F}^{x^*}$ such that $f(e_1)=1$, then, by definition, $\delta(\bar{x},e_1)=\bar{x}(e_1)>0$.
Otherwise, for every $f\in\mathcal{F}\setminus\mathcal{F}^{x^*}$ with $f(e_1)=1$, since $f\notin\mathcal{F}^{x^*}$, we have $x^*(f)-c(f)>\epsilon_1$.
On the other hand, since $x\in P_1(\epsilon_1)$, Lemma~\ref{lem:ls-lp-replace} implies that $x(f)-c(f)\geq \epsilon_1$.
Hence, $\bar{x}(f)-c(f)>\epsilon_1$.
Since $\mathcal{F}$ is finite, the minimum over all such flows is also strictly larger than $\epsilon_1$.
Together with $\bar{x}(e_1)>0$, this yields $\delta(\bar{x},e_1)>0$.

Let $x'=\bar{x}+\delta(\bar{x},e_1)(\frac{1}{|N|-1}\chi_{N\setminus \{e_1\}}-\chi_{e_1})$.
We prove that $x'\in P_1(\epsilon_1)$.
Clearly, $x'(e)\geq 0$ for all $e\in N\setminus\{e_1\}$, $x'(e_1)=\bar{x}(e_1)-\delta(\bar{x},e_1)\geq 0$ and $x'(N)=v(N)$.
For any $f\in \mathcal{F}$ with $f(e_1)=1$, we have $x'(f)-c(f)\geq\bar{x}(f)-c(f)-\delta(\bar{x},e_1)\geq \epsilon_1$.
Otherwise, $x'(f)-c(f)\geq \bar{x}(f)-c(f)\geq \epsilon_1$.
By Lemma~\ref{lem:ls-lp-replace}, $x'\in P_1(\epsilon_1)$.
Choose $e_0\in N\setminus R$ and a universal flow $f$ with $f(e_0)=1$.
We have $x'(f)-c(f)>\bar{x}(f)-c(f)=\epsilon_1$, which contradicts the fact that $f$ is a universal flow.

Next, we prove (ii).
We assume that $M\cap N\neq \emptyset$.
We first prove that $x(e)=0$, $\forall e\in A\cap N$.
Otherwise, assume there exists $e_1\in A\cap N$ such that $x(e_1)>0$.
Let $\delta(x,e)$ be defined as in \eqref{eq:delta}, $\bar{x}=\frac{1}{2}(x+x^*)$, and $x'=\bar{x}+\delta(\bar{x},e_1)(\frac{1}{|M\cap N|}\chi_{M\cap N}-\chi_{e_1})$.
As in the proof of (i), we have $x'\in P_1(\epsilon_1)$.
Since $e_1\in A$, there exists a universal flow $f\in\mathcal F^{x^*}$ such that $f(e_1)=0$.
Thus, $x'(f)-c(f)>\bar{x}(f)-c(f)=\epsilon_1$, which contradicts the fact that $f$ is a universal flow.
Therefore, we have $x(e)=0$, $\forall e\in A\cap N$.

Let $f\in\mathcal{F}^{x^*}$.
According to the flow decomposition theorem, there exist arc-disjoint $(s,t)$-paths $P_1,\ldots,P_{k}$ and directed cycles $C_1,\ldots,C_{k'}$ in $D$ such that $f=\sum_{i=1}^k\chi_{P_i}+\sum_{j=1}^{k'}\chi_{C_j}$.
If $k=0$, then $\epsilon_1=c(f)+\epsilon_1=x^*(f)=x^*(M\cap N)>0$, which contradicts Lemma~\ref{lem:ls-core-non-neg}.
Therefore, $k\geq 1$.
If there is some $P_i$ such that $P_i\cap M=\emptyset$, then $f-\chi_{P_i}$ remains a universal flow by using Lemma~\ref{lem:opt-flow}.
Since no universal flow is a circulation, this deletion process cannot remove all $s$-$t$ paths.
Therefore, we assume $P_i\cap M\neq \emptyset$ for $i=1,\ldots,k$.
If there exists $P_{i_0}$ such that $x^*(P_{i_0})>1$, then $x^*(f)-c(f)=x^*(f-\chi_{P_{i_0}})-c(f-\chi_{P_{i_0}})+x^*(P_{i_0})-1>\epsilon_1$, which contradicts $f\in\mathcal{F}^{x^*}$.
Therefore, $x^*(P_i)\leq 1$ for $i=1,\ldots,k$.
Let $f'=\sum_{i=1}^k\chi_{P_i}$.
Since $\epsilon_1\leq x^*(f')-c(f')\leq x^*(f)-c(f)=\epsilon_1$, we have $\sum_{j=1}^{k'}x^*(C_j)=x^*(f)-x^*(f')=0$.
Besides, $x^*(f')-c(f')=\epsilon_1$, i.e., $f'$ is a universal flow.
By Lemma \ref{lem:lscore-alloc}, $x^*(e)=0$ for all $e\in N\setminus M$.
Therefore, we have
$$
k\leq v(N)=x^*(N)=x^*(M\cap N)=\sum_{i=1}^kx^*(P_i)\leq k.
$$
Comparing the first and last terms of the above chain, we obtain $v(N)=x^*(M\cap N)=k$, i.e., $f'$ is a maximum flow of $D$.
Besides, $x^*(P_i)=1$ and $\epsilon_1=x^*(f')-c(f')=x^*(f')-k=0$.
According to the assumption, $c(f')\geq 2$.
Since $M\cap P_i\neq \emptyset$, it follows that $M\setminus P_i\neq \emptyset$.
Thus, the path flow $\chi_{P_i}$ is not a universal flow.
Thus, we have $0=x^*(P_i)-1>\epsilon_1=0$, a contradiction.
\end{proof}

We next define the following set of potential functions:
$$
\Pi=\left \{\pi\in \mathbb{R}^{|V|}~\middle|~\begin{aligned}&\pi(s)=1,~\pi(t)=0,\\
&\text{diff}(\pi,e)\geq 0,\forall e\in M\cup (A\cap N),\\
&\text{diff}(\pi,e)=0,\forall e\in A\cap N^0,\\
&\text{diff}(\pi,e)\leq 0,\forall e\in R,\\
&\sum_{e\in M}\text{diff}(\pi,e)=-\epsilon_1,\\
&\sum_{e\in A}\text{diff}(\pi,e)=v(N),\\
&\sum_{e\in E}\text{diff}(\pi,e)f^*(e)=c(f^*)\end{aligned}\right \}.
$$
The following theorem states that each potential function in $\Pi$ determines an allocation in the least core.
The proof mainly combines Lemma \ref{lem:ls-core-sen-any} and Lemma \ref{lem:sen-any}.
Since $\Pi$ contains $O(|E|)$ constraints, we have shown that the least core can be characterized by a linear number of constraints.

\begin{theorem} \label{thm:ls-core-des-v>1}
$$
P_1(\epsilon_1)=\left \{x\in\mathbb{R}^{|N|}~\middle|~x(e)=\textup{diff}(\pi,e),~\forall e\in A\cap N,~x(e)=0,~\forall e\in N\setminus A,~\pi\in\Pi\right \}.
$$
\end{theorem}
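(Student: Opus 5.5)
The plan is to reduce everything to Lemma~\ref{lem:ls-core-sen-any} and Lemma~\ref{lem:sen-any}, translating the reduced-cost conditions into conditions on $\text{diff}(\pi,\cdot)$. First record the sign relation between the two notions. For $e\in N$ we have $x_\pi(e)=x(e)-\text{diff}(\pi,e)$, and for $e\in N^0$ we have $x_\pi(e)=-\text{diff}(\pi,e)$.

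Next I would prove a telescoping identity. For every $\pi$ with $\pi(s)=1$, $\pi(t)=0$ and every $f\in\mathcal F$,
$$\sum_{e\in E}\text{diff}(\pi,e)f(e)=\sum_{v}\pi(v)\Bigl(\sum_{\partial^-e=v}f(e)-\sum_{\partial^+e=v}f(e)\Bigr)=c(f).$$
This holds because flow conservation kills every interior vertex, and $\pi(t)=0$ kills $t$. Hence the last constraint of $\Pi$ holds automatically; it does no harm. The identity is also what converts $x(f^*)=c(f^*)+\epsilon_1$ into the $M$-sum constraint.

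\emph{Inclusion $\subseteq$.} Let $x\in P_1(\epsilon_1)$.
\begin{enumerate}[(a)]
\item By Lemma~\ref{lem:ls-core-sen-any}, $\mathcal F^{x^*}\subseteq\mathcal F^x$. Lemma~\ref{lem:sen-any}(iii) then gives a potential $\pi$ with $\pi(s)=1$, $\pi(t)=0$ and the reduced-cost sign conditions.
\item Since $M\subseteq N^0$ (Lemma~\ref{lem:lscore-alloc}(ii)), the condition $x_\pi\le 0$ on $M$ reads $\text{diff}(\pi,e)\ge 0$.
\item On $A\cap N$, the condition $x_\pi=0$ gives $x(e)=\text{diff}(\pi,e)$, and this is $\ge0$ by Lemma~\ref{lem:ls-core-non-neg}. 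On $A\cap N^0$ it gives $\text{diff}(\pi,e)=0$.
\item On $R\cap N$ we have $x(e)=0$ by Lemma~\ref{lem:lscore-alloc}(i), so $x_\pi\ge0$ gives $\text{diff}(\pi,e)\le0$. The same conclusion is immediate on $R\cap N^0$.
\item Since $N\setminus A=R\cap N$, we get $x(e)=0$ on $N\setminus A$.
\item The identity $x(N)=v(N)$ becomes $\sum_{e\in A\cap N}\text{diff}(\pi,e)=v(N)$. Adding the zero terms from $A\cap N^0$ gives $\sum_{e\in A}\text{diff}(\pi,e)=v(N)$.
\item Since $f^*$ is universal, it equals $1$ on $M$ and $0$ on $R$. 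Using the identity,
$$x(f^*)=\sum_{e\in A}\text{diff}(\pi,e)f^*(e)=c(f^*)-\sum_{e\in M}\text{diff}(\pi,e).$$
So $x(f^*)=c(f^*)+\epsilon_1$ gives $\sum_{e\in M}\text{diff}(\pi,e)=-\epsilon_1$. Thus $\pi\in\Pi$.
\end{enumerate}

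\emph{Inclusion $\supseteq$.} Let $\pi\in\Pi$, and define $x(e)=\text{diff}(\pi,e)$ on $A\cap N$ and $x(e)=0$ on $N\setminus A$.
\begin{enumerate}[(a)]
\item $x\ge 0$ by the sign constraint on $A\cap N$.
\item $x(N)=\sum_{e\in A}\text{diff}(\pi,e)=v(N)$, because $\text{diff}(\pi,e)=0$ on $A\cap N^0$.
\item Running the computation in step (g) backwards gives $x(f^*)=c(f^*)+\epsilon_1$.
\item Condition (iii) of Lemma~\ref{lem:sen-any} holds arc by arc. On $M\subseteq N^0$, $x_\pi=-\text{diff}\le0$. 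On $A$, $x_\pi=0$ by construction. On $R$, $x_\pi$ equals either $-\text{diff}$ or $0-\text{diff}$, and both are $\ge0$.
\item Lemma~\ref{lem:sen-any} therefore gives $\mathcal F^{x^*}\subseteq\mathcal F^x$. Lemma~\ref{lem:ls-core-sen-any} then gives $x\in P_1(\epsilon_1)$.
\end{enumerate}

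The main point needing care is the bookkeeping in step (g) and its reverse. One must use that universal flows are $1$ on $M$ and $0$ on $R$. One must also use that $M$ contains no private arcs, so the $M$-part of $\sum\text{diff}\cdot f^*$ contributes nothing to $x(f^*)$. With those two facts in hand, the rest is sign-checking.
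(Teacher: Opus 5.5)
Your proof is correct and follows the same route as the paper: both directions go through Lemma~\ref{lem:ls-core-sen-any} and Lemma~\ref{lem:sen-any}, translating the reduced-cost sign conditions into conditions on $\text{diff}(\pi,\cdot)$ via $M\subseteq N^0$ and $x=0$ on $R\cap N$. The only variation is that you derive $\sum_{e\in M}\text{diff}(\pi,e)=-\epsilon_1$ from the flow-conservation identity $\sum_{e\in E}\text{diff}(\pi,e)f(e)=c(f)$ together with $x(f^*)=c(f^*)+\epsilon_1$, which also correctly shows that the last constraint of $\Pi$ is redundant. The paper instead reads this sum off the dual objective value $y(E)=\epsilon_1$ by complementary slackness.
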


\begin{proof}
Let $P$ denote the set on the right-hand side of the equality, and let $x\in P_1(\epsilon_1)$.
We prove that $x\in P$.
From Lemma \ref{lem:lscore-alloc}, $x(e)=0$ for all $e\in N\setminus A$.
Next, we prove that there exists $\pi\in \Pi$ such that $x(e)=\text{diff}(\pi,e)$ for all $e\in A\cap N$.
By Lemma \ref{lem:sen-any}, there exists $\pi\in\mathbb{R}^{|V|}$ with $\pi(s)=1$ and $\pi(t)=0$.
For $e\in A\cap N$, we have $\text{diff}(\pi,e)=x(e)\ge 0$.

It remains to show that $\pi\in\Pi$.
For $e\in M$, Lemma \ref{lem:sen-any} gives $\text{diff}(\pi,e)\ge 0$.
For any $e\in A\cap N^0$, we have $\text{diff}(\pi,e)=0$.
For any $e\in R$, we have $\text{diff}(\pi,e)\leq 0$.
Therefore $\sum_{e\in A}\text{diff}(\pi,e)=x(A\cap N)=x(N)=v(N)$.
Let $y(e)=\min\{0,x_\pi(e)\}$.
According to the complementary slackness conditions, $(\pi,y)$ is a dual optimal solution of the linear program \eqref{lp:mincost-flow} with $x-c$ as arc costs.
Therefore, $\sum_{e\in M}\text{diff}(\pi,e)=-y(E)=-\epsilon_1$.
Since $f^*$ is a universal flow, we have $x(f^*)=c(f^*)+\epsilon_1$.
Thus, $\sum_{e\in E}\text{diff}(\pi,e)f^*(e)=\sum_{e\in M}\text{diff}(\pi,e)+x(f^*)=c(f^*)$.
Therefore, $\pi\in \Pi$, and $x(e)=\text{diff}(\pi,e)$ for all $e\in A\cap N$.

On the other hand, let $x\in P$.
We prove that $x\in P_1(\epsilon_1)$.
Let $\pi\in \Pi$ satisfy $x(e)=\text{diff}(\pi,e)$, $\forall e\in A\cap N$.
First, $x(N)=\sum_{e\in A\cap N}\text{diff}(\pi,e)=\sum_{e\in A}\text{diff}(\pi,e)=v(N)$.
Moreover, $x(f^*)=\sum_{e\in A}\text{diff}(\pi,e)f^*(e)=\sum_{e\in E}\text{diff}(\pi,e)f^*(e)-\sum_{e\in M}\text{diff}(\pi,e)f^*(e)=c(f^*)+\epsilon_1$.
Next, we prove that $\mathcal{F}^{x^*}\subseteq \mathcal{F}^x$.
For any $e\in M$, we have $x_\pi(e)=\pi(\partial^+e)-\pi(\partial^-e)\leq 0$.
For any $e\in A\cap N$, we have $x_\pi(e)=\pi(\partial^+e)-\pi(\partial^-e)+x(e)=0$.
For any $e\in A\cap N^0$, we have $x_\pi(e)=\pi(\partial^+e)-\pi(\partial^-e)=0$.
For any $e\in R$, we have $x_\pi(e)=\pi(\partial^+e)-\pi(\partial^-e)\geq 0$.
By Lemma \ref{lem:sen-any}, $\mathcal{F}^{x^*}\subseteq \mathcal{F}^x$.
By Lemma \ref{lem:ls-core-sen-any}, we obtain $x\in P_1(\epsilon_1)$.
\end{proof}

\section{A polynomial-time algorithm for the nucleolus}\label{sec:nucleolus-balanced}

In this section, we prove that the nucleolus of a simple flow game can be computed in polynomial time.

\begin{theorem}\label{thm:nucleolus-general}
The nucleolus of a simple flow game can be computed in polynomial time.
\end{theorem}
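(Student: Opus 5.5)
We will follow the sequential LP scheme $(P_1),(P_2),\ldots$ of Maschler et al. and show that each LP can be solved in polynomial time. By the remark in Section~\ref{sec:coop-game}, at most $n$ LPs are needed. The starting point is Lemma~\ref{lem:ls-lp-replace}. Every coalition constraint $x(S)\ge v(S)+\epsilon$ is dominated by the flow constraints $x(f)\ge c(f)+\epsilon$, $f\in\mathcal F$, together with $x(e)\ge 0$. So at every stage we will work with the ``flow'' version of $(P_r)$, whose constraint families are indexed by $\mathcal F$ and by the singletons. We must also check that this version defines the same sequence of polytopes $P_r(\epsilon_r)$. For this we will argue that a coalition $S$ is fixed exactly when the flows ``realizing'' $v(S)$ are fixed. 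If $S\notin\mathrm{Fix}$, then some $f$ with $f_N\le\chi_S$ and $c(f)=v(S)$ is non-fixed, or some $e\in S$ outside the flow support has nonconstant $x(e)$. Then $e(x,S)\ge x(f)-c(f)$, and the minimum non-fixed excess is attained on flows or singletons.

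Concretely, suppose $P_{r-1}(\epsilon_{r-1})$ is available. We will maintain it as an explicit polyhedron given by polynomially many linear equalities and inequalities. Initially this is Theorem~\ref{thm:ls-core-des-v>1} (or Theorem~\ref{thm:ls-core-des-v=1} when $v(N)=1$): a projection of the polytope in $(x,\pi)$-space, with $x(e)=0$ on $N\setminus A$ and $x(e)=\mathrm{diff}(\pi,e)$ on $A\cap N$. Next we compute the fixed structure. Generalizing Lemma~\ref{lem:inter-point}, we obtain a relative interior point $\hat x$ by averaging $O(n)$ points obtained by optimizing $\pm$ coordinates over the explicit polytope. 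The affine hull then tells us which coordinates, and hence which linear functionals $x(f)$, are constant. A flow $f$ is fixed iff $\chi_{f_N}$ lies in the span of the equality directions of the affine hull. The next LP $(P_r)$ maximizes $\epsilon$ subject to $x\in P_{r-1}(\epsilon_{r-1})$ and to $x(f)-c(f)\ge\epsilon$ for non-fixed $f$, plus the analogous singleton constraints. We solve it with the ellipsoid method, so we need a separation oracle. Given $(x,\epsilon)$, we must find a flow $f\in\mathcal F$ minimizing $x(f)-c(f)$ among flows whose restriction $f_N$ is not in the linear span $L$ of fixed directions.

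This constrained minimum-cost-flow problem is the main obstacle. Our plan is to use the structure of the optimal face: as in Lemmas~\ref{lem:opt-flow}--\ref{lem:sen-any}, tightness on $P_{r-1}$ is described by a partition $(M,A,R)$ together with potentials. Flows that are universal for $P_{r-1}$ are exactly the integer points of the face $\{f(e)=1 \text{ on } M,\ f(e)=0 \text{ on } R\}$. We will show that a flow is fixed iff its value $x(f)$ is determined by the current equality system. After the first stage, we expect this to reduce to: $f$ is fixed iff $f_N$ restricted to the still-free arcs has a prescribed form. An example of such a form is using free arcs only through whole ``blocks'' of arcs whose allocations are tied together.

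The oracle then works as follows. For each free arc $e$, it computes a min-cost flow forced to use $e$, or forced to avoid $e$, with costs $x-c$. It does this for polynomially many such choices and returns the cheapest result whose restriction leaves $L$; the candidates are checked by linear algebra. To justify that this enumeration suffices, we will prove an exchange lemma. Suppose the minimizing non-fixed flow $f$ differs from a fixed flow on its support. Then, after fixing one arc decision, it can be replaced by a min-cost flow that is still non-fixed and no more expensive. The replacement uses the cycle decomposition of $f-g$ for a fixed optimal $g$, as in the proof of Lemma~\ref{lem:lscore-alloc}. Some cycle of this decomposition must change a non-fixed functional, and adding it alone keeps costs no larger.

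Given the oracle, each $(P_r)$ is solved in polynomial time by the equivalence of separation and optimization~\cite{GM12}. We then describe $P_r(\epsilon_r)$ explicitly as the optimal face, which is obtained by adding the tight non-fixed constraints. These are identified by the same min-cost-flow face computation used in Lemma~\ref{lem:MAR-x}, which yields a new $(M,A,R)$-type description. The dimension strictly drops at each stage, so after at most $n$ rounds the polytope is a singleton, and that point is the nucleolus.
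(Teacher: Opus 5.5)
\textbf{The gap is in the separation oracle for non-fixed flows.}

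Your outer framework matches the paper's in outline: sequential LPs solved via separation, with coalition constraints replaced by flow-type constraints. Your cycle-decomposition step is also correct. If $g$ is a minimum cost flow with $g_N\in L$ and $f_N\notin L$, then $f-g$ splits into residual cycles and $s$-$t$/$t$-$s$ paths. One of these, $W$, has $\lambda'(W)\notin L$, and augmenting $g$ along $W$ costs no more than $f$. This is exactly the reduction in Theorem~\ref{thm:L-flow-poly}.

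The gap is your next step, ``after fixing one arc decision, it can be replaced by a min-cost flow that is still non-fixed.'' Forcing an arc $e$ to flip returns $g$ augmented along a \emph{cheapest} residual cycle or path $W_e$ through $e$. Nothing makes $W_e$ nontrivial, because whether $\lambda'(W)\in L$ depends on all of $W$, not on any single arc.

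Concretely, take non-terminal vertices $u,v,z$ and arcs $a_1=(u,v)$, $a_2=(v,z)$, $a_3=(z,u)$, $b_1=(v,u)$, $b_2=(z,v)$, $b_3=(u,z)$. Give each $a_i$ cost $11/30$ and each $b_i$ cost $19/30$. Use labels $\chi_{a_i},\chi_{b_i}$ and $L=\operatorname{span}\{\chi_{a_i}+\chi_{b_i}\mid i=1,2,3\}$, which contains the all-ones vector. Then:
\begin{itemize}
\item The zero flow is optimal.
\item The cheapest $L$-nontrivial flow is the triangle $a_1a_2a_3$, of cost $11/10$.
\item For every arc, the unique cheapest flow using it is a $2$-cycle $\{a_i,b_i\}$ of cost $1$, which is $L$-trivial.
\item Forcing an arc to $0$ returns the zero flow.
\end{itemize}
So all $2|E|$ candidates of your oracle are $L$-trivial, and it would wrongly report that no non-fixed flow exists. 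Your exchange argument uses no special structure of $L$, and the expectation that fixedness reduces to ``blocks'' of free arcs is not established. Since everything else rests on this oracle, the polynomial-time claim is not proved.

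\textbf{The missing idea} is a shortest-path search that tracks the label along the whole walk. The paper does this in Lemma~\ref{lem:residual-L-flow}:
\begin{itemize}
\item Write $L=\ker H$, so the integral labels $H\lambda'$ on residual arcs detect nontriviality.
\item Reweight residual costs by shortest-path potentials so they become nonnegative. Then a shortest closed walk or path can be split into a path and cycles without increasing cost, and one of these pieces keeps a nonzero label.
\item Compute shortest walks in the product graphs $V\times\mathbb Z_{p_j}$, for every row of $H$ and each of the first $\lceil\log_2(Z+1)\rceil$ primes, ending at a copy with nonzero residue.
\end{itemize}

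\textbf{Three further points need repair.}
\begin{itemize}
\item The coalitions not captured by non-fixed flows contribute terms $x(e)+x(f)-c(f)$ with $f(e)=0$ and $\chi_e\notin L$ (the paper's $B(x,L)$), not singleton excesses. Singleton constraints alone under-constrain, because $x(f)-c(f)=\epsilon_1$ can be negative for a universal flow avoiding $e$.
\item Optimizing $\pm$ each coordinate does not recover the affine hull. For the triangle with vertices $(0,0)$, $(1,1)$, $(3/5,2/5)$, all four coordinate optimizers lie on the line $x_1=x_2$. You need to optimize directions orthogonal to the current span, as in Lemma~\ref{lem:direction-recovery}.
\item $P_r(\epsilon_r)$ is not obtained by adding only the tight constraints. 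Those determine its affine hull, but all inequalities $e(x,S)\ge\epsilon_r$ for non-fixed $S$ remain. Also, flows tight at level $\epsilon_r$ are in general not minimizers of $x-c$, so Lemma~\ref{lem:MAR-x} yields no explicit $(M,A,R)$-type description. The paper avoids this by carrying strong separation oracles and facet-complexity bounds through the recursion (Lemma~\ref{lem:Delta-recursion}).
\end{itemize}
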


We prove this theorem in three steps.
Subsection~\ref{subsec:reduce} reduces every linear program $(P_r)$ with $r\geq2$ to computing the minimum excess over the coalitions not fixed on $P_{r-1}(\epsilon_{r-1})$.
Subsection~\ref{subsec:L-flow} establishes a polynomial-time algorithm for the minimum cost $L$-nontrivial flow problem, which plays a key role in computing the nucleolus.
Finally, Subsection~\ref{subsec:delta} uses this algorithm to compute $\Delta(x,L)$ and completes the proof of Theorem~\ref{thm:nucleolus-general}.
For a polytope $Q\subseteq\mathbb R^d$, a \emph{strong separation oracle} takes $y\in\mathbb Q^d$ and either certifies that $y\in Q$ or returns $(a,b)\in\mathbb Q^d\times\mathbb Q$ such that $a^\top z\leq b<a^\top y$ for every $z\in Q$.
For a positive integer $\varphi$, we say that a polyhedron $Q\subseteq\mathbb R^d$ has \emph{facet complexity at most $\varphi$} if there exists a system of rational linear inequalities with solution set $Q$ in which every inequality has encoding length at most $\varphi$~\cite{GM12}.

\subsection{Reduction of the nucleolus problem to computing $\Delta(x,L)$}\label{subsec:reduce}

For a non-empty polytope $P$ of allocations, let $L(P)=\operatorname{span}\{x-y\mid x,y\in P\}^{\perp}$.
Then $S\in\text{Fix}(P)$ if and only if $\chi_S\in L(P)$, and $\chi_N\in L(P)$ because $x(N)=v(N)$ for every $x\in P$.
For $x\in\mathbb R^N$ and a subspace $L\subseteq\mathbb R^N$, let
\begin{equation}
\Delta(x,L)=\min_{\substack{\emptyset\neq S\subset N\\ \chi_S\notin L}}e(x,S).
\label{eq:Delta-separator}
\end{equation}
Hence $\Delta(x,L(P))$ is the minimum excess over the coalitions not fixed on $P$.

\begin{lemma}\label{lem:direction-recovery}
Let $P\subseteq\mathbb R^N$ be a non-empty polytope with facet complexity at most $\varphi$.
Given a polynomial-time strong separation oracle for $P$, a rational basis of $L(P)$ can be computed in time polynomial in $|N|$ and $\varphi$.
\end{lemma}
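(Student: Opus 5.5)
We will reduce the computation of $L(P)$ to finding the affine hull of $P$, using the ellipsoid-based machinery of Grötschel, Lovász and Schrijver~\cite{GM12}. By definition, $L(P)$ is the orthogonal complement of the linear space parallel to $\operatorname{aff}(P)$. Equivalently, it is the set of vectors $a$ such that $a^\top x$ is constant on $P$. It therefore suffices to compute a system of equations $Ax=b$ with $\operatorname{aff}(P)=\{x\mid Ax=b\}$. The rows of $A$ then span $L(P)$, and a basis follows by Gaussian elimination over the rationals. Gaussian elimination is polynomial in the encoding length of $A$.

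The key step is to invoke the theorem in~\cite{GM12} (Theorem 6.5.5 there) for well-described polyhedra $(P;d,\varphi)$ given by a strong separation oracle. It states that one can, in oracle-polynomial time in $d$ and $\varphi$, do three things:
\begin{enumerate}[(i)]
\item decide whether $P$ is empty;
\item find affinely independent vertices $x_0,\ldots,x_m$ of $P$ with $m=\dim P$;
\item find linearly independent equations $a_i^\top x=b_i$, $i=1,\ldots,d-m$, valid for $P$.
\end{enumerate}
Together, the vertices and the equations certify $\operatorname{aff}(P)=\operatorname{aff}\{x_0,\ldots,x_m\}=\{x\mid a_i^\top x=b_i\ \forall i\}$. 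Here $d=|N|$, and the hypotheses of Lemma~\ref{lem:direction-recovery} match exactly: $P$ is non-empty, its facet complexity is at most $\varphi$, and a polynomial-time strong separation oracle is given. The oracle-polynomial algorithm then runs in time polynomial in $|N|$ and $\varphi$. The output has polynomial encoding length, since vertices of a polyhedron of facet complexity $\varphi$ have vertex complexity $O(d^2\varphi)$.

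From the affinely independent vertices we compute $L(P)$ directly as the null space of the matrix whose rows are $x_i-x_0$, $i=1,\ldots,m$. We have $\operatorname{span}\{x-y\mid x,y\in P\}=\operatorname{span}\{x_i-x_0\}$, because the $x_i$ span $\operatorname{aff}(P)$. A rational basis of this null space, of dimension $d-m$, is obtained by Gaussian elimination in polynomial time. The equations $a_i$ provide an alternative route or a consistency check.

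The main obstacle is making sure the cited GLS result applies in the precise form needed. In particular, the strong separation oracle may return separating hyperplanes $(a,b)$ of arbitrary encoding length, so we may need to assume the oracle outputs are polynomially bounded, which is standard in the GLS framework. If we wanted a self-contained argument instead, we would build the vertices greedily. Starting from one vertex found by optimizing a generic objective, we would repeatedly pick a vector $c$ orthogonal to the current affine directions and optimize $\pm c$ over $P$ via the ellipsoid method. Whenever the two optimal values differ, we add a new vertex. When they coincide for every $c$ in an orthogonal basis, the affine hull has been found. Optimization is available through the equivalence of separation and optimization, and exact rational vertices are recovered via simultaneous Diophantine approximation using the facet complexity bound.
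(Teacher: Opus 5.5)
Your proposal is correct, but your primary route differs from the paper's. You invoke GLS Theorem 6.5.5 as a black box. That theorem already returns $\dim P+1$ affinely independent vertices and $|N|-\dim P$ linearly independent valid equations.

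The paper instead proves the lemma directly, by the incremental procedure you sketch as your self-contained fallback. It maintains a subspace $W\subseteq\operatorname{span}\{x-y\mid x,y\in P\}$ and takes a rational basis of $W^\perp$. It maximizes and minimizes each basis vector's linear form over $P$, using only the separation--optimization equivalence. Whenever the two optimal values differ, it adds the difference of the maximizer and minimizer to $W$. After at most $|N|$ rounds this gives $L(P)=W^\perp$. The paper adds difference vectors to a linear span rather than vertices to an affine hull, but this is the same argument.

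Your citation buys brevity. The normals of the equations it returns already form a basis of $L(P)$ outright, since they are $|N|-\dim P$ linearly independent vectors in a space of exactly that dimension, so your null-space step is unnecessary. The paper's argument buys self-containment: it needs nothing beyond strong optimization over $P$, which is essentially the standard argument behind Theorem 6.5.5 anyway.

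Your worry about separators of arbitrary encoding length is moot here. The oracle is assumed to run in polynomial time, so its answers have encoding length polynomial in the query.
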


\begin{proof}
We construct a rational basis of $\operatorname{span}\{x-y\mid x,y\in P\}$ incrementally.
Initialize $W=\{0\}$ and maintain $W\subseteq\operatorname{span}\{x-y\mid x,y\in P\}$ by a rational basis.
Given this basis, a rational basis $a_1,\ldots,a_k$ of $W^\perp$ can be computed by Gaussian elimination.
For every $i$, a rational maximizer $x_i^*$ and minimizer $y_i^*$ of $a_i^\top x$ over $P$ can be computed in time polynomial in $|N|$ and $\varphi$ from the strong separation oracle and the facet-complexity bound~\cite{GM12}.
If $a_i^\top x_i^*>a_i^\top y_i^*$ for some $i$, then $a_i^\top(x_i^*-y_i^*)>0$.
Since $a_i\in W^\perp$, the vector $x_i^*-y_i^*$ lies outside $W$ and is added to its basis.
Thus every iteration increases $\dim W$.
If $a_i^\top x_i^*=a_i^\top y_i^*$ for every $i$, then $a_i^\top(x-y)=0$ for every $i$ and every $x,y\in P$.
Hence every $x-y$ lies in $W$, i.e., $W=\operatorname{span}\{x-y\mid x,y\in P\}$.
Since the construction runs in time polynomial in $|N|$ and $\varphi$, a rational basis of $L(P)=W^\perp$ can therefore be computed within the claimed time.
\end{proof}

The following lemma gives a sufficient condition under which the nucleolus can be computed in polynomial time.

\begin{lemma}\label{lem:Delta-recursion}
Suppose that the following conditions hold:
\begin{enumerate}[(i)]
\item $P_1(\epsilon_1)$ has a polynomial-time strong separation oracle;
\item Every coalition value $v(S)$ is rational and has encoding length at most $\varphi$;
\item Given $x\in P_1(\epsilon_1)\cap\mathbb Q^N$ and a rational basis of a subspace $L\subseteq\mathbb R^N$ containing $\chi_N$, $\Delta(x,L)$ and, when finite, a minimizing coalition can be computed in polynomial time.
\end{enumerate}
Then the nucleolus can be computed in time polynomial in $|N|$ and $\varphi$.
\end{lemma}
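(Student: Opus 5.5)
Our strategy is to run the standard sequential LP scheme $(P_1),(P_2),\ldots$ implicitly. Each $(P_r)$ is treated as an LP over a polytope given by a separation oracle, and the ellipsoid method (optimization $\Leftrightarrow$ separation, \cite{GM12}) is applied at every stage. We maintain the invariant that after stage $r$ we hold a polynomial-time strong separation oracle for $P_r(\epsilon_r)$, a rational basis of $L_r:=L(P_r(\epsilon_r))$, and a facet-complexity bound for $P_r(\epsilon_r)$ that is polynomial in $|N|$ and $\varphi$.

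\textbf{Base case.} Hypothesis (i) supplies the oracle for $P_1(\epsilon_1)$. For the facet complexity, note that $P_1(\epsilon_1)$ is defined by inequalities $x(S)\ge v(S)+\epsilon_1$ and $x(N)=v(N)$. Here $\epsilon_1$ is a vertex value of $(P_1)$, so by Cramer's rule and (ii) its encoding length is polynomial in $|N|$ and $\varphi$. Every such inequality therefore has polynomially bounded encoding length. Lemma~\ref{lem:direction-recovery} then yields a basis of $L_1$.

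\textbf{Inductive step.} Write $P=P_{r-1}(\epsilon_{r-1})$ and $L=L_{r-1}$. Since $S\in\mathrm{Fix}(P)$ iff $\chi_S\in L$, the program $(P_r)$ is
\[
\max\{\epsilon \mid x\in P,\ x(S)\ge v(S)+\epsilon \ \ \forall S:\chi_S\notin L\}.
\]
We separate a candidate $(x,\epsilon)$ in two steps.
\begin{enumerate}[(a)]
\item Call the oracle for $P$.
\item If $x\in P$, compute $\Delta(x,L)$ by (iii); this applies because $x\in P\subseteq P_1(\epsilon_1)$ and $\chi_N\in L$. If $\Delta(x,L)<\epsilon$, the minimizing coalition $S$ gives the violated inequality $x(S)-\epsilon\ge v(S)$. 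If $\Delta(x,L)=+\infty$, then $P$ is a singleton and we stop.
\end{enumerate}
This separation oracle lets us compute $\epsilon_r$, and again $\epsilon_r$ has polynomial encoding length by Cramer's rule. Applying the same procedure with $\epsilon$ fixed to $\epsilon_r$ gives a strong separation oracle for $P_r(\epsilon_r)$. Its facet complexity is bounded because it is described by inequalities of the form $x(S)\ge v(S)+\epsilon_r$ together with those for $P$. Lemma~\ref{lem:direction-recovery} then produces $L_r$.

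\textbf{Termination and output.} The dimension strictly drops at each stage, so $P_r(\epsilon_r)$ becomes a singleton after at most $n$ stages. At that point $L_r=\mathbb R^N$, and we recover the nucleolus by optimizing any linear objective over $P_r(\epsilon_r)$ with the oracle.

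\textbf{Main obstacle.} The delicate point is keeping the encoding lengths polynomial across all $n$ stages. Facet complexity must not accumulate from one stage to the next; otherwise the later ellipsoid calls are no longer polynomial. We address this by describing every $P_r(\epsilon_r)$ explicitly through original coalition inequalities $x(S)\ge v(S)+\epsilon_j$ for $j\le r$ and $x(N)=v(N)$. Each $\epsilon_j$ is a vertex coordinate of an LP whose constraint rows are $0/1$ vectors with $\pm1$ coefficients on $\epsilon$ and right-hand sides $v(S)$ or earlier $\epsilon_i$. It remains to argue that each $\epsilon_j$ has polynomial size. The cleanest route is to show that $\epsilon_j$ is determined by a nonsingular system involving only the $v(S)$ and $0/1$ coefficients, eliminating the dependence on the earlier $\epsilon_i$, so that the size of each $\epsilon_j$ is bounded independently of $r$.
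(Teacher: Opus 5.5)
Your outline is the paper's proof. Both run the same stage-by-stage induction, maintaining a strong separation oracle, a facet-complexity bound and, via Lemma~\ref{lem:direction-recovery}, a basis of $L_r$. In both, $(x,\epsilon)$ is separated by first calling the oracle for $P_{r-1}(\epsilon_{r-1})$ and then using $\Delta(x,L_{r-1})$ and its minimizing coalition (the paper calls this region $H_{r-1}$), and fixing $\epsilon=\epsilon_r$ then gives the oracle for $P_r(\epsilon_r)$.

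\textbf{The open step.} You leave one step open: a uniform size bound on the $\epsilon_j$. The paper does no more here than the sentence in your inductive step: coefficients lie in $\{-1,0,1\}$, right-hand sides are coalition values and earlier $\epsilon_j$, hence $\epsilon_r$ has polynomial size. Your worry is justified. With right-hand sides of size $s$, Cramer's rule only gives a bound of order $|N|\,s$, and iterating this over up to $|N|$ stages gives an exponential bound. Your proposed fix works, and can be completed as follows.

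\textbf{Completing the fix.} Let $R\le|N|$ be the last stage. Let $\mathcal T_j$ be the coalitions not fixed on $P_{j-1}(\epsilon_{j-1})$ (all $S\subset N$ when $j=1$) whose constraint is tight on all of $P_j(\epsilon_j)$. Consider the system $x(N)=v(N)$ and $x(S)-\epsilon_j=v(S)$ for $S\in\mathcal T_j$, $j\le R$, in the unknowns $(x,\epsilon_1,\ldots,\epsilon_R)$. It is solved by the nucleolus together with the $\epsilon_j$, and this solution is unique.

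To see uniqueness, take $(d,\delta)$ in the kernel and a relative interior point $x^{(1)}$ of $P_1(\epsilon_1)$.
\begin{itemize}
\item Every coalition constraint of $P_1(\epsilon_1)$ outside $\mathcal T_1$ is strict at $x^{(1)}$. So $(x^{(1)}\pm td,\ \epsilon_1\pm t\delta_1)$ is feasible for $(P_1)$ for small $t>0$, which forces $\delta_1=0$.
\item Hence $d$ is parallel to $\operatorname{aff}P_1(\epsilon_1)$, which is cut out by these implicit equalities. In particular $d(S)=0$ for every $S$ fixed on $P_1(\epsilon_1)$.
\item Since $\epsilon_2>\epsilon_1$, the same perturbation of a relative interior point of $P_2(\epsilon_2)$ stays in $P_1(\epsilon_1)$ and forces $\delta_2=0$. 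Continue in the same way through all stages.
\item At the end $d$ is parallel to the singleton $P_R(\epsilon_R)$, so $d=0$.
\end{itemize}

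A nonsingular square subsystem therefore has at most $2|N|$ rows, entries in $\{0,\pm1\}$ and right-hand sides $v(S)$. By Cramer's rule, every $\epsilon_j$ has size bounded by a polynomial in $|N|$ and $\varphi$ that does not depend on $j$. Each inequality describing $P_r(\epsilon_r)$ or the feasible region of $(P_r)$ involves a single $\epsilon_j$. So the facet complexity stays uniformly bounded across stages, which is exactly what your induction needs.
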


\begin{proof}
We first bound the encoding length of $\epsilon_r$.
By condition~(ii), every coalition value is rational and has encoding length at most $\varphi$.
Since the constraint coefficients of $(P_1)$ belong to $\{-1,0,1\}$, the encoding length of $\epsilon_1$ is polynomially bounded by $|N|$ and $\varphi$.
Suppose that the encoding length of $\epsilon_j$ is polynomially bounded by $|N|$ and $\varphi$ for any $j< r$.
The constraint coefficients of $(P_r)$ belong to $\{-1,0,1\}$, and their right-hand sides consist of coalition values and previously computed values $\epsilon_j$.
Thus the encoding length of $\epsilon_r$ is polynomially bounded by $|N|$ and $\varphi$.
Hence every $P_r(\epsilon_r)$ has facet complexity polynomially bounded in $|N|$ and $\varphi$.

We next prove that a polynomial-time strong separation oracle for $P_r(\epsilon_r)$ and a rational basis of $L(P_r(\epsilon_r))$ can be computed in time polynomial in $|N|$ and $\varphi$.
For $r=1$, condition~(i), the facet-complexity bound of $P_1(\epsilon_1)$, and Lemma~\ref{lem:direction-recovery} give a rational basis of $L(P_1(\epsilon_1))$ within the claimed time.
Suppose that $r\geq2$ and we have an oracle for $P_{r-1}(\epsilon_{r-1})$ and a rational basis of $L_{r-1}=L(P_{r-1}(\epsilon_{r-1}))$.
If $P_{r-1}(\epsilon_{r-1})$ is a singleton, then the recursion terminates.
Otherwise, $L_{r-1}\neq\mathbb R^N$.
Since $\{\chi_e\mid e\in N\}$ spans $\mathbb R^N$, some $\chi_e$ does not belong to $L_{r-1}$, so $\Delta(x,L_{r-1})$ is finite for every $x\in P_{r-1}(\epsilon_{r-1})$.

We next compute $\epsilon_r$.
Since $\chi_S\notin L_{r-1}$ if and only if $S\notin\text{Fix}(P_{r-1}(\epsilon_{r-1}))$, we have
$$
\epsilon_r=\max_{x\in P_{r-1}(\epsilon_{r-1})}\Delta(x,L_{r-1})
$$
and
$$
P_r(\epsilon_r)=\{x\in P_{r-1}(\epsilon_{r-1})\mid\Delta(x,L_{r-1})\geq\epsilon_r\}.
$$
To compute $\epsilon_r$, consider the polyhedron
$$
H_{r-1}=\{(y,\xi)\mid y\in P_{r-1}(\epsilon_{r-1}),~\xi\leq\Delta(y,L_{r-1})\}.
$$
A strong separation oracle for $H_{r-1}$ can be constructed from the oracle for $P_{r-1}(\epsilon_{r-1})$ by induction.
Given a rational query $(x,\eta)$, we first separate $x$ from $P_{r-1}(\epsilon_{r-1})$.
If the oracle returns a separator $(a,b)$, then $((a,0),b)$ separates $(x,\eta)$ from $H_{r-1}$.
Otherwise, $x\in P_{r-1}(\epsilon_{r-1})$ and $\chi_N\in L_{r-1}$, so condition~(iii) computes $\Delta(x,L_{r-1})$ and a minimizing coalition $S$.
If $\eta\leq\Delta(x,L_{r-1})$, then $(x,\eta)\in H_{r-1}$.
Otherwise, the inequality $\xi\leq y(S)-v(S)$ is valid for $H_{r-1}$ and separates $(x,\eta)$.
Notice that $H_{r-1}$ is described by the inequalities defining $P_{r-1}(\epsilon_{r-1})$ together with $\xi-\chi_S^\top y\leq-v(S)$, $\forall \chi_S\notin L_{r-1}$.
By using the bound of facet complexity for $P_{r-1}(\epsilon_{r-1})$ and condition~(ii), the facet complexity of $H_{r-1}$ is polynomially bounded in $|N|$ and $\varphi$.
Therefore we can compute $\epsilon_r$ in time polynomial in $|N|$ and $\varphi$ by using separation--optimization equivalence~\cite{GM12}.

We next construct a strong separation oracle for $P_r(\epsilon_r)$.
Given a rational query $x$, we first separate it from $P_{r-1}(\epsilon_{r-1})$.
If the oracle returns a separator, the returned inequality remains valid for $P_r(\epsilon_r)$.
Otherwise, $x\in P_{r-1}(\epsilon_{r-1})$, so we can  computes $\Delta(x,L_{r-1})$ and a minimizing coalition $S$ by condition~(iii).
If $\Delta(x,L_{r-1})\geq\epsilon_r$, then $x\in P_r(\epsilon_r)$.
If $\Delta(x,L_{r-1})<\epsilon_r$, then $-\chi_S^\top y\leq-v(S)-\epsilon_r<-\chi_S^\top x$ for every $y\in P_r(\epsilon_r)$, so $(-\chi_S,-v(S)-\epsilon_r)$ is a strong separator.
The facet-complexity bound and Lemma~\ref{lem:direction-recovery} then give a rational basis of $L(P_r(\epsilon_r))$ within the claimed time.
This completes the induction.

By using separation--optimization equivalence~\cite{GM12}, we can find a vector in $P_r(\epsilon_r)$ in time polynomial in $|N|$ and $\varphi$.
Since the dimension of $P_r(\epsilon_r)$ strictly decreases and $P_r(\epsilon_r)$ becomes a singleton in at most $|N|$ steps, the nucleolus can be computed in time polynomial in $|N|$ and $\varphi$.
\end{proof}

\subsection{The minimum cost $L$-nontrivial flow problem}\label{subsec:L-flow}

Consider a directed network $D=(V,E)$ with vertex set $V$ and arc set $E$.
Let $s,t\in V$ be the source and sink vertices of $D$, and assume that each arc in $D$ has a unit flow capacity.
Let $w:E\to\mathbb Q$ be an arc cost function, let $\lambda:E\to\mathbb Q^q$ be a label function, and let $L\subseteq\mathbb R^q$ be a linear subspace given by a rational basis.
For a flow $f\in\mathcal F$, define its cost and label by
$$
w(f)=\sum_{e\in E}w(e)f(e),\qquad \lambda(f)=\sum_{e\in E}f(e)\lambda(e).
$$
We call $f$ $L$-trivial if $\lambda(f)\in L$ and $L$-nontrivial otherwise.
The minimum cost $L$-nontrivial flow problem is to find a $L$-nontrivial flow $f$ that minimizes $w(f)$ or to determine that no such flow exists.

We next design a polynomial-time algorithm for the minimum cost $L$-nontrivial flow problem.
Our algorithm starts with a minimum cost flow in $D$ and iteratively improves this flow by augmenting along cycles or paths in its residual network, until it becomes a minimum cost $L$-nontrivial flow or it is determined that no $L$-nontrivial flow exists.

Let $f^0$ be a minimum cost flow of $D$ in $\mathcal F$.
The residual network $R=(V,E')$ contains a forward arc for every unused arc $e\in E$, with cost $w(e)$ and label $\lambda(e)$, and a reverse arc for every used arc $e\in E$, with cost $-w(e)$ and label $-\lambda(e)$.
For each $e\in E'$, write its cost and label as $w'(e)$ and $\lambda'(e)$, respectively.
Let $\mathcal W(f^0)$ be the collection of all cycles, $s$-$t$ paths and $t$-$s$ paths in $R$.
We first establish a simple but crucial lemma that characterizes when augmentation yields an $L$-nontrivial flow.

\begin{lemma}\label{lem:augment-L-nontrivial}
Suppose that $\lambda(f^0)\in L$, and let $f'$ be the flow obtained by augmenting $f^0$ along $W\in\mathcal W(f^0)$.
Then $f'$ is $L$-nontrivial if and only if $\lambda'(W)\notin L$.
\end{lemma}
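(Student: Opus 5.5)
The plan is to prove the identity $\lambda(f')=\lambda(f^0)+\lambda'(W)$ and then use that $L$ is a linear subspace. Since $\lambda(f^0)\in L$, we have $\lambda(f')\in L$ if and only if $\lambda(f')-\lambda(f^0)\in L$, that is, if and only if $\lambda'(W)\in L$. Taking negations gives exactly the statement of Lemma~\ref{lem:augment-L-nontrivial}. So everything reduces to computing the change in label caused by one augmentation.

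To establish the identity, I will first describe augmentation along $W\in\mathcal W(f^0)$ arc by arc. For a forward residual arc corresponding to an unused arc $e\in E$ (so $f^0(e)=0$), augmentation sets $f'(e)=1$. For a reverse residual arc corresponding to a used arc $e$ (so $f^0(e)=1$), augmentation sets $f'(e)=0$. All other arcs keep their value.

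I then need that each arc $e\in E$ is affected at most once by $W$. Since $W$ is a cycle or a path in $R$, it uses each residual arc at most once. Also, each $e\in E$ has exactly one residual counterpart, forward or reverse depending on $f^0(e)$. Together these show that no arc $e$ is touched twice.

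With this, the arcs of $W$ contribute as follows:
\[
f'(e)-f^0(e)=+1 \text{ for forward arcs},\qquad f'(e)-f^0(e)=-1 \text{ for reverse arcs},
\]
and $f'(e)-f^0(e)=0$ for all other arcs. These match the residual labels $\lambda'=\lambda(e)$ and $\lambda'=-\lambda(e)$. By linearity of $\lambda(f)=\sum_e f(e)\lambda(e)$, this gives $\lambda(f')-\lambda(f^0)=\sum_{e'\in W}\lambda'(e')=\lambda'(W)$.

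I would also remark briefly that $f'\in\mathcal F$, so that it is a legitimate flow whose $L$-triviality makes sense. Its values stay in $\{0,1\}$, and conservation holds at every vertex other than $s$ and $t$: a cycle, $s$-$t$ path, or $t$-$s$ path changes the net outflow only at $s$ and $t$.

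I expect no real obstacle here. The only point needing care is the observation that $W$ is simple, so that no original arc is augmented twice and the $\{0,1\}$ bounds are respected. Once that is noted, the rest is linearity.
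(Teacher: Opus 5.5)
Your proposal is correct and follows the paper's proof: both establish $\lambda(f')=\lambda(f^0)+\lambda'(W)$ and then use that $L$ is a linear subspace containing $\lambda(f^0)$. Your arc-by-arc check fills in detail that the paper asserts in one line. The check is that each original arc has exactly one residual counterpart and that a cycle or path $W$ uses it at most once, so $f'\in\mathcal F$.
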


\begin{proof}
Augmenting along $W$ changes the flow label by $\lambda'(W)$, so $\lambda(f')=\lambda(f^0)+\lambda'(W)$.
Since $\lambda(f^0)\in L$ and $L$ is a linear subspace, $\lambda(f')\notin L$ if and only if $\lambda'(W)\notin L$.
\end{proof}

The next lemma gives a polynomial-time algorithm for finding a minimum cost residual walk satisfying the condition in Lemma~\ref{lem:augment-L-nontrivial}.

\begin{lemma}\label{lem:residual-L-flow}
Given a minimum cost flow $f^0\in\mathcal{F}$, a minimum cost $W\in\mathcal W(f^0)$ with $\lambda'(W)\notin L$ or a certificate that no such $W$ exists can be found in polynomial time.
\end{lemma}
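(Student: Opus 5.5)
The plan is to reduce the problem to polynomially many shortest-walk computations with a ``label $\neq$ forbidden value'' constraint in a graph with conservative costs. First, from the rational basis of $L$, I compute by Gaussian elimination a rational basis $b_1,\ldots,b_m$ of $L^\perp$. Then $\lambda'(W)\notin L$ if and only if $b_i^\top\lambda'(W)\neq0$ for some $i$. So it suffices, for each $i$ separately, to find a minimum cost $W\in\mathcal W(f^0)$ with scalar label $\mu_i(W)\neq0$, where $\mu_i(e)=b_i^\top\lambda'(e)$. Finally I take the cheapest of the at most $m$ answers, or report that none exists.

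Second, I use the optimality of $f^0$ to pass from simple objects to walks. Since $f^0$ is a minimum cost flow over all values, augmenting along any cycle, $s$-$t$ path or $t$-$s$ path of $R$ cannot decrease the cost. Hence every member of $\mathcal W(f^0)$ has nonnegative cost, and in particular $R$ has no negative cycle. Now any closed walk, $s$-$t$ walk or $t$-$s$ walk decomposes into one member of $\mathcal W(f^0)$ (or none, for closed walks) plus simple cycles. Labels add up under this decomposition. So if the walk has $\mu_i\neq0$, some piece has $\mu_i\neq0$, and that piece costs at most the whole walk, since all other pieces have nonnegative cost. Therefore the minimum over $\mathcal W(f^0)$ with $\mu_i\neq0$ equals the minimum over the corresponding walks, and an optimal walk can be decomposed in polynomial time to extract an optimal $W$. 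I compute potentials from shortest-path distances in $R$ (no negative cycles), so reduced costs are nonnegative. Walk costs between fixed endpoints change only by a constant, so Dijkstra-type arguments become available.

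Third, the key subroutine: for a start vertex $r$ and a forbidden value $c$, find for every vertex $v$ a minimum cost $r$-$v$ walk with $\mu_i$-label $\neq c$. I run a label-setting (Dijkstra-like) algorithm that stores, for each vertex, the two cheapest walks having two \emph{distinct} labels. The exchange argument is the following. Let $Q$ be an optimal walk to the target, and let $Q'$ be its prefix ending at some vertex $u$. If the label of $Q'$ is not among the two stored at $u$, then both stored walks are no more expensive than $Q'$. At most one of them can, after appending the rest of $Q$, produce the forbidden final label, so the other one replaces $Q'$ without increasing cost. Induction along $Q$ then shows that two distinct-label states per vertex suffice. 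The resulting state space has size $2|V|$, and the computation is polynomial.

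I apply this subroutine three times. With $r=s$, target $t$ and $c=0$, it handles $s$-$t$ walks. With $r=t$, target $s$ and $c=0$, it handles $t$-$s$ walks. For closed walks, I run it from every root $r$ with target $r$ and $c=0$, requiring at least one arc; this is done by starting with one out-arc of $r$. If every run reports nothing, that is the certificate that no $W$ exists. The step I expect to be the main obstacle is the correctness of the two-distinct-labels relaxation. In particular, it must correctly handle nonnegative reduced costs with ties, and it must keep the labels' encoding lengths polynomial, since labels are sums of at most $|E'|$ rationals.
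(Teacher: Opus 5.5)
Your proposal is correct, and it reaches the result by a genuinely different route in its key step.

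\textbf{What is shared.} Both arguments scalarize the condition $\lambda'(W)\notin L$ through a basis of $L^\perp$ (the rows of the paper's matrix $H$). Both reweight the residual costs by shortest-path potentials. Both use the fact that a cheapest walk with a nonzero label can be cut into a cycle or path that still has a nonzero label and costs no more.

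\textbf{How the paper enforces ``label $\neq 0$''.} For each row of $H$ and each of the first $K=\lceil\log_2(Z+1)\rceil$ primes $p_j$, it builds the product graph on $V\times\mathbb Z_{p_j}$ and asks for shortest paths that end at a nonzero residue. This needs only standard shortest-path computations. The price is scaling the labels to integers, the magnitude bound $|g_i(W)|\le Z$, a product-of-primes argument, and $hK$ product graphs of size $|V|p_j$.

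\textbf{What your approach buys and costs.} Your two-distinct-labels label-setting algorithm works directly with rational scalar labels. It uses only $2|V|$ states per run and needs no primes or magnitude bounds. The cost is that you must prove correctness of the modified Dijkstra yourself.

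The cleanest invariant is the following: for every $r$-$u$ walk $Q$, either $\mu_i(Q)$ is permanent at $u$ with cost at most $\bar w(Q)$, or $u$ carries two permanent labels, each of cost at most $\bar w(Q)$. This follows by induction on the number of arcs of $Q$, using your exchange argument together with the fact that nonnegative reduced costs make states pop in nondecreasing cost order. That ordering also settles ties. Since each permanent state extends an earlier permanent state by one arc, the walks you output have at most $2|V|$ arcs, so label sizes stay polynomial. These two remarks close the obstacles you flagged.

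\textbf{The decomposition step.} You carry this out in the original residual costs $w'$, using that every member of $\mathcal W(f^0)$ has $w'\ge 0$ because $f^0$ is optimal over flows of all values. This makes the comparison between cycle pieces and path pieces transparent. The paper instead decomposes with respect to $\bar w$. It must therefore account for the endpoint offset $d(x)-d(y)$ in \eqref{eq:re-weight} when a cycle is extracted from an $s$-$t$ or $t$-$s$ walk, which again relies on $w'(P)\ge 0$ for the discarded path piece.
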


\begin{proof}
Choose a rational matrix $H$ with polynomially encoded entries such that $L=\ker H$, and let $h$ be the number of rows of $H$.
After multiplying the rows of $H$ by suitable positive integers, we may assume that $H\lambda'(e)\in\mathbb Z^h$ for every $e\in E'$.
Besides, we set $g(e)=H\lambda'(e)$ for every $e\in E'$.
For each $W\in\mathcal{W}(f^0)$, linearity gives $g(W)=H\lambda'(W)$.
Consequently, $\lambda'(W)\notin L$ if and only if $g(W)\neq0$.

We next construct a new cost function that is nonnegative on every residual arc.
Since $f^0$ is a minimum cost flow, $R$ has no cycle with negative cost.
We add a new vertex $s_0$ and an arc $(s_0,v)$ with zero cost for every $v\in V$.
Let $d(v)$ be the cost of a shortest path from $s_0$ to $v$ in the resulting graph.
By the definition of the function $d$, we have $d(v)\le d(u)+w'(e)$ for each $e=(u,v)\in E'$.
Define the cost $\bar w(e)=w'(e)+d(u)-d(v)$ for each $e=(u,v)\in E'$.
Then $\bar w(e)\ge0$ for every $e\in E'$.
Moreover, for every walk $Q$ from $x$ to $y$ in $R$, we have
\begin{equation}\label{eq:re-weight}
\bar w(Q)=w'(Q)+d(x)-d(y).
\end{equation}

Let $G=\max_{e,i}|g_i(e)|$ and $Z=|V|G$.
Every residual cycle or path has at most $|V|$ arcs, so each component of its $g$-label has absolute value at most $Z$.
If $Z=0$, then $g(e)=0$ for every $e\in E'$, and hence no $W$ with $\lambda'(W)\notin L$ exists.
Otherwise, let $p_1,\ldots,p_K$ be the first $K=\lceil\log_2(Z+1)\rceil$ primes.
For each row $i$ of $H$ and each prime $p_j$, we construct the product graph with vertex set $V\times\mathbb Z_{p_j}$.
Each vertex in the product graph has two coordinates and is denoted by $\pstate{v}{r}$, where $v\in V$ and $r\in \mathbb{Z}_{p_j}$.
A residual arc $e=(u,v)$ induces the arc from $\pstate{u}{r}$ to $\pstate{v}{(r+g_i(e))\bmod p_j}$ with cost $\bar w(e)$ for every $r\in\mathbb Z_{p_j}$.

We next compute shortest paths from $\pstate{v}{0}$ to $\pstate{v}{r}$ for every $v\in V$ and $r\neq0$, from $\pstate{s}{0}$ to $\pstate{t}{r}$, and from $\pstate{t}{0}$ to $\pstate{s}{r}$.
Consider any path $\pstate{v_0}{r_0}\pstate{v_1}{r_1}\ldots\pstate{v_\ell}{r_\ell}$ where $r_0=0$ and $r_\ell=r$.
Let $e_k=(v_{k-1},v_k)$ be the residual arc corresponding to the $k$-th product graph arc.
By the construction of the product graph,
$$
r_k\equiv r_{k-1}+g_i(e_k)\pmod {p_j},\qquad k=1,\ldots,\ell.
$$
Therefore,
$$
r_\ell\equiv\sum_{k=1}^{\ell}g_i(e_k)=g_i(Q)\not\equiv0\pmod {p_j},
$$
where $Q=e_1\cdots e_\ell$ is the walk in $R$.
For the three families of shortest paths, $Q$ is a residual closed walk, an $s$-$t$ walk, or a $t$-$s$ walk respectively.

If $Q$ is closed, it can be decomposed into cycles $C_1,\ldots,C_m$ satisfying
$$
g_i(Q)=\sum_{k=1}^{m}g_i(C_k),\qquad \bar w(Q)=\sum_{k=1}^{m}\bar w(C_k).
$$
Otherwise, it can be decomposed into one path $P$ and cycles $C_1,\ldots,C_m$ satisfying
$$
g_i(Q)=g_i(P)+\sum_{k=1}^{m}g_i(C_k),\qquad \bar w(Q)=\bar w(P)+\sum_{k=1}^{m}\bar w(C_k).
$$
Since every term on the right-hand side of each cost identity is nonnegative, the decomposition does not increase the cost of any component.
Moreover, $g_i(Q)\not\equiv0\pmod {p_j}$ implies that there exists a component $C\in\{C_1,\ldots,C_m\}$ or $C=P$ such that $g_i(C)\not\equiv0\pmod {p_j}$.
Thus at least one component $C$ satisfies $\lambda'(C)\notin L$.
Since $\bar w$ is nonnegative, we have $\bar w(C)\leq \bar w(Q)$.
Let $\mathcal C$ be the collection of all such $C$ obtained from the shortest walk.

We now show that we can either obtain a minimum cost $W \in \mathcal{W}(f^0)$ satisfying $\lambda'(W)\notin L$ from $\mathcal{C}$, or certify that no such $W$ exists.
We first suppose that $\mathcal C\neq\emptyset$.
For each $C\in\mathcal C$, let $x_C$ and $y_C$ be its endpoints.
We recover its original residual cost by \eqref{eq:re-weight}, and choose $W^*\in\mathop{\arg\min}_{C\in\mathcal C}w'(C)$.
We next show that $W^*$ is a minimum cost member of $\mathcal W(f^0)$ with $\lambda'(W^*)\notin L$.
Let $W\in\mathcal W(f^0)$ satisfy $\lambda'(W)\notin L$.
Every nonzero integer of absolute value at most $Z$ is nonzero modulo at least one of $p_1,\ldots,p_K$.
Indeed, if such an integer were divisible by all these primes, their product would divide it, whereas
$$
\prod_{j=1}^Kp_j\ge2^K>Z.
$$
Thus, every $W\in\mathcal W(f^0)$ with $\lambda'(W)\notin L$ has some coordinate $g_i(W)\neq0$ with absolute value at most $Z$, and it corresponds to a walk in the corresponding product graph.
Choose $j$ such that $g_i(W)\not\equiv0\pmod {p_j}$.
Suppose that $W=e_1,\ldots,e_\ell$ where $e_k=(v_{k-1},v_k)$, $k=1,\ldots,\ell$.
We can construct a cycle or path $W'$ in the product graph as $W'=e_1',\ldots,e_\ell'$ where $r_0=0$, $e_k=(\pstate{v_{k-1}}{r_{k-1}},\pstate{v_k}{r_k})$ and $r_k=\sum_{b=1}^k g_i(e_b)\pmod{p_j}$, $k=1,\ldots,\ell$.
Let $C'$ be a minimum cost cycle or path in the product graph from $\pstate{v_0}{0}$ to $\pstate{v_\ell}{r_\ell}$.
We have $\bar w(C')\leq \bar w(W')$.
Let $C\in\mathcal{C}$ be the cycle or path in $R$ corresponding to $C'$.
Thus $\bar w(C')\leq \bar w(W')$ and \eqref{eq:re-weight} imply $w'(C)\leq w'(W)$.
Therefore, we have $w'(W^*)\leq w'(C)\leq w'(W)$.
Since $W$ was arbitrary, $W^*$ is the minimum cost path or cycle in $\mathcal{W}(f^0)$ with $\lambda'(W^*)\notin L$.

Conversely, we suppose that $\mathcal C=\emptyset$.
If some $W\in\mathcal W(f^0)$ satisfied $\lambda'(W)\notin L$, we can construct a cycle or path in the product graph whose end point has a nonzero second coordinate by using similar method, contradicting $\mathcal{C}=\emptyset$.
Thus no $W\in\mathcal W(f^0)$ satisfies $\lambda'(W)\notin L$.

Finally, the function $d$, the product graphs, and the length of the shortest path can be computed in polynomial time.
There are $hK$ product graphs, each with $|V|p_j$ vertices and $|E|p_j$ arcs, and the first $K$ primes have polynomially bounded encoding length and can be generated in polynomial time.
Therefore, the above procedure can be completed in polynomial time.
\end{proof}

We can now solve the minimum cost $L$-nontrivial flow problem.

\begin{theorem}\label{thm:L-flow-poly}
The minimum cost $L$-nontrivial flow problem can be solved in polynomial time.
\end{theorem}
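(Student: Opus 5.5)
The plan is to compute a minimum cost flow $f^0\in\mathcal F$ with respect to $w$ by a standard minimum cost flow algorithm. If $\lambda(f^0)\notin L$, then $f^0$ is already $L$-nontrivial and has minimum cost among all flows, so we output it. This test is one linear-algebra check using the rational basis of $L$. Otherwise $\lambda(f^0)\in L$. We then apply Lemma~\ref{lem:residual-L-flow} to obtain a minimum cost $W^*\in\mathcal W(f^0)$ with $\lambda'(W^*)\notin L$, or a certificate that none exists. In the first case we output $f^0$ augmented along $W^*$; it is $L$-nontrivial by Lemma~\ref{lem:augment-L-nontrivial}. In the second case we report that no $L$-nontrivial flow exists.

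Correctness rests on a decomposition claim. For any $f\in\mathcal F$, the difference $f-f^0$ corresponds to a $\{0,1\}$-vector on residual arcs of $R$: forward arcs where $f=1,f^0=0$, and reverse arcs where $f=0,f^0=1$. This vector is balanced at every vertex other than $s,t$. By flow decomposition in $R$, it splits into arc-disjoint cycles together with $s$-$t$ paths or $t$-$s$ paths, all members of $\mathcal W(f^0)$, say $W_1,\ldots,W_m$. They satisfy $w(f)-w(f^0)=\sum_k w'(W_k)$ and $\lambda(f)-\lambda(f^0)=\sum_k\lambda'(W_k)$. If $f$ is $L$-nontrivial, then since $\lambda(f^0)\in L$ and $L$ is a subspace, some $W_k$ has $\lambda'(W_k)\notin L$. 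This already shows that if no such walk exists, every flow is $L$-trivial.

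Now fix such an index $k$. Each remaining $W_j$ ($j\neq k$) must have $w'(W_j)\ge 0$. For cycles this holds because $R$ has no negative cycle, $f^0$ being optimal. For paths, the key observation is that augmenting $f^0$ along $W_j$ alone yields a flow in $\mathcal F$, since the capacities are unit and each residual arc is used once. By optimality of $f^0$, that flow costs at least $w(f^0)$, so $w'(W_j)\ge0$. Hence $w(f)\ge w(f^0)+w'(W_k)\ge w(f^0)+w'(W^*)$, which equals the cost of the output flow.

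The main obstacle I expect is making this decomposition rigorous when an $s$-$t$ path and a $t$-$s$ path both appear, or when paths share vertices. The property actually needed is only that each single component, applied alone to $f^0$, stays within the $0/1$ capacities and conserves flow at internal vertices. This follows from arc-disjointness: each residual arc of $W_j$ toggles exactly one original arc, and that arc is not toggled by any other component, so capacities are respected. Internal vertices are balanced because each component is a cycle or path. Flow conservation in the definition of $\mathcal F$ is required only at $V\setminus\{s,t\}$, so nonzero net flow at $s,t$ is permitted.

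The running time is polynomial. It consists of one minimum cost flow computation, one membership test for $L$, the procedure of Lemma~\ref{lem:residual-L-flow}, and one augmentation.
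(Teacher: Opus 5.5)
Your proposal is correct and follows the paper's proof essentially step for step. You compute a minimum cost flow $f^0$ and return it if $\lambda(f^0)\notin L$; otherwise you invoke Lemma~\ref{lem:residual-L-flow} and compare against a decomposition of $f-f^0$ into members of $\mathcal W(f^0)$ whose costs and labels add up. You also make explicit two points the paper leaves implicit: each single component has nonnegative cost because augmenting along it alone stays in $\mathcal F$, and $w'(W^*)\le w'(W_k)\le\sum_j w'(W_j)$ for an $L$-nontrivial $f$.
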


\begin{proof}
Let $f^0$ be a minimum cost flow of $D$ in $\mathcal F$.
If $\lambda(f^0)\notin L$, then $f^0$ is a minimum cost $L$-nontrivial flow because it is optimal among all flows in $\mathcal F$.
Suppose that $\lambda(f^0)\in L$.
Every flow $f\in\mathcal F$ can be obtained from $f^0$ by augmenting along a collection of residual directed cycles and directed $s$-$t$ or $t$-$s$ paths.
Thus, there exist some $W_1,\ldots,W_k\in\mathcal W(f^0)$ such that
\begin{equation}\label{eq:f-f^0}
\lambda(f)-\lambda(f^0)=\sum_{j=1}^k \lambda'(W_j),\qquad w(f)-w(f^0)=\sum_{j=1}^kw'(W_j).
\end{equation}
Since $f^0$ is a minimum cost flow over $\mathcal F$, we have $w'(W_j)\geq 0$, $j=1,\ldots,k$.

By Lemma~\ref{lem:residual-L-flow}, we can find a minimum cost $W\in \mathcal{W}(f^0)$ with $\lambda'(W)\notin L$, or certify that no such $W$ exists in polynomial time.
By~\eqref{eq:f-f^0}, $f$ can be obtained from $f^0$ by a sequence of augmentations along $W_1,\ldots,W_k$.
If no such $W$ exists, then every $W\in\mathcal{W}(f^0)$ satisfies $\lambda'(W)\in L$.
Thus, every $f\in\mathcal{F}$ is $L$-trivial by Lemma~\ref{lem:augment-L-nontrivial}.

Conversely, suppose that there exists a minimum cost $W\in\mathcal{W}(f^0)$ with $\lambda'(W)\notin L$.
Let $f^*$ be the flow obtained from $f^0$ by augmenting along $W$.
By Lemma~\ref{lem:augment-L-nontrivial}, $f^*$ is $L$-nontrivial.
Moreover,
$$
w(f^*)=w(f^0)+w'(W)
=w(f)+w'(W)-\sum_{j=1}^k w'(W_j)
\leq w(f).
$$
Therefore, $f^*$ is a minimum cost $L$-nontrivial flow.
\end{proof}

\subsection{Proof of the Theorem~\ref{thm:nucleolus-general}}\label{subsec:delta}

We now compute $\Delta(x,L)$ for every $x\in\mathbb{R}_+^N$.
For any flow $f\in\mathcal{F}$, let $f_N$ denote the restriction of $f$ to $N$.
For $x\in\mathbb{R}_+^N$ and a subspace $L\subseteq\mathbb R^N$, let
\begin{equation}
A(x,L)=\min\{x(f)-c(f)\mid f\in\mathcal F,~f_N\notin L\},
\label{eq:A-separator}
\end{equation}
and
\begin{equation}
B(x,L)=\min\{x(e)+x(f)-c(f)\mid e\in N,~\chi_e\notin L,~f\in\mathcal F,~f(e)=0\}.
\label{eq:B-separator}
\end{equation}
We show that computing $\Delta(x,L)$ reduces to evaluating the two functions defined above.

\begin{lemma}\label{lem:Delta-decomposition}
For every $x\in\mathbb R_+^N$ and every subspace $L\subseteq\mathbb R^N$ containing $\chi_N$,
\begin{equation}
\Delta(x,L)=\min\{A(x,L),B(x,L)\}.
\label{eq:exact-separator}
\end{equation}
\end{lemma}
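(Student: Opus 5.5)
The plan is to prove the two inequalities $\Delta(x,L)\geq\min\{A(x,L),B(x,L)\}$ and $\Delta(x,L)\leq\min\{A(x,L),B(x,L)\}$ separately. Both directions rest on two facts. The first is that for a coalition $S$, the value $v(S)$ is attained by some $f\in\mathcal F$ with $f(e)=0$ for every $e\in N\setminus S$, that is, $f_N\leq\chi_S$. The second is that every $f\in\mathcal F$ with $f_N\leq\chi_S$ satisfies $c(f)\leq v(S)$. This holds because $f$ is an integral unit-capacity flow in $G[S\cup N^0]$, and if $c(f)<0$ it holds trivially since $v(S)\geq0$. We use $x\geq0$ throughout, so that dropping arcs of $N$ can only lower $x$-sums. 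We also use the following basic fact about $L$: if every $\chi_e$ with $e\in T$ lies in $L$, then $\chi_T\in L$. The convention $\min\emptyset=+\infty$ covers the degenerate cases.

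\emph{Lower bound.} Fix $\emptyset\neq S\subset N$ with $\chi_S\notin L$, and choose $f\in\mathcal F$ with $f_N\leq\chi_S$ and $c(f)=v(S)$. Let $T=S\setminus\operatorname{supp}(f_N)$, so that $\chi_S=f_N+\chi_T$ and
$$
e(x,S)=x(T)+x(f)-c(f).
$$
If $f_N\notin L$, then $x(T)\geq0$ gives $e(x,S)\geq A(x,L)$. If $f_N\in L$, then $\chi_T=\chi_S-f_N\notin L$. By the basic fact above, some $e\in T$ has $\chi_e\notin L$; this $e$ satisfies $f(e)=0$. Then $x(T)\geq x(e)$ by nonnegativity, so $e(x,S)\geq x(e)+x(f)-c(f)\geq B(x,L)$.

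\emph{Upper bound.} Take any $f\in\mathcal F$ with $f_N\notin L$ and set $S=\operatorname{supp}(f_N)$. Then $\chi_S=f_N\notin L$. Since $\chi_0=0\in L$ and $\chi_N\in L$, this forces $\emptyset\neq S\subset N$. Because $v(S)\geq c(f)$, we get $\Delta(x,L)\leq x(S)-v(S)\leq x(f)-c(f)$, and hence $\Delta\leq A$. Next take $e\in N$ with $\chi_e\notin L$ and $f\in\mathcal F$ with $f(e)=0$, and set $S=\operatorname{supp}(f_N)\cup\{e\}$. If $\chi_S\notin L$, then $S$ is a nonempty proper coalition, and $e(x,S)=x(e)+x(f)-v(S)\leq x(e)+x(f)-c(f)$. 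If instead $\chi_S\in L$, then $f_N=\chi_S-\chi_e\notin L$, so $A(x,L)\leq x(f)-c(f)\leq x(e)+x(f)-c(f)$ using $x(e)\geq0$. In that case the bound already follows from the previous step. Either way $\Delta\leq B$.

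The only delicate point is the case in the lower bound where the optimal flow for $S$ is itself $L$-trivial. This is exactly why $B$ is introduced, and the linear-algebra remark about decomposing $\chi_T$ handles it. The properness check $S\neq N$ uses the hypothesis $\chi_N\in L$, and all remaining steps are routine.
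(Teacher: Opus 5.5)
Your proof is correct and essentially matches the paper's: the lower bound is the same argument (an integral maximum flow for $S$, a split on whether $f_N\in L$, and some $e\in S\setminus\operatorname{supp}(f_N)$ with $\chi_e\notin L$). Your upper bound is the paper's case analysis ($A\le B$ versus $B<A$, showing $f_N\in L$ in the latter) rephrased to bound $\Delta$ by every candidate pair $(e,f)$ directly rather than only by the minimizers.
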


\begin{proof}
We first prove $\min\{A(x,L),B(x,L)\}\leq\Delta(x,L)$.
Fix $\emptyset\neq S\subset N$ with $\chi_S\notin L$, and let $f$ be a maximum flow in $G[S\cup N^0]$.
Let $U=\{e\in N\mid f(e)=1\}$.
If $f_N\notin L$, then
$$
A(x,L)\leq x(f)-c(f)\leq x(S)-v(S)=e(x,S).
$$
If $f_N\in L$, then $\chi_{S\setminus U}=\chi_S-f_N\notin L$, so $\chi_e\notin L$ for some $e\in S\setminus U$.
Since $f(e)=0$, we have
$$
B(x,L)\leq x(e)+x(f)-c(f)\leq x(S\setminus U)+x(U)-v(S)=e(x,S).
$$
Minimizing over all admissible $S$ gives $\min\{A(x,L),B(x,L)\}\leq\Delta(x,L)$.

On the other hand, we prove that $\Delta(x,L)\leq A(x,L)$ when $A(x,L)\leq B(x,L)$.
Let $f\in\mathcal{F}$ attain the minimum in~\eqref{eq:A-separator} and $U=\{e\in N\mid f(e)=1\}$.
Since $f_N\notin L$ and $\chi_N\in L$, we have $\emptyset\neq U\subset N$.
Consequently,
$$
\Delta(x,L)\leq x(f)-c(f)=A(x,L).
$$

It remains to prove $\Delta(x,L)\leq B(x,L)$ when $B(x,L)<A(x,L)$.
Let $(e,f)$ attain the minimum in~\eqref{eq:B-separator} where $e\in N$ and $f\in\mathcal{F}$, and let $U=\{e\in N\mid f(e)=1\}$.
We prove by contradiction that $f_N\in L$.
If $f_N\notin L$, then
$$
A(x,L)\leq x(f)-c(f)\leq x(f)-c(f)+x(e)=B(x,L),
$$
contrary to $B(x,L)<A(x,L)$.
Since $f_N\in L$ and $\chi_e\notin L$, we have $\chi_{U\cup\{e\}}=f_N+\chi_e\notin L$.
This implies $\emptyset\neq U\cup\{e\}\subset N$.
Therefore,
$$
\Delta(x,L)\leq e(x,U\cup\{e\})\leq x(U)+x(e)-c(f)=B(x,L).
$$
Combining the inequalities in the two directions above, we obtain \eqref{eq:exact-separator} which proves the lemma.
\end{proof}

We next show that $B(x,L)$ can be computed in polynomial time.
\begin{lemma}\label{lem:B-poly}
Given $x\in\mathbb{R}_+^N$ and a basis of a subspace $L\subseteq\mathbb R^N$, $B(x,L)$ and a minimizing pair $(e,f)$ can be computed in polynomial time where $e\in N$ and $f\in\mathcal{F}$.
\end{lemma}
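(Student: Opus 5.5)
The approach is to decompose $B(x,L)$ over the choice of the arc $e$ and reduce each piece to an ordinary minimum cost flow computation. Rewriting the definition,
$$
B(x,L)=\min_{e\in N,~\chi_e\notin L}\Bigl(x(e)+\min\{x(f)-c(f)\mid f\in\mathcal F,~f(e)=0\}\Bigr).
$$
The set of admissible arcs $e$ is found first: for each $e\in N$ we test whether $\chi_e\in L$. Given a rational basis of $L$, this is a linear system solved by Gaussian elimination. There are at most $|N|$ such arcs, so polynomially many candidates remain.

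For a fixed admissible $e$, the inner minimum is a minimum cost flow problem in the network $D-e$, obtained by deleting $e$ (equivalently, setting its capacity to $0$). The arc costs are $w=x-c$, where $x$ is extended by $0$ on $N^0$. The set $\mathcal F$ consists of $0/1$ flows, meaning circulations plus $s$-$t$ flow of arbitrary value, with the conservation constraints of \eqref{lp:mincost-flow} at every vertex other than $s,t$. Deleting $e$ therefore gives exactly the feasible set $\{f\in\mathcal F\mid f(e)=0\}$. The LP \eqref{lp:mincost-flow} with the bound $f(e)\le 0$ added has a totally unimodular constraint matrix (a network matrix with unit bounds), so an integral optimal solution exists. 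That solution is a member of $\mathcal F$ attaining the inner minimum, and it is computable in polynomial time, for instance by the method used in Lemma~\ref{lem:ls-compute} (see \cite{AM93}).

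One point needs checking: negative-cost cycles are allowed, because arcs leaving or entering $s$ carry cost $\mp1$. This is harmless, since the capacities are bounded and the problem is a bounded LP. The value is finite because the zero flow is feasible.

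Finally, we output the pair $(e,f_e)$ that minimizes $x(e)+x(f_e)-c(f_e)$ over the admissible $e$. If no admissible $e$ exists, we set $B(x,L)=+\infty$. The whole procedure makes at most $|N|$ min-cost-flow calls plus the linear algebra. I expect the only real subtlety to be verifying that the restricted problem really ranges over exactly $\{f\in\mathcal F: f(e)=0\}$ and has integral optima, so that no $L$-nontriviality constraint is needed here. That constraint is carried entirely by the choice of $e$, which is precisely why $B$ is much easier than $A(x,L)$.
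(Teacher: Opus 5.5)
Your proposal is correct and takes the same approach as the paper: for each $e\in N$ with $\chi_e\notin L$, solve a minimum cost flow problem with costs $x-c$ on $D$ with $e$ deleted, then take the best value of $x(e)+x(f)-c(f)$ over at most $|N|$ candidates. Your extra details only make explicit what the paper leaves implicit: the linear-algebra test for $\chi_e\notin L$, integrality via total unimodularity, and the $B(x,L)=+\infty$ case. One small imprecision is harmless: since $x\ge 0$, directed cycles in $D$ actually have nonnegative cost, and the negative costs occur on $s$-$t$ paths instead.
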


\begin{proof}
Given $e\in N$ with $\chi_e\notin L$, we compute the flow $f\in\mathcal{F}$ with $f(e)=0$ that minimizes $B(x,L)$.
Let $D'$ be obtained from $D$ by deleting edge $e$.
We then obtain $f$ by solving the minimum cost flow problem on $D'$ with cost $x-c$.
Selecting the minimum of $x(e)+x(f)-c(f)$ over at most $|N|$ choices shows that $B(x,L)$ and a minimizing pair can be computed in polynomial time.
\end{proof}

It remains to show that we can compute $A(x,L)$ in polynomial time.

\begin{lemma}\label{lem:A-poly}
Given $x\in\mathbb Q^N$ and a basis in $\mathbb Q^N$ of a subspace $L\subseteq\mathbb R^N$, $A(x,L)$ and a minimizing flow can be computed in polynomial time.
\end{lemma}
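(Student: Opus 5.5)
The plan is to reduce the computation of $A(x,L)$ directly to the minimum cost $L$-nontrivial flow problem and then invoke Theorem~\ref{thm:L-flow-poly}.

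\textbf{Step 1: the instance.} We work on the network $D=(V,E)$ with source $s$ and sink $t$. The cost function is $w(e)=x(e)-c(e)$ for $e\in N$ and $w(e)=-c(e)$ for $e\in N^0$. With this choice $w(f)=x(f)-c(f)$ for every $f\in\mathcal F$, and all costs are rational because $x\in\mathbb Q^N$. For the labels we take $q=|N|$, set $\lambda(e)=\chi_e\in\mathbb Q^N$ for $e\in N$, and set $\lambda(e)=0$ for $e\in N^0$. Then $\lambda(f)=\sum_{e\in N}f(e)\chi_e=f_N$. The subspace $L$ is passed to the oracle through the given rational basis.

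\textbf{Step 2: matching the feasible sets and objectives.} A flow $f\in\mathcal F$ is $L$-nontrivial exactly when $\lambda(f)=f_N\notin L$. Hence the feasible set of the minimum cost $L$-nontrivial flow problem is precisely the set over which the minimum in \eqref{eq:A-separator} is taken, and the objectives coincide. Theorem~\ref{thm:L-flow-poly} therefore either returns a flow $f$ attaining $A(x,L)$, or certifies that no $L$-nontrivial flow exists. In the latter case $A(x,L)=+\infty$, with the usual convention for an empty minimum, and this is reported as such.

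\textbf{Step 3: polynomiality.} The encoding lengths of $w$, $\lambda$ and the basis of $L$ are polynomial in the input size, and the instance is built in linear time. The running time of Theorem~\ref{thm:L-flow-poly} is therefore polynomial here as well.

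\textbf{Main obstacle.} The only delicate point is whether the bound in the proof of Lemma~\ref{lem:residual-L-flow} stays polynomial. That proof chooses an integer matrix $H$ with $L=\ker H$ and $H\lambda'(e)$ integral, and uses $K=\lceil\log_2(Z+1)\rceil$ primes. Since every $\lambda'(e)$ is $\pm\chi_e$ or $0$, the vector $H\lambda'(e)$ is just a signed column of $H$. Clearing denominators of a rational $H$ obtained from the basis by Gaussian elimination gives entries of polynomial encoding length. Consequently $\log Z$ is polynomial, and so are $K$ and the primes $p_j$, whose size is $O(K\log K)$. Each product graph then has polynomially many vertices. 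This confirms the complexity claim, and with it $A(x,L)$ and a minimizing flow are computed in polynomial time.
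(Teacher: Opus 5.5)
Your proposal is correct and is the paper's own argument. You use the same costs $w$, the same labels ($\lambda(e)=\chi_e$ on $N$ and $0$ on $N^0$, so that $\lambda(f)=f_N$), and the same direct appeal to Theorem~\ref{thm:L-flow-poly}; your checks that the primes and product graphs in Lemma~\ref{lem:residual-L-flow} stay polynomially sized, and that the empty case gives $A(x,L)=+\infty$, are details the paper leaves implicit.
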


\begin{proof}
Define $w(e)=x(e)-c(e)$ for every $e\in N$ and $w(e)=-c(e)$ for every $e\in N^0$.
Let $\lambda(e)=\chi_e$ for every $e\in N$, and let $\lambda(e)=0$ for every $N^0$.
Then $A(x,L)=\min\{w(f)\mid f\in\mathcal F,~\lambda(f)\notin L\}$.
The claim follows from Theorem~\ref{thm:L-flow-poly}.
\end{proof}

According to Lemma~\ref{lem:Delta-decomposition}, Lemma~\ref{lem:B-poly}, and Lemma~\ref{lem:A-poly}, we obtain the following theorem immediately.

\begin{theorem}\label{thm:Delta-poly}
Given $x\in\mathbb Q_+^N$ and a basis in $\mathbb Q^N$ of a subspace $L\subseteq\mathbb R^N$ containing $\chi_N$, $\Delta(x,L)$ and a minimizing coalition can be computed in polynomial time.
\end{theorem}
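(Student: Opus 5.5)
The plan is to combine Lemma~\ref{lem:Delta-decomposition}, Lemma~\ref{lem:B-poly} and Lemma~\ref{lem:A-poly}. Given $x\in\mathbb Q_+^N$ and a rational basis of $L\ni\chi_N$, I first compute $A(x,L)$ together with a minimizing flow $f_A$. By Lemma~\ref{lem:A-poly}, this instantiates Theorem~\ref{thm:L-flow-poly} with costs $w=x-c$ on $N$, $w=-c$ on $N^0$, and labels $\lambda(e)=\chi_e$ on $N$, $\lambda(e)=0$ on $N^0$. I then compute $B(x,L)$ and a minimizing pair $(e_B,f_B)$ by Lemma~\ref{lem:B-poly}, running at most $|N|$ minimum cost flow computations on $D-e$. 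Either quantity may be $+\infty$ if its feasible set is empty. By Lemma~\ref{lem:Delta-decomposition}, $\Delta(x,L)=\min\{A(x,L),B(x,L)\}$, and it is finite exactly when some admissible coalition exists.

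It remains to extract a minimizing coalition, reusing the constructions in the second half of the proof of Lemma~\ref{lem:Delta-decomposition}. There are two cases.

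If $A(x,L)\le B(x,L)$, I output $S=U_A=\{e\in N\mid f_A(e)=1\}$. Because $f_A$ is $L$-nontrivial and $\chi_N\in L$, the set $S$ is a nonempty proper subset of $N$ with $\chi_S=(f_A)_N\notin L$. Moreover, $f_A$ is feasible in $G[S\cup N^0]$, so $v(S)\ge c(f_A)$ and hence $e(x,S)\le x(f_A)-c(f_A)=A(x,L)=\Delta(x,L)$. Since $S$ is admissible, $e(x,S)\ge\Delta(x,L)$ as well, so $S$ attains the minimum.

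If $B(x,L)<A(x,L)$, I output $S=U_B\cup\{e_B\}$, where $U_B=\{e\in N\mid f_B(e)=1\}$. The lemma's argument gives $(f_B)_N\in L$, so $\chi_S\notin L$ and $S$ is admissible. Also $e(x,S)\le x(U_B)+x(e_B)-c(f_B)=B(x,L)=\Delta(x,L)$, and again equality holds because $S$ is admissible.

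The only delicate point is that $v(S)$ may exceed $c(f)$. This only makes $e(x,S)$ smaller, and admissibility of $S$ forces equality, so the argument goes through. All steps are polynomial, since each of them is one of the polynomial subroutines above or a constant amount of bookkeeping.
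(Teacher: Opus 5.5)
Your proof is correct and follows the paper's approach: the paper obtains the theorem immediately from Lemmas~\ref{lem:Delta-decomposition}, \ref{lem:B-poly} and~\ref{lem:A-poly}. Your extraction of a minimizing coalition ($U_A$ when $A(x,L)\le B(x,L)$, and $U_B\cup\{e_B\}$ otherwise, with equality forced because the coalition is admissible) is exactly the construction in the second half of the proof of Lemma~\ref{lem:Delta-decomposition}, which the paper leaves implicit.
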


We now complete the proof of Theorem~\ref{thm:nucleolus-general}.

\begin{proof}[Proof of Theorem~\ref{thm:nucleolus-general}]
We first handle $v(N)=0$.
If $v(N)=0$, then $v(S)=0$ for every $S\subseteq N$.
Thus, the zero allocation is the nucleolus.
Assume that $v(N)\geq1$.
If $v(N)=1$, the polynomial-size formulation in Theorem~\ref{thm:ls-core-des-v=1} gives a polynomial-time strong separation oracle for $P_1(\epsilon_1)$.
If $v(N)\geq2$, the polynomial-size formulation in Theorem~\ref{thm:ls-core-des-v>1} gives such an oracle.
Because all arc capacities are one, $v(S)$ is an integer satisfying $0\leq v(S)\leq |E|$ for every $S\subseteq N$ and therefore has encoding length $O(\log |E|)$.
In both cases, $P_1(\epsilon_1)\subseteq\mathbb R_+^N$ by Lemmas~\ref{lem:least-core-v=1} and~\ref{lem:ls-core-non-neg}.
Therefore, we can compute $\Delta(x,L)$ and a minimizing coalition for every $x\in P_1(\epsilon_1)\cap\mathbb Q^N$ and every rational basis of a subspace $L$ containing $\chi_N$ by using Theorem~\ref{thm:Delta-poly}.
Thus the nucleolus can be computed in polynomial time by Lemma~\ref{lem:Delta-recursion}.
\end{proof}

\section{Conclusion} \label{sec:conclusion}

We have given an efficient description of the least core and a polynomial-time algorithm for computing the nucleolus of simple flow games.
The algorithm applies regardless of whether the core is empty and thus generalizes the result of Potters et al.~\cite{PR06}.

Simple flow games naturally generalize to flow games with general capacities, for which it is well known that computing the nucleolus is NP-hard~\cite{DF09}.
When core is non-empty, Potters et al.~\cite{PR06} give a combinatorial algorithm for the nucleolus of simple flow games.
Thus an interesting direction is to develop a combinatorial algorithm for computing the nucleolus of simple flow games when core is empty.
On the other hand, we note that the polynomial-time algorithm for the minimum cost $L$-nontrivial flow problem defined in this paper remains valid for networks with general capacities.
It would also be interesting to investigate sufficient conditions under which the nucleolus of a flow game with general capacities can be computed in polynomial time.

Gangam et al.~\cite{GG24} have shown that the leximin dual-consistent core of general-capacity flow games without public arcs, a relaxation of the nucleolus, can be computed in polynomial time.
Despite this, the complexity of computing the leximin dual-consistent core of general-capacity flow games with public arcs remains open.

\bibliographystyle{splncs04}
\bibliography{ref.bib}
\nocite{*}

\newpage
\appendix
\section{Proofs}\label{app:prove}

\subsection{Proof of Lemma \ref{lem:ls-core-non-neg}} \label{app:ls-core-non-neg}

\begin{proof}
We first note that $\epsilon_1\leq 0$, by the constraint for the empty coalition in ($P_1$).
We next prove that $P_1(\epsilon_1)\subseteq \mathbb{R}^{|N|}_+$.
Suppose that $x\in P_1(\epsilon_1)$ and that there exists $e_0\in N$ such that $x(e_0)<0$.
In this case, $\epsilon_1\leq x(e_0)-v(\{e_0\})<0$.
Let $S\subseteq N\setminus\{e_0\}$.
If $S\neq N\setminus \{e_0\}$, then
$$
e(x,S)>x(S\cup \{e_0\})-v(S)\geq e(x,S\cup\{e_0\})\geq \epsilon_1.
$$
If $S=N\setminus \{e_0\}$, then
$$
e(x,S)>x(N)-v(S)\geq x(N)-v(N)=0>\epsilon_1.
$$
In summary, $e(x,S)>\epsilon_1$ for any $S\subseteq N\setminus\{e_0\}$.
Let $x'\in \mathbb{R}^{|N|}$ be a vector satisfying
$$
x'(e)=\begin{cases}
x(e)+\frac{\delta}{2},~\text{if }e=e_0, \\
x(e)-\frac{\delta}{2(|N|-1)},~\text{otherwise.}
\end{cases}
$$
where $\delta=\min\{e(x,S)-\epsilon_1\mid S\subseteq N\setminus\{e_0\}\}$.
Therefore, we have $x'(N)=v(N)$, $x'(S)>v(S)+\epsilon_1$, $\forall S\subset N$, i.e., $x'\in P_1(\epsilon_1)$.
However, $\min\{e(x',S)\mid S\subset N\}>\epsilon_1$, which contradicts the fact that $\epsilon_1$ is the optimal objective function value of the linear program ($P_1$).

\end{proof}

\subsection{Proof of Lemma \ref{lem:ls-lp-replace}} \label{app:ls-lp-replace}

\begin{proof}
Define ($P_1'$) as the linear program obtained by replacing the inequalities in the linear program ($P_1$), and let $\epsilon_1'$ be the optimal objective function value of ($P'_1$).
We prove that $\epsilon_1=\epsilon_1'$, and $P_1(\epsilon_1)=P'_1(\epsilon_1')$.
Let $x\in P_1(\epsilon_1)$.
By Lemma \ref{lem:ls-core-non-neg}, $x(e)\geq 0$, $\forall e\in N$.
Let $f\in \mathcal{F}$.
Let $S_f=\{e\in N\mid f(e)=1\}$.
If $S_f\subsetneq N$, then $x(f)-c(f)\ge x(S_f)-v(S_f)\ge \epsilon_1$.
If $S_f=N$, then $x(f)=x(N)=v(N)$, and since $c(f)\le v(N)$, we have $x(f)-c(f)\ge 0\ge \epsilon_1$.
Therefore, $x(f)-c(f)\geq \epsilon_1$ for all $f\in \mathcal{F}$, i.e., $(x,\epsilon_1)$ is a feasible solution of ($P_1'$).
This shows that $P_1(\epsilon_1)\subseteq P_1'(\epsilon_1)$ and $\epsilon_1\leq \epsilon'_1$.

On the other hand, let $x\in P'_1(\epsilon_1')$.
For any $S\subset N$, let $f\in \mathcal{F}$ be a maximum flow of $D[S\cup N^0]$.
Then $x(S)-v(S)\geq x(f)-c(f)\geq \epsilon'_1$, i.e., $(x,\epsilon_1')$ is a feasible solution of ($P_1$).
This shows that $P_1'(\epsilon_1')\subseteq P_1(\epsilon_1')$ and $\epsilon_1'\leq \epsilon_1$.

Therefore, we have shown that $\epsilon_1=\epsilon'_1$.
Since $P_1(\epsilon_1)\subseteq P_1'(\epsilon_1)$ and $P_1'(\epsilon_1')\subseteq P_1(\epsilon_1')$, we obtain $P_1(\epsilon_1)=P'_1(\epsilon_1')$.
\end{proof}

\subsection{Proof of Lemma \ref{lem:opt-flow}}\label{app:opt-flow}

\begin{proof}
Let $(\pi,y)$ be an optimal solution of the dual program \eqref{lp:mincost-dual}.
For each $e\in E$, define
$$
\sigma(e)=
\begin{cases}
x(e)-\bigl(\pi(\partial^-e)-\pi(\partial^+e)+y(e)\bigr), & e\in N,\\[2mm]
-\bigl(\pi(\partial^-e)-\pi(\partial^+e)+y(e)\bigr), & e\in N^0.
\end{cases}
$$
By dual feasibility, $\sigma(e)\ge 0$ and $y(e)\le 0$, $\forall e\in E$.
Since $(\pi,y)$ is dual optimal, $g$ is optimal for \eqref{lp:mincost-flow} if and only if $g$ satisfies the complementary slackness conditions, i.e.,
$$
\sigma(e)g(e)=0,~y(e)(1-g(e))=0,~\forall e\in E.
$$

We first prove the inclusion.
If $f\in\mathcal{F}^x$, then, by the definitions of $M_x$ and $R_x$, $f(e)=1$, $\forall e\in M_x$ and $f(e)=0$, $\forall e\in R_x$.
Hence
$$
\mathcal{F}^x
\subseteq
\{f\in\mathcal{F}\mid f(e)=1,\ \forall e\in M_x,\ f(e)=0,\ \forall e\in R_x\}.
$$

It remains to prove the reverse inclusion.
Take any $f\in\mathcal{F}$ such that $f(e)=1$, $\forall e\in M_x$ and $f(e)=0$, $\forall e\in R_x$.
If $\sigma(e)>0$, then every optimal flow $g$ of \eqref{lp:mincost-flow} satisfies $g(e)=0$.
In particular, every $g\in\mathcal{F}^x$ satisfies $g(e)=0$, and hence $e\in R_x$.
Therefore,
$$
\sigma(e)f(e)=0,~\forall e\in E.
$$
Similarly, if $y(e)<0$, then every optimal flow $g$ of \eqref{lp:mincost-flow} satisfies $g(e)=1$.
In particular, every $g\in\mathcal{F}^x$ satisfies $g(e)=1$, and hence $e\in M_x$.
Therefore,
$$
y(e)(1-f(e))=0,~\forall e\in E.
$$
Thus $f$ satisfies the complementary slackness conditions with the dual optimal solution $(\pi,y)$.
Hence $f$ is an optimal solution of \eqref{lp:mincost-flow}.
Since $f\in\mathcal{F}$, we have $f\in\mathcal{F}^x$.
Consequently,
$$
\{f\in\mathcal{F}\mid f(e)=1,\ \forall e\in M_x,\ f(e)=0,\ \forall e\in R_x\}\subseteq \mathcal{F}^x.
$$
Combining the two directions, we complete the proof of the lemma.
\end{proof}
\subsection{Proof of Lemma \ref{lem:MAR-x}}\label{app:MAR-x}

\begin{proof}
Extend $x$ to $E$ by setting $x(e)=0$ for every $e\in N^0$, and let $w(e)=x(e)-c(e)$, $e\in E$.
Define $\mu(x)=\min\{w^\top f\mid f\in\mathcal F\}$.
For every $e\in E$ and $b\in\{0,1\}$, let
$$
\mu_e^b(x)=\min\{w^\top f\mid f\in\mathcal F,\ f(e)=b\},
$$
where the minimum is defined to be $+\infty$ if the corresponding feasible set is empty.

We first show that these quantities can be computed in polynomial time.
Note that minimum cost flow with integral lower and upper bounds is solvable in polynomial time (see, e.g., \cite{AM93}) and has an integral optimal solution.
Thus, $\mu(x)$ can be computed by a minimum cost flow problem with lower bounds $f(e)=0$ and upper bounds $f(e)=1$.
Moreover, $\mu_e^b(x)$ can be computed by imposing the additional lower and upper bounds $f(e)=b$.

It remains to characterize the three sets.
Since every flow $f\in\mathcal F$ satisfies either $f(e)=0$ or $f(e)=1$, we have $\mu(x)=\min\{\mu_e^0(x),\mu_e^1(x)\}$ for every $e\in E$.
By definition, $e\in M_x$ if and only if every flow in $\mathcal F^x$ uses $e$, which is equivalent to the non-existence of an optimal flow satisfying $f(e)=0$.
Hence,
$$
e\in M_x
\quad\Longleftrightarrow\quad
\mu_e^0(x)>\mu(x).
$$
Similarly,
$$
e\in R_x
\quad\Longleftrightarrow\quad
\mu_e^1(x)>\mu(x).
$$
Finally, $e\in A_x$ if and only if there are two optimal flows, one with $f(e)=0$ and one with $f(e)=1$. Therefore,
$$
e\in A_x
\quad\Longleftrightarrow\quad
\mu_e^0(x)=\mu_e^1(x)=\mu(x).
$$
Thus, $M_x$, $A_x$, and $R_x$ can be identified by solving $2|E|+1$ minimum cost flow problems, and hence in polynomial time.
\end{proof}

\subsection{Proof of Lemma \ref{lem:inter-point}} \label{app:inter-point} 

Given $x\in \mathbb{R}^{|N|}$, the following lemma states that the post-optimality problem on $\mathcal{F}^x$ can be solved in polynomial time.

\begin{small}
  \begin{algorithm}[H]
  \caption{Finding a relative interior point of the least core\label{alg:inter-point}}
  \label{alg:ALM}
  \begin{algorithmic}[1]
  \STATE {\bfseries Input:} A simple network $D=(V,E)$, $N\subseteq E$.
  \STATE {\bfseries Output:} A relative interior point $x$ in $P_1(\epsilon_1)$.
  \STATE Initialize $F=\emptyset$.
  \STATE Compute $x\in P_1(\epsilon_1)$, $f\in\mathcal{F}^x$.
  \FOR {$e\in N$}
    \STATE $y=\text{argmax}\{y(e)\mid y\in P_1(\epsilon_1)\}$.
    \STATE $x\leftarrow\frac{1}{2}(x+y)$.\label{e-step:(x+y)/2}
  \ENDFOR
  \LOOP
    \STATE $y=\text{argmax}\{y(f)\mid y\in P_1(\epsilon_1)\}$. \label{step:max-f}
    \IF {$y(f)>x(f)$}  \label{step:y>x}
      \STATE $x\leftarrow\frac{1}{2}(x+y)$. \label{f-step:(x+y)/2}
      \STATE Compute $f'\in\mathcal{F}^x$ and $f\leftarrow f'$.
    \ELSE
      \STATE $F\leftarrow F\cup \{f\}$. \label{step:span-F}
      \STATE Write $\text{span}(F)$ as $\{d\mid Ad=0\}$, and let $a_1,\ldots,a_k$ be the rows of $A$.
      \FOR {$i\leftarrow 1$ \TO $k$}
        \STATE $f^i_1=\text{argmin}\{a_i(f')\mid f'\in \mathcal{F}^x\}$, $f^i_2=\text{argmin}\{-a_i(f')\mid f'\in \mathcal{F}^x\}$. \label{step:post-opt}
        \IF {$a_i(f^i_1)\neq 0$}
          \STATE $f\leftarrow f^i_1$, {\bfseries break}.
        \ELSIF {$a_i(f^i_2)\neq 0$}
          \STATE $f\leftarrow f^i_2$, {\bfseries break}.
        \ENDIF
      \ENDFOR
      \IF {$a_i(f^i_1)=-a_i(f^i_2)=0$ for $i=1,\ldots,k$} \label{step:aff-condi}
        \STATE {\bfseries return} $x$.
      \ENDIF
    \ENDIF
  \ENDLOOP
\end{algorithmic}
\end{algorithm}
\end{small}

\begin{lemma} \label{lem:post-opt-flow}
Given weights $w$ on the arcs, there exists a polynomial-time algorithm for finding $f \in\mathcal{F}^x$ that minimizes $w(f)$.
\end{lemma}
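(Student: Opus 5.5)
Here $x$ is fixed, and $\mathcal{F}^x$ is the set of $x$-tight integer flows. By Lemma~\ref{lem:opt-flow}, $\mathcal{F}^x=\{f\in\mathcal{F}\mid f(e)=1~\forall e\in M_x,~f(e)=0~\forall e\in R_x\}$. The plan is to reduce the problem to a single minimum cost flow problem with fixed arcs and then appeal to integrality.

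First, I compute $M_x$, $A_x$ and $R_x$ using Lemma~\ref{lem:MAR-x}. This costs $2|E|+1$ minimum cost flow computations with cost $x-c$ (with $x$ extended by zero on $N^0$).

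Next, I set up the flow problem on $D$ with the given weights $w$. The arc bounds are: lower and upper bound $1$ on each $e\in M_x$; lower and upper bound $0$ on each $e\in R_x$; and bounds $[0,1]$ on each $e\in A_x$. Flow conservation is imposed at all $v\in V\setminus\{s,t\}$, as in \eqref{lp:mincost-flow}. The bound vectors are integral and the constraint matrix is a network matrix, hence totally unimodular. So whenever this problem is feasible it has an integral optimal solution, and such a solution can be found in polynomial time by standard minimum cost flow algorithms with lower bounds~\cite{AM93}.

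The integral feasible solutions of this problem are exactly the flows in $\mathcal{F}$ that agree with the fixing on $M_x\cup R_x$. By Lemma~\ref{lem:opt-flow}, these are exactly the flows in $\mathcal{F}^x$. An integral optimum $f$ therefore lies in $\mathcal{F}^x$ and minimizes $w(f)$ over $\mathcal{F}^x$. Feasibility is automatic: $\mathcal{F}^x$ is nonempty, since some integral minimum cost flow exists for cost $x-c$.

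The step needing most care is the appeal to Lemma~\ref{lem:opt-flow}. It is what makes the bound-fixed polytope have exactly $\mathcal{F}^x$ as its integer points, with no extra feasible integer flows. I would also check that nothing changes when $x$ is an arbitrary rational vector rather than a least-core point, because Lemma~\ref{lem:MAR-x}'s computation does not use membership in $P_1(\epsilon_1)$.

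An alternative that avoids the lower bounds is lexicographic perturbation. Use costs $K(x-c)+w$ with $K$ large, for example $K$ greater than $2\sum_e|w(e)|$ times the common denominator of $x$. The minimizers of this cost are the tight flows with minimum $w$. I would keep this only as a fallback.
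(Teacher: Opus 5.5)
Your proof is correct, but your main route differs from the paper's. The paper never computes $M_x,A_x,R_x$. It clears denominators so that $q=x-c$ (with $x$ extended by $0$ on $N^0$) and $w$ are integral. It then solves one minimum cost flow problem with the perturbed costs $Kq+w$, where $K=1+\sum_{e\in E}|w(e)|$. An integral optimum must satisfy $q^\top f=\mu$, since otherwise $q^\top f\ge\mu+1$ and the $K$-term outweighs any difference in $w$; hence it minimizes $w$ over $\mathcal{F}^x$. This is exactly the fallback in your last paragraph. Your bound on $K$ via the common denominator of $x$ is the same argument without rescaling $w$.

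Your primary argument instead describes $\mathcal{F}^x$ as the integer points of the face obtained by fixing $M_x$ to $1$ and $R_x$ to $0$, which is precisely the content of Lemma~\ref{lem:opt-flow}. It then appeals to total unimodularity with integral bounds.

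The paper's version is self-contained and needs no structural lemma. It uses a single flow computation per weight vector, plus one for $\mu$, which is only used in the analysis. Yours costs $2|E|+2$ flow computations and relies on Lemmas~\ref{lem:opt-flow} and~\ref{lem:MAR-x}. As you note, their proofs do not use $x\in P_1(\epsilon_1)$ once $\mathcal{F}^x$ is read as the set of minimum cost flows for $x-c$. In exchange, your route avoids inflating the costs by the factor $K$. Also, the fixing $(M_x,R_x)$ can be computed once and reused for every weight vector queried against the same $x$. That is what happens in the post-optimality step of Algorithm~\ref{alg:ALM}, where the $2k$ weights $\pm a_i$ are all optimized over the same $\mathcal{F}^x$.
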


\begin{proof}
We first extend $x$ to $E$ by setting $x(e)=0$ for every $e\in N^0$, and let $q(e)=x(e)-c(e)$ for every $e\in E$.
We then compute $\mu=\min\{q^\top f\mid f\in\mathcal F\}$, which is a minimum cost flow problem and can be solved in polynomial time.
For any $f\in\mathcal F$, we have $f\in\mathcal F^x$ if and only if $q^\top f=\mu$.
We next encode the two-level objective as a minimum cost flow problem.
By clearing denominators, we may assume without loss of generality that $q$ and $w$ are integral.
Let $K=1+\sum_{e\in E}|w(e)|$.
Run a minimum cost flow algorithm on $D$ with costs $\bar w(e)=Kq(e)+w(e)$, $\forall e\in E$.
This is a minimum cost flow problem with integral lower and upper bounds, so it can be solved in polynomial time and admits an integral optimal solution (see, e.g., \cite{AM93}).
Let $f$ be an integer optimal flow for these costs.
If $q^\top f>\mu$, then, since $q$ is integral, $q^\top f\ge \mu+1$.
Choose any $g\in\mathcal F^x$.
Then $q^\top g=\mu$, and hence
$$
\bar w(f)-\bar w(g)
=K(q^\top f-q^\top g)+(w^\top f-w^\top g)
\ge K-\sum_{e\in E}|w(e)|>0,
$$
which contradicts the optimality of $f$.
Therefore $q^\top f=\mu$, and hence $f\in\mathcal F^x$.
If there were $g\in\mathcal F^x$ with $w^\top g<w^\top f$, then $\bar w(g)<\bar w(f)$ because $q^\top g=q^\top f=\mu$, again contradicting the optimality of $f$.
Thus $f$ minimizes $w(f)$ over all $f\in\mathcal F^x$.
\end{proof}

\begin{proof}[Proof of Lemma \ref{lem:inter-point}]
The algorithm is presented in Algorithm \ref{alg:ALM}.
We first show that Algorithm \ref{alg:ALM} constructs a relative interior point of the least core, before bounding its complexity.

\setcounter{claim}{0}
\begin{claim}
If the algorithm returns $x$, then $x$ is a relative interior point of $P_1(\epsilon_1)$.
\end{claim}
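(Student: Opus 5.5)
We verify the implicit-equality characterization directly. By Lemma~\ref{lem:ls-lp-replace}, $P_1(\epsilon_1)$ is described by the equation $x(N)=v(N)$ together with the inequalities $x(e)\ge 0$ for $e\in N$ and $x(f)\ge c(f)+\epsilon_1$ for $f\in\mathcal F$. So we must show two things for the returned $x$. First, $x(e)>0$ whenever some $y\in P_1(\epsilon_1)$ has $y(e)>0$. Second, $x(f)>c(f)+\epsilon_1$ whenever some $y\in P_1(\epsilon_1)$ has $y(f)>c(f)+\epsilon_1$. Throughout, $x$ stays in $P_1(\epsilon_1)$, since every update replaces $x$ by the midpoint of two points of this convex set.

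\emph{Nonnegativity constraints.} Take the first loop at arc $e$. If some point of $P_1(\epsilon_1)$ is positive on $e$, then the maximizer $y$ has $y(e)>0$. Since the current $x(e)\ge 0$ by Lemma~\ref{lem:ls-core-non-neg}, after the update $x(e)\ge y(e)/2>0$. Every later update averages $x$ with some point of $P_1(\epsilon_1)\subseteq\mathbb R_+^N$, so $x(e)$ at most halves and stays positive. Hence $x(e)>0$ at termination for every such $e$.

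\emph{Flow constraints.} It suffices to show that every flow $f'\in\mathcal F^x$ at termination is universal. Then any non-universal flow is not $x$-tight and satisfies its inequality strictly at $x$. We do this in two steps.
\begin{enumerate}[(a)]
\item Every flow in $F$ is universal. A flow $f$ is added to $F$ only when $f\in\mathcal F^x$ for the current $x$ and $\max\{y(f)\mid y\in P_1(\epsilon_1)\}\le x(f)=c(f)+\epsilon_1$. Together with $y(f)\ge c(f)+\epsilon_1$, this gives $y(f)=c(f)+\epsilon_1$ for all $y\in P_1(\epsilon_1)$. Being universal is a property independent of $x$, so $f$ remains $x$-tight for all later iterates.
\item Every $x$-tight flow lies in $\text{span}(F)$. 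At the return step, $\min$ and $\max$ of $a_i(f')$ over $\mathcal F^x$ are both $0$ for every row $a_i$ (computed via Lemma~\ref{lem:post-opt-flow}). Hence $a_i(f')=0$ for all $i$ and all $f'\in\mathcal F^x$, i.e.\ $f'=\sum_j\alpha_j f_j$ with $f_j\in F$.
\end{enumerate}
Now fix $f'\in\mathcal F^x$ and any $y\in P_1(\epsilon_1)$. Since $z\mapsto z(f)$ is linear in $f$, the difference $y-x$ evaluated on $f'$ satisfies
\[
y(f')-x(f')=\sum_j\alpha_j\bigl(y(f_j)-x(f_j)\bigr)=0,
\]
because each $f_j$ is universal by (a). We use differences here on purpose: arguing with $c(f')$ directly would require $\sum_j\alpha_j=1$, which need not hold. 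Therefore $y(f')=x(f')=c(f')+\epsilon_1$ for every $y\in P_1(\epsilon_1)$, so $f'$ is universal.

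The main point to get right is the bookkeeping for $F$. One must confirm that the flows recorded in $F$ are universal, not merely tight for the $x$ of the moment. One must also confirm that the spanning argument is applied to $\mathcal F^x$ for the final $x$. Both are handled above: universality of each $f\in F$ is certified by the maximization step, and membership of all final $x$-tight flows in $\text{span}(F)$ is exactly the termination condition.
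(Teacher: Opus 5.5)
Your proof is correct and follows the paper's argument. Positivity comes from the loop over $e\in N$ and survives every later averaging step. Every $x$-tight flow at termination lies in $\operatorname{span}(F)$, so linearity together with universality of the flows in $F$ gives $y(f)=x(f)$ for all $y\in P_1(\epsilon_1)$; your difference computation is just the paper's chain $x(f)=\sum_i\alpha_ix(f_i)=\sum_i\alpha_iy(f_i)=y(f)$.

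Your step (a) justifies what the paper only asserts, with one loose end. The initial $f$ was chosen tight for the $x$ computed \emph{before} the loop over $e\in N$, not for the current $x$, so your premise ``$f\in\mathcal F^x$ for the current $x$'' is not guaranteed on the first pass. It still holds: if $x(f)=\max_{y\in P_1(\epsilon_1)}y(f)$, then the initial allocation, a positively weighted term in the convex combination producing $x$ at which $f$ is tight, must also attain this maximum, so $f$ is universal.
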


\begin{innerproof}
Let $x$ be the vector output by the algorithm.
Step~\ref{e-step:(x+y)/2} ensures that every component of $x$ is as positive as possible.
Moreover, if $x(e)>0$, then $\frac{1}{2}(x+y)(e)>0$ for any $e\in N$.
Therefore, Step~\ref{f-step:(x+y)/2} preserves the positive components of $x$.
Let $f\in \mathcal{F}^x$.
We prove that $f$ is a universal flow.
Let $y\in P_1(\epsilon_1)$.
According to Step \ref{step:aff-condi}, we have $a_i(f)=0$ for $i=1,\ldots,k$; therefore, $f\in \text{span}(F)$.
Let $F=\{f_1,\ldots,f_l\}$.
Then there exists $\alpha_1,\ldots,\alpha_l$ such that $f=\sum_{i=1}^l\alpha_if_i$.
Note that any flow in $F$ is universal.
Thus,
$$
\epsilon_1+c(f)=x(f)=\sum_{i=1}^l\alpha_ix(f_i)=\sum_{i=1}^l\alpha_iy(f_i)=y(f).
$$
This implies that $f\in\mathcal{F}^y$.
Since $y$ is an arbitrary allocation in the least core, it follows that $f$ is a universal flow.
\end{innerproof}

\begin{claim}
This algorithm can be implemented to run in polynomial time.
\end{claim}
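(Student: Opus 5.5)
We bound the work per step and the number of iterations of each loop separately. Every primitive in Algorithm~\ref{alg:ALM} is polynomial. The initial $x\in P_1(\epsilon_1)$ and $y=\operatorname{argmax}\{y(e)\mid y\in P_1(\epsilon_1)\}$ or $\operatorname{argmax}\{y(f)\mid y\in P_1(\epsilon_1)\}$ are linear optimizations over $P_1(\epsilon_1)$. These are solvable in polynomial time by Lemma~\ref{lem:ls-compute}, since the flow inequalities of Lemma~\ref{lem:ls-lp-replace} have a minimum cost flow separation oracle and $\epsilon_1$ has polynomial encoding length. A flow in $\mathcal F^x$ is a minimum cost flow with costs $x-c$. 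The post-optimality problems in Step~\ref{step:post-opt} are solved by Lemma~\ref{lem:post-opt-flow} with $w=\pm a_i$. A kernel description $\{d\mid Ad=0\}$ of $\operatorname{span}(F)$ comes from Gaussian elimination. The first for-loop runs $|N|$ times, and each inner for-loop runs over $k\le|E|$ rows.

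\textbf{Bounding the main loop.} I would use a potential argument with two monotone quantities.

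First, each time Step~\ref{step:y>x} succeeds, $f$ is $x$-tight, and after the update $x\leftarrow\frac12(x+y)$ we have $x(f)-c(f)>\epsilon_1$. Because every later update averages with a point of $P_1(\epsilon_1)$, where all flow inequalities hold, $f$ remains strictly non-tight forever after. So the set of non-tight flows only grows.

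Second, when the test fails, $f$ maximizes $y(f)$ over $P_1(\epsilon_1)$ while also being tight, hence minimal. Thus $x(f)$ is constant on $P_1(\epsilon_1)$, i.e.\ $f$ is universal. Moreover, $f$ was chosen as $f^i_1$ or $f^i_2$ with $a_i(f)\neq0$, so $f\notin\operatorname{span}(F)$ and each addition to $F$ raises $\dim\operatorname{span}(F)$. Hence Step~\ref{step:span-F} executes at most $|E|+1$ times.

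\textbf{Main obstacle.} Between two additions to $F$ there may be many successful averaging steps, and the number of flows is exponential, so we must bound this differently. Each successful step strictly decreases the tight face: $\mathcal F^x$ shrinks by losing $f$. I would instead track the partition $(M_x,A_x,R_x)$ from Lemma~\ref{lem:opt-flow}. Since averaging with $y\in P_1(\epsilon_1)$ gives $\mathcal F^{x_{\rm new}}=\mathcal F^x\cap\mathcal F^y\subsetneq\mathcal F^x$, Lemma~\ref{lem:opt-flow} shows that $\mathcal F^x$ is determined by $(M_x,R_x)$. A strict shrink therefore forces $M_x\cup R_x$ to strictly grow, and it can grow at most $|E|$ times.

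So between additions to $F$ there are at most $|E|$ successful steps, and the total is $O(|E|^2)$ iterations. I would also check encoding lengths. Each iteration halves and averages with a vertex of $P_1(\epsilon_1)$ of polynomial size, and there are polynomially many iterations, so the size of $x$ stays polynomial. This gives the claim.
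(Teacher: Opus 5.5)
Your proof is correct and has the same overall structure as the paper's: polynomial cost per step, a dimension count on $\operatorname{span}(F)$ for failed tests, and a monotone potential for successful averaging steps. The only real difference is that potential.

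\begin{itemize}
\item The paper shows that $\operatorname{conv}(\mathcal F^{x'})$ is a proper face of $\operatorname{conv}(\mathcal F^{x})$, cut out by the valid inequality $\sum_{e\in N}y(e)z(e)-\sum_{e\in E}c(e)z(e)\ge\epsilon_1$ and missing $f$. So its dimension drops at each successful step.
\item You use $\mathcal F^{x'}=\mathcal F^x\cap\mathcal F^y$ together with Lemma~\ref{lem:opt-flow}. That lemma says $\mathcal F^x$ is exactly the set of flows with prescribed values on $M_x\cup R_x$, so a strict shrink must enlarge the set of fixed arcs.
\end{itemize}

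Your version is more combinatorial but relies on Lemma~\ref{lem:opt-flow}. Strict growth of the union $M_x\cup R_x$, rather than of the pair $(M_x,R_x)$, also needs $M_{x'}\cap R_{x'}=\emptyset$, which holds because $\mathcal F^{x'}\neq\emptyset$. The paper's face argument needs neither and is self-contained.

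Your potential is monotone over the whole run, not just between additions to $F$. So it bounds the successful steps by $|E|$ in total, and you could state $O(|E|)$ iterations instead of $O(|E|^2)$. Your encoding-length remark also covers a point the paper leaves implicit.

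One step is inaccurate as stated, and the paper glosses over it too: not every addition to $F$ raises $\dim\operatorname{span}(F)$. Only flows chosen as $f^i_1$ or $f^i_2$ are guaranteed to lie outside $\operatorname{span}(F)$. After a successful averaging step, $f$ is replaced by an arbitrary $f'\in\mathcal F^x$, which may already lie in $\operatorname{span}(F)$ (for example, a universal flow added earlier). If the next test then fails, $f'$ is added with no dimension gain. Each such addition follows a successful step or involves the initial $f$. So Step~\ref{step:span-F} runs at most $|E|+1$ times plus the number of successful steps, and the polynomial bound still holds.

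Similarly, the initial $f$ need not stay $x$-tight after the first for-loop. This costs at most one non-strict averaging step, so your claim that $f$ is $x$-tight whenever the test succeeds needs that one exception.
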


\begin{innerproof}
Let $x'=(x+y)/2$.
Let $Q_x=\operatorname{conv}(\mathcal F^x)$.
If $y(f)>x(f)$ holds in step \ref{step:y>x}, then $\mathcal{F}^{x'}\subsetneq \mathcal{F}^x$.
Indeed, for any $h\in\mathcal F^{x'}$, we have $c(h)+\epsilon_1=x'(h)=(x(h)+y(h))/2$.
Since $x,y\in P_1(\epsilon_1)$, both $x(h)$ and $y(h)$ are at least $c(h)+\epsilon_1$.
Thus $x(h)=y(h)=c(h)+\epsilon_1$, and hence $h\in\mathcal F^x$.
On the other hand, $f\in\mathcal F^x$ and
$$
x'(f)=\frac{x(f)+y(f)}{2}>x(f)=c(f)+\epsilon_1,
$$
so $f\notin\mathcal F^{x'}$.
For $h\in\mathcal F^x$, the equality $y(h)=c(h)+\epsilon_1$ holds if and only if $h\in\mathcal F^{x'}$.
Therefore, $\operatorname{conv}(\mathcal F^{x'})$ is the face of $Q_x$ induced by the following valid inequality
$$
\sum_{e\in N}y(e)z(e)-\sum_{e\in E}c(e)z(e)\ge \epsilon_1.
$$
This face is proper because it does not contain $f$.
Therefore, $\dim \operatorname{conv}(\mathcal F^{x'})<\dim Q_x$.
Since this dimension is at most $|E|$ and strictly decreases whenever step \ref{f-step:(x+y)/2} is executed, step \ref{f-step:(x+y)/2} runs at most $|E|$ times.

On the other hand, we assume that $y(f)\leq x(f)$.
Since $a_i(f')\neq 0$ for some $i$ and $f'\in\mathcal{F}^x$, we have $f'\notin \text{span}(F)$.
Thus the dimension of $\text{span}(F)$ strictly increases when we add $f'$ to $F$.
Since the dimension of $\text{span}(F)$ is at most $|E|$, step \ref{step:span-F} runs at most $|E|$ times.

Finally, we show that each step of the algorithm can be computed in polynomial time.
According to Lemma \ref{lem:ls-lp-replace}, we can separate the polytope $P_1(\epsilon_1)$ in polynomial time.
Therefore, the optimization problem in step \ref{step:max-f} can be solved in polynomial time.
According to Lemma \ref{lem:post-opt-flow}, we can solve the post-optimality problem in step \ref{step:post-opt} in polynomial time.
Since the number of linearly independent vectors orthogonal to $\text{span}(F)$ does not exceed $|E|$, step \ref{step:post-opt} runs at most $|E|$ times.
\end{innerproof}
\end{proof}

\end{document}